\newcommand{\later}[1]{{}}
\newcommand{\old}[1]{{}}
\long\def\ignore#1{}
\def\cut{\ignore}
\definecolor{dkgreen}{rgb}{0,0.4,0}
\definecolor{gray}{rgb}{0.5,0.5,0.5}
\definecolor{mauve}{rgb}{0.58,0,0.82}
\definecolor{orange}{rgb}{0.8,0.1,0}
\tiny\color{gray},
\newtheorem{theorem}{Theorem}
\newtheorem{lemma}{Lemma}
\newtheorem{question}{Problem}
\newcommand{\eps}{\varepsilon}
\newcommand{\etal}{{et~al.}}
\newcommand{\ie}{{i.e.}}
\newcommand{\eg}{{e.g.}}
\def\deg{\texttt{deg}}
\def\up{\texttt{upper}}
\def\lo{\texttt{lower}}
\begin{document}

\title{Lower bounds on the dilation of plane spanners\footnote{%
A preliminary version in: Proceedings of the 2nd International Conference on Algorithms
and Discrete Applied Mathematics, Thiruvananthapuram, India, 
Feb. 2016, vol. $9602$ of LNCS.}}

\author{%
Adrian Dumitrescu\\
\small Department of Computer Science\\[-0.8ex]
\small University of Wisconsin--Milwaukee\\[-0.8ex]
\small Milwaukee, WI, USA\\
\small\tt \Email{ dumitres@uwm.edu}\\
\and
Anirban Ghosh\\
\small Department of Computer Science\\[-0.8ex]
\small University of Wisconsin--Milwaukee\\[-0.8ex]
\small Milwaukee, WI, USA\\
\small\tt \Email{ anirban@uwm.edu}\\
}

\date{\today}
\maketitle

\begin{abstract}
  (I) We exhibit a set of 23 points in the plane that has dilation
  at least $1.4308$, improving the previous best lower bound of $1.4161$
  for the worst-case dilation of plane spanners. 

  \smallskip
  (II) For every $n\geq13$, there exists an $n$-element point set $S$ such
  that the degree~$3$ dilation of $S$ equals $1+\sqrt{3}=2.7321\ldots$
  in the domain of plane geometric
  spanners. In the same domain, we show that for every $n\geq6$,
  there exists a an $n$-element point set $S$ such that the degree~$4$
  dilation of $S$ equals $1 + \sqrt{(5-\sqrt{5})/2}=2.1755\ldots$
  The previous best lower bound of $1.4161$ holds for any degree. 

\smallskip
  (III) For every $n\geq6 $, there exists an $n$-element point set $S$ such
that the stretch factor of the greedy triangulation of $S$ is at least $2.0268$. 

\medskip\noindent {\bf Keywords:} geometric graph, 
plane spanner, vertex dilation, stretch factor.
\end{abstract}

\section{Introduction} \label{sec:intro}

Given a set of points $P$ in the Euclidean plane,
a \emph{geometric graph} on $P$ is a weighted graph $G = (V,E)$ where
$V = P$ and an edge $uv \in E$ is the line segment with endpoints
$u,v \in V$ weighted by the Euclidean distance $|uv|$ between
them. For $t\geq 1$, a geometric graph $G$ is a \emph{t-spanner}, if
for every pair of vertices $u,v$ in $V$, the length of the shortest
path $\pi_G(u,v)$ between them in $G$ is at most $t$ times $|uv|$,
\ie, $\forall u,v \in V,\,|\pi_G(u,v)| \leq t |uv|$. A complete
geometric graph on a set of points is a 1-spanner. Where there is no
necessity to specify $t$, we use the term \emph{geometric spanner}.
A geometric spanner $G$ is \emph{plane} if no two edges
in $G$ cross. In this paper we only consider plane geometric spanners.
A geometric spanner of degree at most $k$ is referred to as a
\emph{degree $k$ geometric spanner}. 

Given a geometric spanner $G=(V,E)$, the \emph{vertex dilation}
or \emph{stretch factor} of $u,v\in V$, denoted $\delta_G(u,v)$,
is defined as $\delta_G(u,v) =  |\pi_G(u,v)| / |uv|$.
When $G$ is clear from the context, we simply write $\delta(u,v)$. 
The \emph{vertex dilation} or \emph{stretch factor} of $G$,
denoted $\delta(G)$,  is defined as  
$\delta(G) = \sup_{u,v \in V} \delta_G(u,v). $ The terms
\emph{graph theoretic dilation} and \emph{spanning ratio} are also used
in the literature. Refer to~\cite{EGK06,KG92,NS07} for such definitions. 

Given a point set $P$, let the \emph{dilation} of $P$, denoted by $\delta_0(P)$,
be the minimum stretch factor of a plane geometric graph (equivalently, triangulation)
on vertex set $P$; see~\cite{Mu04}. 
Similarly, let the \emph{degree $k$ dilation} of $P$, denoted by $\delta_0(P,k)$,
be the minimum stretch factor of a plane geometric graph of degree at most $k$
on vertex set $P$. Clearly, $\delta_0(P,k) \geq \delta_0(P)$
holds for any $k$. Furthermore, $\delta_0(P,j) \geq \delta_0(P,k)$ holds for any $j<k$.
(Note that the term \emph{dilation} has been also used with different meanings in
the literature, see for instance~\cite{PBMS13,KKP15}.)

In the last few decades, great progress has been made in the field
of geometric spanners; for an overview refer to~\cite{GK07,NS07}.
Common goals include constructions of low stretch factor geometric spanners
that have few edges, bounded degree and so on. A survey of open
problems in this area along with existing results can be found in~\cite{PBMS13}.
Geometric spanners find their applications in the areas of robotics,
computer networks, distributed systems and many others.
Refer to~\cite{AKK+08,ADD+93,ABC+08,CDNS95,EKLL04,LL92} for various algorithmic
results.

The existence of plane $t$-spanners for some constant $t>1$
(with no restriction on degree)
was first investigated by Chew~\cite{Ch89} in the 80s.
He showed that it is always possible to construct a plane 2-spanner with $O(n)$ edges
on a set of $n$ points; he also observed that every plane geometric graph embedded
on the $4$ points placed at the vertices of a square has stretch factor at least $\sqrt{2}$.
This was the best lower bound on the worst-case dilation of plane spanners for
almost $20$ years until it was shown by Mulzer~\cite{Mu04} using a computer program
that every triangulation of a regular $21$-gon has stretch factor at least
$( 2\sin \frac{\pi}{21} + \sin \frac{5\pi}{21} +
\sin \frac{3\pi}{21}) / \sin \frac{10\pi}{21} = 1.4161\ldots$
Henceforth, it was posed as an open problem by
Bose and Smid~\cite[Open Problem 1]{PBMS13}
(as well as by Kanj in his survey~\cite[Open Problem 5]{Ka13}): 
``\emph{What is the best lower bound on the spanning ratio of plane geometric graphs?
Specifically, is there a $t > \sqrt{2.005367532} \approx 1.41611\ldots$
and a point set $P$, such that every triangulation of $P$ has spanning ratio at least $t$?}''.
We give a positive answer to the second question by showing that a set $S$ of 23
points placed at the vertices of a regular 23-gon, has dilation $\delta_0(S) \geq
(2\sin\frac{2\pi}{23}+\sin\frac{8\pi}{23})/\sin\frac{11\pi}{23}=1.4308\ldots$

The problem can be traced back to a survey written by
Eppstein~\cite[Open Problem 9]{Epp00}:
``\emph{What is the worst case dilation of the minimum dilation triangulation?}".
The point set $S$ also provides a partial answer for this question.
From the other direction, the current best upper bound of $1.998$
was proved by Xia~\cite{Xia13} using Delaunay triangulations.
Note that this bound is only slightly better than the bound of $2$
obtained by Chew~\cite{Ch89} in the 1980s.
For previous results on the upper bound refer to~\cite{CKX11,DJ89,DFS90,KG92}.

The design of low degree plane spanners is of great interest to geometers.
Bose~\etal~\cite{BGS05} were the first to show that there always exists
a plane $t$-spanner of degree at most 27 on any set of points in the Euclidean plane
where $t \approx 10.02$. The result was subsequently improved
in~\cite{BGHP10,BKPX15,BCC12,BSX09,KP08,LW04} in terms of degree.
Recently, Kanj~\etal~\cite{KPT16} showed that $t=20$ can be achieved with degree $4$.
However, the question whether the degree can be reduced to $3$ remains
open at the time of this writing. If one does not insist on having a plane spanner,
Das~\etal~\cite{DH96} showed that degree~$3$ is achievable.
While numerous papers have focused on upper bounds on the dilation of bounded degree
plane spanners, not much is known about lower bounds. In this paper,
we explore this direction and provide new lower bounds for unrestricted degrees
and when degrees~$3$ and~$4$ are imposed.

A  \emph{greedy triangulation} of a finite point set  $P$ is constructed in
the following way: starting with an empty set of edges $E$, repeatedly
add edges to $E$ in non-decreasing order of length as long as edges in $E$ are noncrossing.
Bose~\etal~\cite{BLS07} have showed that the greedy triangulation is a $t$-spanner, where 
$t = 8( \pi-\alpha )^2/(\alpha^2 \sin^2( \alpha/ 4 )) \approx 11739.1$ and $\alpha = \pi/6$.  
Here we obtain a worst-case lower bound of $2.0268$; in light of computational experiments
we carried out, we believe that the aforementioned upper bound is very far from the truth.

\paragraph{Related work.} If $S_n$ is the set of $n$ vertices of a regular $n$-gon,
Mulzer~\cite{Mu04} showed that \linebreak
$ 1.3836 \ldots = \sqrt{2-\sqrt{3}} + \sqrt{3}/2 \leq \delta_0(S_n)
\leq 0.471\pi /\sin 0.471\pi = 1.4858\ldots$, for every $n\geq74$;
the upper bound holds for every $n\geq 3$.  
Amarnadh and Mitra~\cite{AM06} have shown that in the case of a
cyclic polygon (a polygon whose vertices are co-circular), the stretch factor of
any \emph{fan} triangulation (\ie, with a vertex of degree $n-1$) is $\leq 1.4846$.

As mentioned earlier, low degree plane spanners for general point sets
have been studied in~\cite{BGHP10,BCC12,BGS05,BSX09,KP08,LW04}. 
The construction of low degree plane spanners for the infinite
square and hexagonal lattices has been recently investigated in~\cite{DG16}. 

Bose~\etal~\cite{BDL+11} presented a finite convex point set
for which there is a Delaunay triangulation whose stretch factor is at
least $1.581 > \pi/2$, thereby disproving a widely believed $\pi/2$
upper bound conjectured by Chew~\cite{Ch89}.
They also showed that this lower bound can be slightly
raised to $1.5846$ if the point set need not be convex.
This lower bound for non-convex point sets has been further improved to
$1.5932$ by Xia and Zhang~\cite{XZ11}.

Klein~\etal~\cite{KKP15} proved the following interesting structural result.
Let $S$ be a finite set of points in the plane. 
Then either $S$ is a subset of one of the well-known sets of points whose triangulation is
unique and has dilation $1$, or there exists a number $\Delta(S)> 1$ such that each finite
plane graph containing $S$ among its vertices has dilation at least $\Delta(S)$. 

Cheong~\etal~\cite{CHL08} showed that for every $n\geq 5$, there are sets of $n$
points in the plane that do not have a minimum-dilation spanning tree
without edge crossings and that $5$ is minimal with this property.
They also showed that given a set $S$ of $n$ points with
integer coordinates in the plane and a rational dilation $t>1$, it is
NP-hard to decide whether a spanning tree of $S$ with dilation at most
$t$ exists, regardless if edge crossings are allowed or not. 

Knauer and Mulzer~\cite{KM05} showed that for each edge $e$ of a
\emph{minimum dilation triangulation} of a point set,
at least one of the two half-disks of diameter about $0.2|e|$
on each side of $e$ and centered at the midpoint of $e$
must be empty of points\footnote{Their result inaccurately
states that the entire disk of that diameter is an exclusion region.}.

When the stretch factor (or dilation) is measured over all pairs of
points on edges or vertices of a plane graph $G$ (rather than only over pairs of vertices) 
one arrives at the concept of  \emph{geometric dilation} of $G$; see~\cite{DEG+06,EGK06}.

\paragraph{Our results.}
(I) Let $S$ be a set of 23 points placed at the vertices of a regular 23-gon. Then,
  $\delta_0(S) =(2\sin\frac{2\pi}{23}+\sin\frac{8\pi}{23})/\sin\frac{11\pi}{23}=1.4308\ldots$
(Theorem~\ref{thm-s23}, Section~\ref{sec:lowerbound}).
This improves the previous bound of 
     $(2\sin \frac{\pi}{21} + \sin \frac{5\pi}{21} +\sin \frac{3\pi}{21} )
     / \sin \frac{10\pi}{21} = 1.4161\ldots$,
     due to Mulzer~\cite{Mu04}, on the worst case dilation of plane spanners.     

\smallskip
(II) (a) For every $n\geq 13$, there exists a set
$S$ of $n$ points such that $\delta_0(S,3) \geq 1 +\sqrt{3}
= 2.7321\ldots$ (Theorem~\ref{thm:deg3}, Section~\ref{sec:deg3and4}).
(b) For every $n\geq 6$, there exists a set $S$ of $n$ points such that
 $\delta_0(S,4) \geq 1 + \sqrt{(5-\sqrt{5})/2} = 2.1755\ldots$ 
 (Theorem~\ref{thm:deg4}, Section~\ref{sec:deg3and4}). 
The previous best lower bound of 
$(2\sin \frac{\pi}{21} + \sin \frac{5\pi}{21} +\sin \frac{3\pi}{21} )
/ \sin \frac{10\pi}{21}=1.4161\ldots$, due to Mulzer~\cite{Mu04}
 holds for any degree. Here we sharpen it for degrees~$3$ and~$4$. 

 \smallskip
 (III)  For every $n\geq6 $, there exists a set $S$ of $n$ points
  such that the stretch factor of the greedy triangulation of $S$ is at least $2.0268$.

\paragraph{Notations and assumptions.}
Let $P$ be a planar point set and $G=(V,E)$ be a plane geometric graph
on vertex set $P$.
For $p,q \in P$, $pq$ denotes the connecting segment and
$|pq|$ denotes its Euclidean length.
The degree of a vertex (point) $p \in P$ is denoted by $\deg(p)$.
For a specific point set $P=\{p_1,\ldots,p_n\}$,
we denote a path in $G$ consisting of vertices in the order
$p_{i},p_{j},p_{k},\ldots$ using $\rho(i,j,k,\ldots)$ and by
$|\rho(i,j,k,\ldots)|$ its total Euclidean length.
The graphs we construct have the property that no edge
contains a point of $P$ in its interior.

\section{A new lower bound on the dilation of plane spanners}
\label{sec:lowerbound}

In this section, we show that the set $S=\{s_0,\ldots,s_{22}\}$ of $23$
points  placed at the vertices of a regular $23$-gon has
dilation $\delta_0(S) \geq
(2\sin\frac{2\pi}{23}+\sin\frac{8\pi}{23})/\sin\frac{11\pi}{23}=1.4308\ldots$
(see Fig.~\ref{fig:s23}). Assume that the points lie on a circle of unit radius. 
We first present a theoretical proof showing that
$\delta_0(S) \geq  (\sin\frac{2\pi}{23}+\sin\frac{4\pi}{23}+\sin\frac{5\pi}{23})
/\sin\frac{11\pi}{23}=1.4237\ldots$;
we then raise the bound to
$\delta_0(S) \geq (2\sin\frac{2\pi}{23}+\sin\frac{8\pi}{23})/\sin\frac{11\pi}{23}=1.4308\ldots$
using a computer program.
The result obtained by the program is tight as there exists a triangulation of $S$
(see Fig.~\ref{fig:s23}\,(right)) with stretch factor exactly
$(2\sin\frac{2\pi}{23}+\sin\frac{8\pi}{23})/\sin\frac{11\pi}{23}=1.4308\ldots$

Define the \emph{convex hull length} of a chord $s_i s_j \in S$
as  $\mu(i,j) = \min(|i-j|,23-|i-j|).$ Observe that $1 \leq \mu(i,j) \leq 11$.
Since triangulations are maximal planar graphs, 
we only consider triangulations of $S$ while computing $\delta_0(S)$;
in particular, every edge of the convex hull of $S$ is present.
Note that there are $C_{21}=24,466,267,020$ triangulations of $S$.
Here $C_n = \frac{1}{n+1}\binom{2n}{n}$ is the $n^{th}$ \emph{Catalan number}
and there are $C_n$ ways to triangulate a convex polygon with $n+2$ vertices. 

If $s_i,s_j \in S$, then $|s_is_j| = 2\sin \frac{\mu(i,j)\pi}{23}$.
Consider a shortest path connecting $s_i,s_j \in S$ consisting of $k$ edges 
with convex hull lengths $n_1,\ldots,n_k$; its length is
$|\rho(i,\ldots,j)|= 2 \sum_{h=1}^{k} \sin \frac{n_h \pi}{23}$.
Let $\lambda=\mu(i,j)$ and
\begin{equation} \label{eq:g}
g(\lambda,n_1,\ldots,n_k) = \frac{|\rho(i,\ldots,j)|}{ |s_is_j|} =
\frac{ \sum_{h=1}^{k} \sin \frac{n_h \pi}{23}}{\sin\frac{\lambda \pi}{23}}.
\end{equation}
We will use $\lambda=11$ in all subsequent proofs of this section and therefore we set 
\begin{equation} \label{eq:f}
  f(n_1,\ldots,n_k) :=g(11,n_1,\ldots,n_k).
\end{equation}

Various values of $f$, as given by~\eqref{eq:g} and~\eqref{eq:f},
will be repeatedly used in lower-bounding the stretch factor of point pairs
in specific configurations, \ie, when some edges are assumed to be present.
Observe that $f$ is a symmetric function that can be easily computed (tabulated)
at each tuple $n_1,\ldots,n_k$; see Table~\ref{table:f}. 
\begin{figure}[hbtp]
\centering
\includegraphics[scale=0.5]{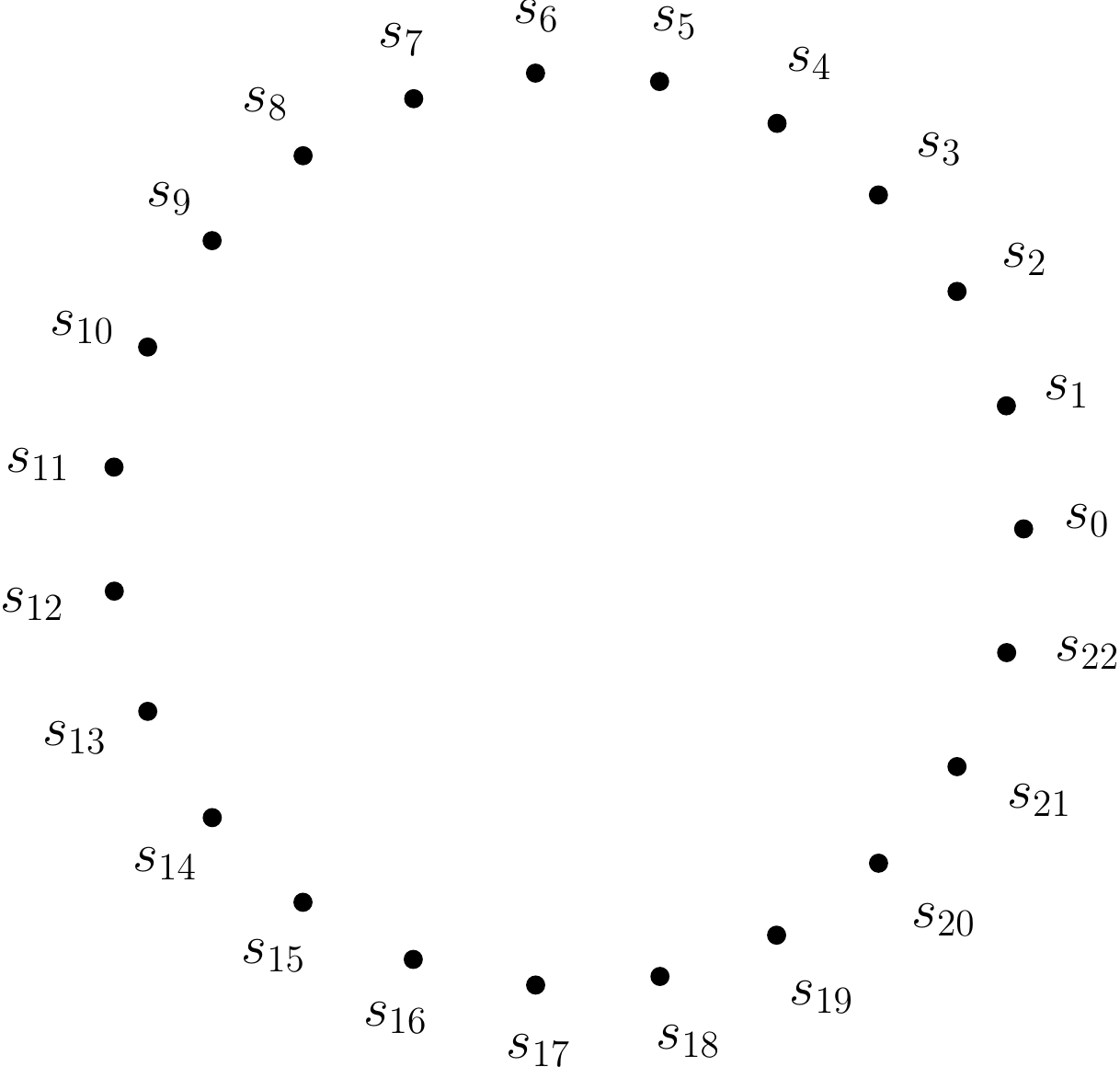}
\hspace{15mm}
\includegraphics[scale=0.5]{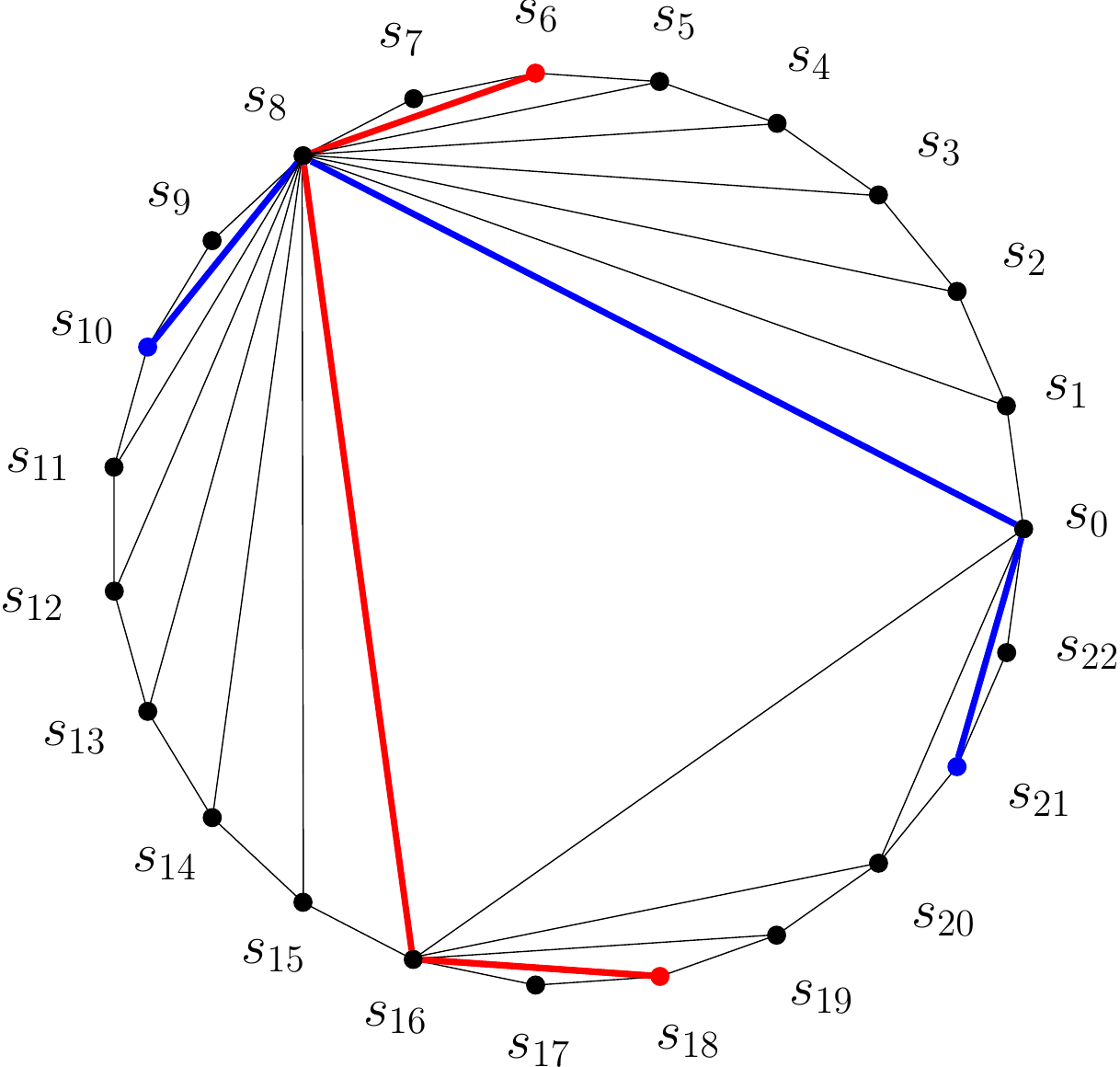}
\caption{Left: The set $S$ of 23 points placed at the
  vertices of a regular 23-gon. Right: A triangulation of $S$ with stretch factor
  $(2\sin\frac{2\pi}{23}+\sin\frac{8\pi}{23})/\sin\frac{11\pi}{23}=1.4308\ldots$,
  which is achieved by the detours for the pairs $s_{10},s_{21}$ and $s_{6},s_{18}$.
  The shortest paths connecting the pairs are shown in blue and red, respectively.} 
\label{fig:s23}
\end{figure}

\begin{table}[ht]
\centering
\begin{tabular}{|l l | c || c l | c|}
	\hline
	&$f(4,7)$  & $1.3396\ldots$ &\ding{118}&$f(2,2,8)$ &  $1.4308\ldots$\\
	&$f(5,6)$  & $1.3651\ldots$ &\ding{118}&$f(3,3,5)$ &  $1.4312\ldots$\\
	\ding{118}&$f(5,7)$ & $1.4514\ldots$ &\ding{118}&$f(3,4,4)$ &  $1.4409\ldots$\\
	\ding{118}&$f(6,6)$ & $1.4650\ldots$ &\ding{118}&$f(1,4,7)$ &  $1.4761\ldots$\\
	\cline{1-3}
	&$f(2,3,6)$ &  $1.4023\ldots$ &\ding{118}&$f(2,3,7)$ &  $1.4886\ldots$ \\
	&$f(1,5,5)$ &  $1.4061\ldots$ &\ding{118}&$f(3,3,6)$ &  $1.5312\ldots$ \\
	\cline{4-6}
	\ding{118}&$f(2,4,5)$ &  $1.4237\ldots$ &\ding{118}&$f(1,1,4,5)$ &  $1.4263\ldots$ \\
	\ding{118}&$f(1,3,8)$ &  $1.4257\ldots$ &\ding{118}&$f(1,2,3,5)$ &  $1.4388\ldots$\\
    \hline
\end{tabular}
\caption{Relevant values of $f(n_1,\ldots,n_k)$ as required by the proofs in this section. 
Values used explicitly in the proofs are marked using \ding{118}.}
\label{table:f}
\end{table}

Given a chord $s_0 s_i$, let $\lo(s_0s_i) = \{s_{i+1},\ldots,s_{22}\}$ and
$\up(s_0s_i)  = \{s_1,\ldots s_{i-1}\}$. The range of possible convex hull
lengths of the longest chord in a triangulation of $S$ is given by the following.

\begin{lemma}\label{lem:ell}
If $\ell$ is the convex hull length of the longest chord in a triangulation of $S$, 
then $\ell \in \{8,9,10,11\}$. 
\end{lemma}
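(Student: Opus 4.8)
The plan is to dispatch the two bounds separately. The upper bound $\ell \le 11$ is immediate: by definition every chord $s_is_j$ has convex hull length $\mu(i,j) = \min(|i-j|,\,23-|i-j|) \le 11$. The entire content is therefore the lower bound $\ell \ge 8$, i.e.\ that every triangulation of $S$ contains a chord of convex hull length at least $8$.

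First I would fix the arc bookkeeping for a single triangle. If a triangle of the triangulation has vertices $s_a, s_b, s_c$ in cyclic order, the three arcs of the $23$-gon they determine contain $x$, $y$, $z$ hull edges with $x+y+z=23$, and the side subtending the arc of size $x$ has convex hull length exactly $\min(x,\,23-x)$ (similarly for $y$ and $z$). The guiding observation is pigeonhole: three nonnegative integers summing to $23$ cannot all be at most $7$, so $\max(x,y,z) \ge 8$. Consequently, any triangle whose three arcs are all at most $11$ automatically has a side of convex hull length $\max(x,y,z) \ge 8$, because $\min(x,\,23-x)=x$ once $x \le 11$.

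The crux is thus to exhibit one triangle all of whose arcs are at most $11$, and I would take the triangle $\Delta^\ast$ containing the center $O$ of the circumscribing unit circle. Since $23$ is odd, no two vertices are antipodal and hence no chord of $S$ passes through $O$; therefore $O$ lies in the interior of exactly one triangle $\Delta^\ast$. To see that each arc of $\Delta^\ast$ has at most $11$ edges, I would argue by contradiction: if the arc cut off by a side $s_as_b$ and not containing $s_c$ had at least $12$ edges, it would be the major arc of the chord $s_as_b$, so $O$ would lie on the side of $s_as_b$ away from $s_c$, hence outside $\Delta^\ast$ --- contradicting $O \in \Delta^\ast$. This uses only the elementary fact that the center of a circle lies on the same side of a chord as its major arc. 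Applying it to all three sides gives $x,y,z \le 11$, and the pigeonhole step then yields a side of $\Delta^\ast$ of convex hull length $\ge 8$, so $\ell \ge 8$.

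I expect the main obstacle to be making the geometric claim precise, namely that $\Delta^\ast$ is well defined and ``balanced'' (the center lies in a unique triangle, and each of its arcs is a minor or semicircular arc); the rest is the trivial upper bound and a one-line count. A purely combinatorial alternative avoids the center altogether: starting from an arbitrary triangle, whenever some arc exceeds $11$ one crosses the side subtending that arc into the neighboring triangle, and a short monovariant argument shows the walk terminates at a triangle with all arcs at most $11$. I would present the center argument as the primary proof, since it is shortest, and mention the combinatorial walk as a fallback for readers wishing to avoid metric reasoning.
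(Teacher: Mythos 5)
Your proposal is correct, but it follows a genuinely different route from the paper's. The paper works directly with the longest chord: assuming its hull length satisfies $2\le\ell\le 7$, it takes a triangle $\Delta s_0 s_\ell s_m$ supported on that chord and runs a short case analysis on the position of the third vertex ($m\le 11$; or $m\ge 12$, split further by whether $m-\ell\le 11$ or $m-\ell\ge 12$), reaching in each case either a violation of $\mu\le\ell$ or of $\ell$'s maximality --- a purely index-arithmetic argument that never mentions the circumcenter. You instead exhibit a specific long chord: the triangle $\Delta^\ast$ containing the center $O$ is well defined (since $23$ is odd no chord passes through $O$, so $O$ lies in the interior of exactly one triangle of the triangulation), each of its three arcs has at most $11$ hull edges (otherwise $O$ would lie on the major-arc side of that side, hence outside $\Delta^\ast$), and pigeonhole on $x+y+z=23$ gives an arc of size at least $8$, whose subtending side then has hull length $\min(x,23-x)=x\ge 8$ because $x\le 11$. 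All steps check out: the antipodality argument, the major-arc fact for non-diametral chords, and the bookkeeping identifying arc size with hull length are each sound, and the sides of $\Delta^\ast$ are indeed edges of the triangulation. Your approach buys a stronger, more portable conclusion --- for any odd $n$, every triangulation of the regular $n$-gon contains a triangle with all arcs at most $\lfloor n/2\rfloor$ and one of size at least $\lceil n/3\rceil$, localized at the center --- at the price of a mild metric fact about the circle, whereas the paper's proof is entirely combinatorial on vertex indices and uses nothing beyond the definition of $\mu$. Your suggested combinatorial ``walk'' fallback would recover that purity, but since you present the center argument as primary and it is complete, there is no gap.
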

\begin{proof}
We clearly have $\ell \geq 2$. 
Since $S$ is symmetric, we can assume that $s_0 s_\ell$ is the longest chord.
 Since $\mu(i,j) \leq 11$ for any $0 \leq i,j \leq 22$, we have $\ell \leq 11$. 
 Suppose for contradiction that $2 \leq \ell \leq 7$. Then $s_0 s_\ell$ is an edge
 of some triangle $\Delta{s_0 s_\ell s_m}$, where $\ell+1 \leq m \leq 22$.
 In particular,
\begin{equation} \label{eq:ell}
 \mu(0,\ell), \mu(\ell,m), \mu(0,m) \leq \ell \leq 7.
\end{equation} 

If $m \leq 11$, then $\mu(0,m) = \min(m,23-m)=m \geq \ell+1$, a contradiction to
$\ell$'s maximality. Assume now that $m \geq 12$; then
$\mu(0,m) = 23-m \leq \ell$, since $\ell$ is the length of a longest chord. 
It follows that $m \geq 23 -\ell \geq 23-7=16$. 
If $m-\ell \leq 11$, then $\mu(\ell,m) = m-\ell  \geq 16-7=9$, a contradiction to~\eqref{eq:ell}.
If $m-\ell \geq 12$, then $\mu(\ell,m) =23 -(m-\ell) =23 -m +\ell \geq \ell+1$,
a contradiction to $\ell$'s maximality. 
Consequently, we have $8 \leq \ell \leq 11$, as required.
\end{proof}

\paragraph{Proof outline.} 
For every $\ell \in \{ 8,9,10,11\}$, if the longest chord in a triangulation $T$ 
has length $\ell$, we show that $\delta(T) \geq f(2,4,5) = 1.4237\ldots$
Assuming that $s_0s_\ell$ is a longest chord,
we consider the triangle with base $s_0s_\ell$ and third vertex
in $\up(s_0s_\ell)$ or $\lo(s_0s_\ell)$, depending on $\ell$.
For each such triangle, we show that if the edges of
the triangle along with the convex hull edges of $S$ are present, then
in any resulting triangulation there is a pair whose
stretch factor is at least $f(2,4,5) = 1.4237\ldots$ 
Essentially, the long chords
act as obstacles which contribute to long detours for some point pairs.
In four subsequent lemmas, we consider the convex hull lengths $8,9,10,11$
(from Lemma~\ref{lem:ell}) successively. 

In some arguments, we consider a \emph{primary pair} $s_i,s_j$, and possible
shortest paths between the two vertices. We show that if certain  
intermediate vertices are present in $\pi(s_i,s_j)$, then $\delta(s_i,s_j) \geq f(2,4,5)$.
Otherwise if certain long edges are present in $\pi(s_i,s_j)$,
then $\delta(s_u,s_v) \geq f(2,4,5)$, where $s_u,s_v$ is a \emph{secondary pair}.
In the figures, wherever required, we use circles and squares to mark the primary
and secondary pairs, respectively (see for instance Fig.~\ref{fig:s23_8}).
In some of the cases, a primary pair suffices in the argument,
\ie, no secondary pair is needed.

\begin{lemma}{\label{lem:s_23.8}}
If $\ell=8$, then $\delta(T) \geq f(2,4,5) = 1.4237\ldots$
\end{lemma}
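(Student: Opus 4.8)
The plan is to exploit the geometry forced by a longest chord of convex hull length $\ell=8$, say $s_0 s_8$. By Lemma~\ref{lem:ell} this is the longest chord, so every other chord in the triangulation $T$ has convex hull length at most $8$. First I would locate the triangle $\Delta s_0 s_8 s_m$ that sits on $s_0 s_8$ with its third vertex $s_m$ on one side; since the chord is longest, the constraint $\mu(0,m),\mu(8,m)\le 8$ severely restricts $m$, and I would enumerate the (few) admissible positions of $s_m$ in $\up(s_0 s_8)$ or $\lo(s_0 s_8)$. The key idea, stated in the proof outline, is that the long chord $s_0 s_8$ acts as an obstacle: a primary pair $s_i,s_j$ lying on opposite sides of this chord cannot connect through the interior, so any shortest path must detour around an endpoint of the chord, picking up extra length. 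I would pick the primary pair so that $\mu(i,j)=11$ (the diameter direction), which is exactly why $f:=g(11,\cdot)$ from~\eqref{eq:f} is the relevant normalization.

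Second, for each admissible triangle I would trace the possible shortest paths $\pi(s_i,s_j)$ for the chosen primary pair. The dichotomy to push through is the one described in the outline: either the path routes through specific intermediate vertices forced by planarity and the obstacle, in which case its edge lengths decompose as convex hull lengths $n_1,\dots,n_k$ with $f(n_1,\dots,n_k)\ge f(2,4,5)=1.4237\ldots$ (reading off Table~\ref{table:f}, e.g. $f(2,4,5)$ or a larger entry such as $f(5,7),f(6,6),f(2,3,7)$), or else the path uses a long edge that, by planarity, blocks a \emph{secondary} pair $s_u,s_v$ from connecting directly, forcing $\delta(s_u,s_v)\ge f(2,4,5)$ instead. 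In either branch the stretch factor of $T$ is bounded below by $f(2,4,5)$, which is the desired conclusion. I would organize the argument case by case on the location of $s_m$, using the convex hull edges (always present, as noted before the lemma) together with the triangle edges as the only fixed structure, and deriving everything else from maximality of $\ell$ and planarity.

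The main obstacle I anticipate is the case analysis itself: for each placement of the third vertex $s_m$ one must verify that \emph{every} completion of the triangulation to a full triangulation of $S$ contains some pair (primary or secondary) realizing stretch at least $f(2,4,5)$, and this requires checking that the shortest path genuinely cannot avoid the penalty. The delicate point is ruling out cheaper routes: one has to argue that no alternative chord of convex hull length $\le 8$ can shortcut the detour, which is where the inequalities $\mu(\cdot,\cdot)\le\ell=8$ and the crossing (planarity) constraints do the real work. Concretely, the risk is that a short path of the form $f(4,7)=1.3396\ldots$ or $f(5,6)=1.3651\ldots$ (the sub-threshold entries in Table~\ref{table:f}) becomes available; the crux of each subcase is showing that the long chord $s_0 s_8$ or the triangle edges obstruct exactly such cheap detours, so that only a path of value $\ge f(2,4,5)$ survives. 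I would keep the bookkeeping honest by always naming the blocking chord explicitly and invoking planarity to exclude the crossing competitor, rather than appealing to a diagram; the figures (e.g. Fig.~\ref{fig:s23_8}) would serve only to mark which vertices play the primary and secondary roles in each subcase.
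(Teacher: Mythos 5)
Your plan faithfully reproduces the strategy the paper itself announces in the proof outline preceding the lemma --- the longest chord and its supporting triangle act as obstacles, pairs at convex hull distance $11$ are normalized by $f$ of~\eqref{eq:f}, and a primary/secondary trade-off handles the branch in which the primary pair finds a shortcut --- but it stops exactly where the actual proof begins, and that is a genuine gap: no pair is chosen, no path is traced, and no inequality is verified. The substance of the lemma is the concrete instantiation. The paper first observes that the triangle on base $s_0s_8$ with third vertex in $\lo(s_0s_8)$ necessarily has remaining sides of convex hull lengths $7$ and $8$, so up to symmetry one may fix $\Delta s_0 s_8 s_{16}$ (your ``enumerate the admissible $m$'' would land here, but you never carry it out). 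It then commits to the primary pair $s_{10},s_{21}$ and the secondary pair $s_3,s_{14}$, shows each shortest path must clear the triangle through a specific vertex --- through $s_{16}$ for the primary pair, since routing via $s_0$ already costs $f(2,8,2)=1.4308\ldots$, and through $s_8$ for the secondary pair, since routing via $s_0$ costs $f(3,7,2)=1.4886\ldots$ --- and finally runs the decisive dichotomy: if $\pi(s_{10},s_{21})$ uses one of $s_{12},s_{13},s_{14}$, then $\delta(s_{10},s_{21})\geq\min(f(2,4,5),f(3,3,5),f(4,2,5))=f(2,4,5)$; otherwise one of the edges $s_{10}s_{15},s_{10}s_{16},s_{11}s_{15},s_{11}s_{16}$ lies on that path, and it is precisely this edge that blocks $\pi(s_3,s_{14})$ beyond $s_8$, yielding $\delta(s_3,s_{14})\geq\min(f(5,2,4),f(5,3,3),f(1,5,7))=f(2,4,5)$. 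Your prescription ``pick a primary pair with $\mu(i,j)=11$'' does not suffice: many diameter pairs are useless here (for instance $s_0,s_{11}$ admits the route through $s_8$ of cost $f(8,3)\approx 1.289$, below the threshold), and the success of the secondary pair depends delicately on the fact that exactly the edges enabling the primary shortcut also separate $s_8$ from $s_{14}$.

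There is also a mild misframing in your closing paragraph that would derail an attempt to complete the argument as you describe it. You propose to show that sub-threshold routes such as $f(4,7)$ or $f(5,6)$ are \emph{obstructed} by the long chord, ``invoking planarity to exclude the crossing competitor.'' But the paper never rules out a cheap route for the primary pair --- it cannot, since, e.g., the edge $s_{11}s_{16}$ may indeed give $s_{10},s_{21}$ a genuinely short connection. The mechanism is conditional, not exclusionary: whichever way the triangulation resolves the region between $s_8$ and $s_{16}$, either the primary or the secondary pair pays at least $f(2,4,5)$. Until you commit to specific pairs, the specific forced vertices, and the specific list of blocking edges, and check the corresponding handful of $f$-values against Table~\ref{table:f}, the proposal establishes nothing beyond the outline the paper already states before the lemma.
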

\begin{proof} 
\begin{figure}[hbtp]
\centering
\includegraphics[scale=0.5]{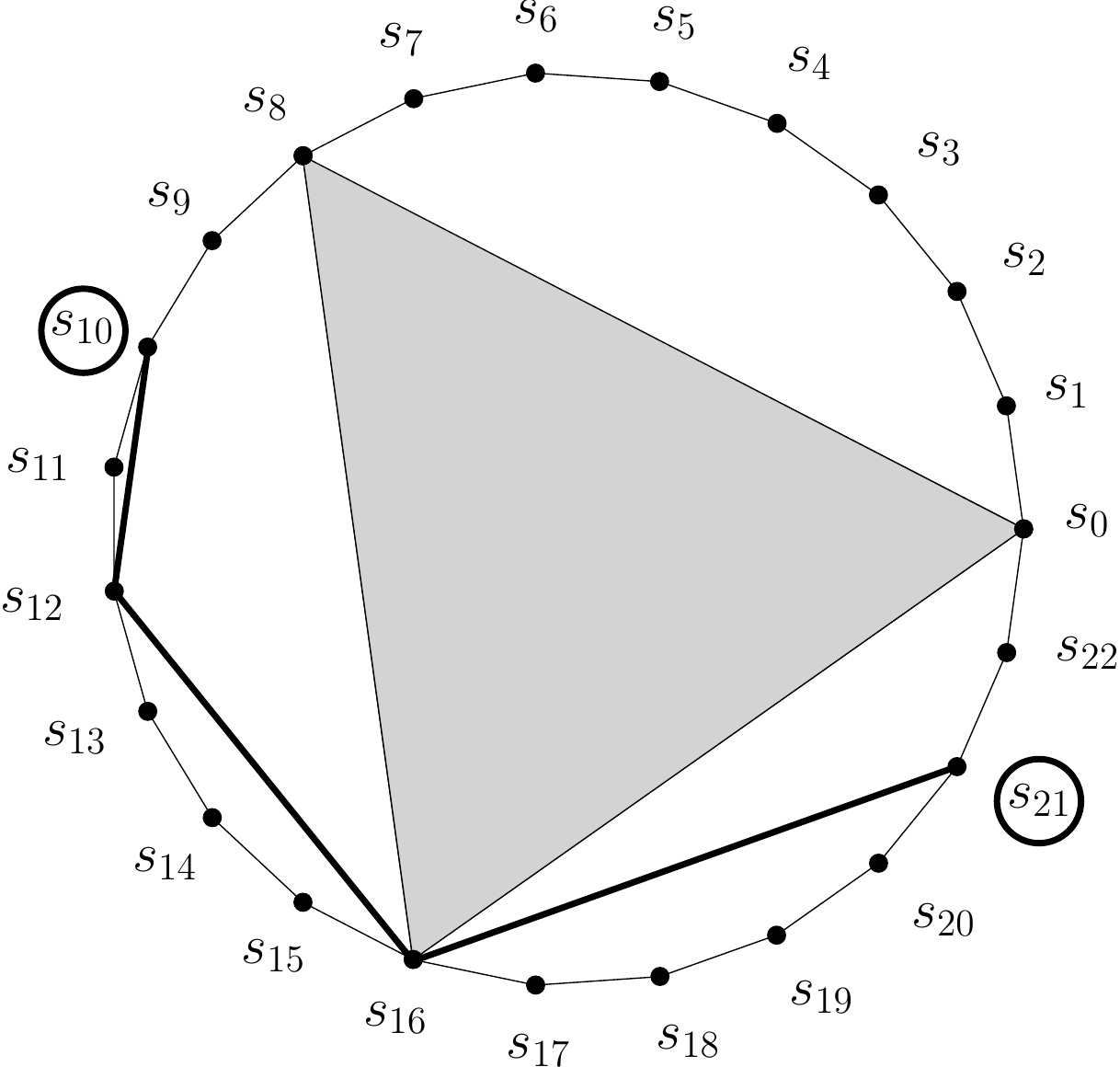}
\hspace{15mm}
\includegraphics[scale=0.5]{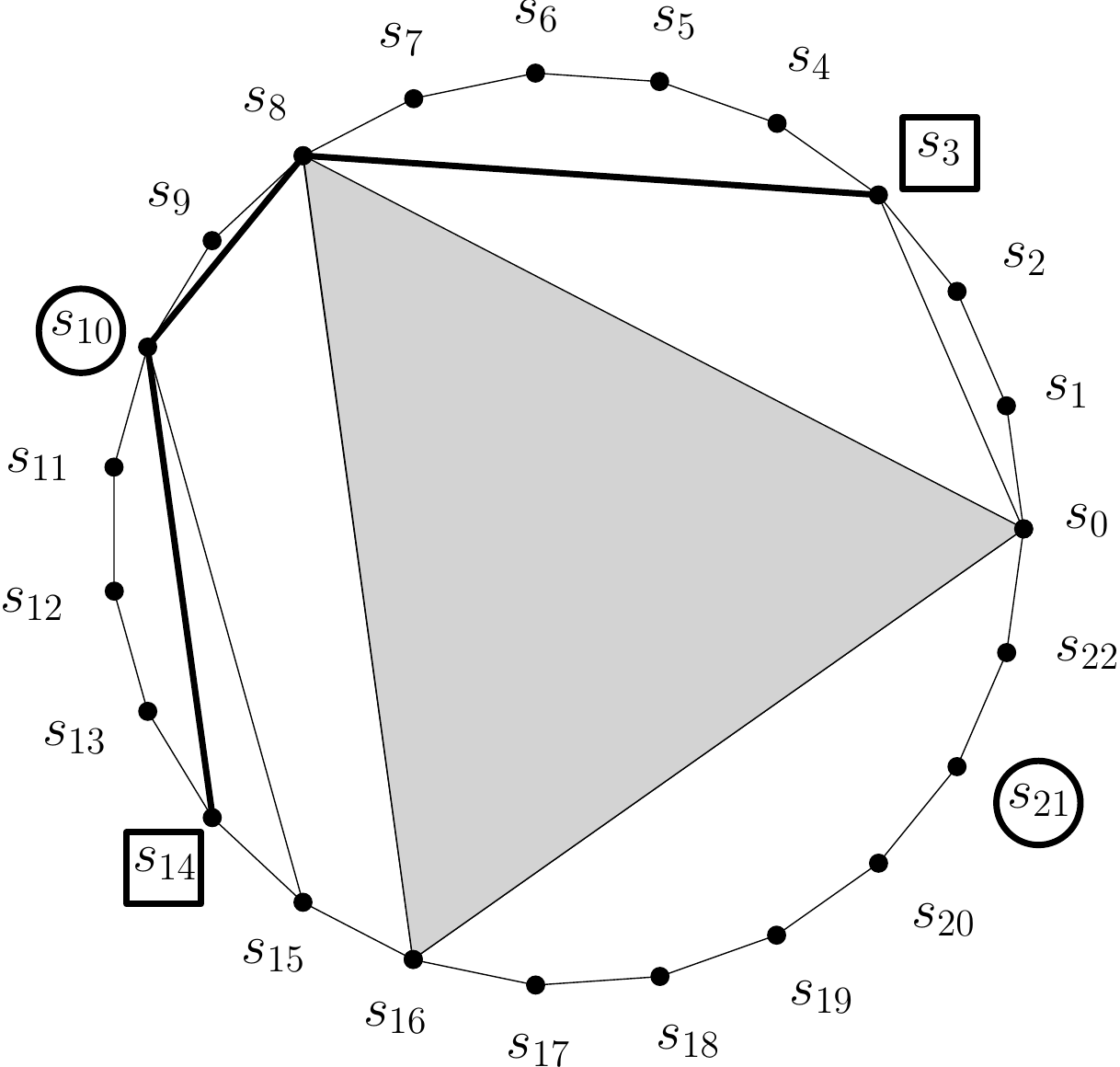}
\caption{Illustrating Lemma~\ref{lem:s_23.8}.
  Left: $s_{12} \in \pi(s_{10},s_{21})$, primary pair: $s_{10},s_{21}$.
  Right: $s_{10}s_{15} \in \pi(s_{10},s_{21})$, primary pair: $s_{10},s_{21}$,
  secondary pair: $s_3,s_{14}$.} 
\label{fig:s23_8}
\end{figure}

Refer to Fig.~\ref{fig:s23_8}. Let $s_0s_8$ be the longest
chord. The triangle with base $s_0s_8$ and third vertex in
$\lo(s_0s_8)$ has two other sides of convex hull lengths $7$ and $8$.  
It thus suffices to consider the triangle $\Delta{s_0 s_8 s_{16}}$ only and
assume that the edges $s_0s_8$, $s_8 s_{16}$ and $s_{0} s_{16}$ are present. 

In this proof, the primary pair is $s_{10},s_{21}$ and the secondary pair is $s_3,s_{14}$.
Now, consider the pair $s_{10},s_{21}$. Note that either $s_{0} \in
 \pi(s_{10},s_{21})$ or $s_{16} \in  \pi(s_{10},s_{21})$. In the
 former case, $\delta(s_{10},s_{21}) \geq
 |\rho(10,8,0,21)| / |s_{10}s_{21}| \geq f(2,8,2) = 1.4308\ldots$
 We may thus assume that $s_{16} \in \pi(s_{10}, s_{21})$.
 
 Similarly, for the pair $s_3,s_{14}$ either $s_0 \in \pi(s_3,s_{14})$
 or $s_8 \in \pi(s_3,s_{14})$. If $s_0 \in \pi(s_3,s_{14})$, then
 $\delta(s_3,s_{14}) \geq |\rho(3,0,16,14)| / |s_{3}s_{14}| \geq f(3,7,2) = 1.4886\ldots$
 Thus, assume that $s_{8} \in \pi(s_{3}, s_{14})$.

If at least one of $s_{12}, s_{13}$, or $s_{14}$ is in $\pi(s_{10},s_{21})$, then 
\begin{align*}
  \delta(s_{10},s_{21}) &\geq
  \frac{\min(|\rho(10,12,16,21)|, |\rho(10,13,16,21)|, |\rho(10,14,16,21)|)}{|s_{10}s_{21}|} \\
&\geq \min(f(2,4,5),f(3,3,5),f(4,2,5)) = f(2,4,5) = 1.4237\ldots
\end{align*}

Otherwise, one of $s_{10}s_{15},s_{10}s_{16}, s_{11}s_{15}$, or $s_{11}s_{16}$
must be in $\pi(s_{10},s_{21})$, and
\begin{align*}
  \delta(s_3,s_{14}) &\geq
  \frac{\min(|\rho(3,8,10,14)|, |\rho(3,8,11,14)|),|\rho(3,8,15,14)|) }{|s_3s_{14}|}\\ 
&\geq \min(f(5,2,4),f(5,3,3),f(1,5,7)) =  f(2,4,5) = 1.4237\ldots \qedhere
\end{align*} 
\end{proof}

\begin{lemma}{\label{lem:s_23.9}}
If $\ell=9$, then $\delta(T) \geq f(2,4,5)= 1.4237\ldots$
\end{lemma}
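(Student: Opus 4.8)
The plan is to mirror the template of Lemma~\ref{lem:s_23.8}, now with a base chord of hull length $9$. By the symmetry of the regular $23$-gon I may assume $s_0 s_9$ is a longest chord; in any triangulation it is the base of exactly one triangle $\Delta s_0 s_9 s_m$ on the $\lo(s_0s_9)$ side, with $m \in \{10,\ldots,22\}$. Maximality of $\ell = 9$ forces $\mu(0,m) \le 9$ and $\mu(9,m) \le 9$, which restricts the apex to $m \in \{14,15,16,17,18\}$. The reflection $s_k \mapsto s_{(9-k)\bmod 23}$ fixes the chord $s_0 s_9$ and swaps the apices $14 \leftrightarrow 18$ and $15 \leftrightarrow 17$, so it suffices to treat the three triangles with apex $m \in \{14,15,16\}$, whose side triples are $(9,5,9)$, $(9,6,8)$, and $(9,7,7)$. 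As in Lemma~\ref{lem:s_23.8}, I use the lower triangle since its non-base sides are longest and hence form the strongest obstacle; the upper triangle is never needed.

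For each triangle I choose a \emph{primary pair} $s_i, s_j$ with $\mu(i,j) = 11$, so that all of its stretch factors are read off via $f$, and with $s_i, s_j$ on opposite sides of the base $s_0 s_9$. Since the base is an edge of the plane graph, $\pi(s_i, s_j)$ must pass through $s_0$ or $s_9$. Take, for instance, the pair $s_7, s_{18}$: the passage through $s_0$ already gives $\delta(s_7, s_{18}) \ge f(5,7) = 1.4514\ldots$, so I may assume $\pi(s_7, s_{18})$ uses $s_9$ and not $s_0$. The sub-path from $s_{18}$ to $s_9$ cannot be a single edge (that chord crosses a triangle side), so it must leave the lower region through a triangle vertex other than $s_0$; since graph distance dominates Euclidean distance, the resulting forced detour has length at least the sum of the chords between consecutive forced vertices. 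For the apex-$14$ triangle this exit is through $s_{14}$ and the detour is $s_{18}, s_{14}, s_9, s_7$, giving $\delta \ge f(2,4,5) = 1.4237\ldots$ along the short side $s_9 s_{14}$; here a single primary pair suffices.

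For apices $15$ and $16$ the analogous exit through the longer side $s_9 s_m$ yields a chord sum strictly below $f(2,4,5)$ (for instance $f(2,3,6) = 1.4023\ldots$ for $m=15$), so the primary pair alone does not close the argument. Following the template of Lemma~\ref{lem:s_23.8}, I then introduce a \emph{secondary pair} $s_u, s_v$, again of hull length $11$, placed so that precisely the short-detour routing of the primary pair --- namely the presence of the blocking edges incident to the apex --- forces a long detour for $s_u, s_v$. Repeating the base-crossing and region-exit analysis on $s_u, s_v$ yields $\delta(s_u, s_v) \ge f(2,4,5)$, so that in all three cases $\delta(T) \ge f(2,4,5) = 1.4237\ldots$, as claimed.

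The main obstacle is combinatorial, not conceptual: for each apex I must select the primary (and, for $m \in \{15,16\}$, the secondary) pair so that every admissible route around the triangle maps to a tuple whose tabulated $f$-value is at least $f(2,4,5)$, and I must verify that no routing is overlooked. The subtle point, already visible above, is that a \emph{longer} obstacle side can admit a \emph{shorter} detour, because the first hop off $s_i$ then shrinks; consequently the apex-$16$ triangle, despite carrying two sides of length $7$, is harder than the apex-$14$ triangle and is the case most likely to demand the full secondary-pair bookkeeping. Once the pairs are fixed, however, each individual stretch estimate is a single lookup in Table~\ref{table:f}.
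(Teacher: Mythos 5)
Your reduction is sound and matches the paper's: maximality of $\ell=9$ indeed restricts the apex of the lower triangle to $m\in\{14,\ldots,18\}$, the reflection fixing the chord $s_0s_9$ cuts this to $m\in\{14,15,16\}$ with side triples $(9,5,9)$, $(9,6,8)$, $(9,7,7)$, and your treatment of the apex-$14$ triangle is complete and correct: with primary pair $s_7,s_{18}$, the route through $s_0$ gives $f(5,7)=1.4514\ldots$, while the route through $s_9$ must exit the blocked region via $s_{14}$, giving $f(2,5,4)=f(2,4,5)$. This is exactly the paper's \textsc{Case C} up to your reflection (the paper uses the pair $s_4,s_{16}$ with $\min(f(4,5,2),f(5,7))$). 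You also correctly diagnose that for $m\in\{15,16\}$ no single pair closes the argument (e.g., $f(2,3,6)=1.4023\ldots<f(2,4,5)$), so secondary pairs are required, as in the paper's \textsc{Cases A} and \textsc{B}.

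At exactly that point, however, the proposal stops being a proof. For the two remaining apices you never name the secondary pair, never enumerate the admissible routings, and never exhibit the $f$-values; you merely assert that a pair ``placed so that \ldots{} the blocking edges incident to the apex'' force a long detour exists and that ``repeating the analysis yields $\delta(s_u,s_v)\ge f(2,4,5)$.'' That existence claim is the entire content of the lemma in these cases, and your guiding heuristic is in fact off target: in the paper's \textsc{Case A} (apex $s_{16}$, primary pair $s_3,s_{14}$, secondary pair $s_6,s_{18}$) the blocking edges are $s_3s_8$, $s_3s_9$, $s_4s_8$, $s_4s_9$ --- incident to the base vertex $s_9$ and to neighbors of the primary endpoint $s_3$, not to the apex --- so a search guided by ``edges incident to the apex'' would not find the right configuration. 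The missing argument has several coupled steps: first pin both pairs to the correct side of the base ($s_9\in\pi(s_3,s_{14})$, since otherwise $f(3,7,2)=1.4886\ldots$, and $s_0\in\pi(s_6,s_{18})$, since otherwise $f(3,7,2)$; in the $\{8,6\}$ case the analogous pins use $f(3,6,3)=1.5312\ldots$ and $f(3,8,1)=1.4257\ldots$); then split on whether $\pi(s_3,s_{14})$ passes through one of $s_5,s_6,s_7$, giving $\min(f(2,4,5),f(3,3,5),f(4,2,5))=f(2,4,5)$, or instead uses one of the four blocking edges, which separate $s_6$ from $s_0$ and hence force $\pi(s_6,s_{18})$ through $s_3$ or $s_4$, giving $\min(f(3,3,5),f(2,4,5))=f(2,4,5)$. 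Until this selection and route-by-route verification is carried out for $m=15$ and $m=16$ --- the step you yourself flag as ``the main obstacle'' --- the lemma is established only for the one easy apex, and the proposal remains an outline of the paper's proof rather than a proof.
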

\begin{proof} Let $s_0s_9$ be the longest chord and consider the
  triangle with base $s_0s_9$ and the third vertex in $\lo(s_0s_9)$.
  There are three possible cases depending on the convex
  hull lengths of other two sides of the triangle: $\{7,7\}$,
  $\{8,6\}$ or $\{9,5\}$. We consider them successively. 

\begin{figure}[hbtp]
	\centering
	\includegraphics[scale=0.5]{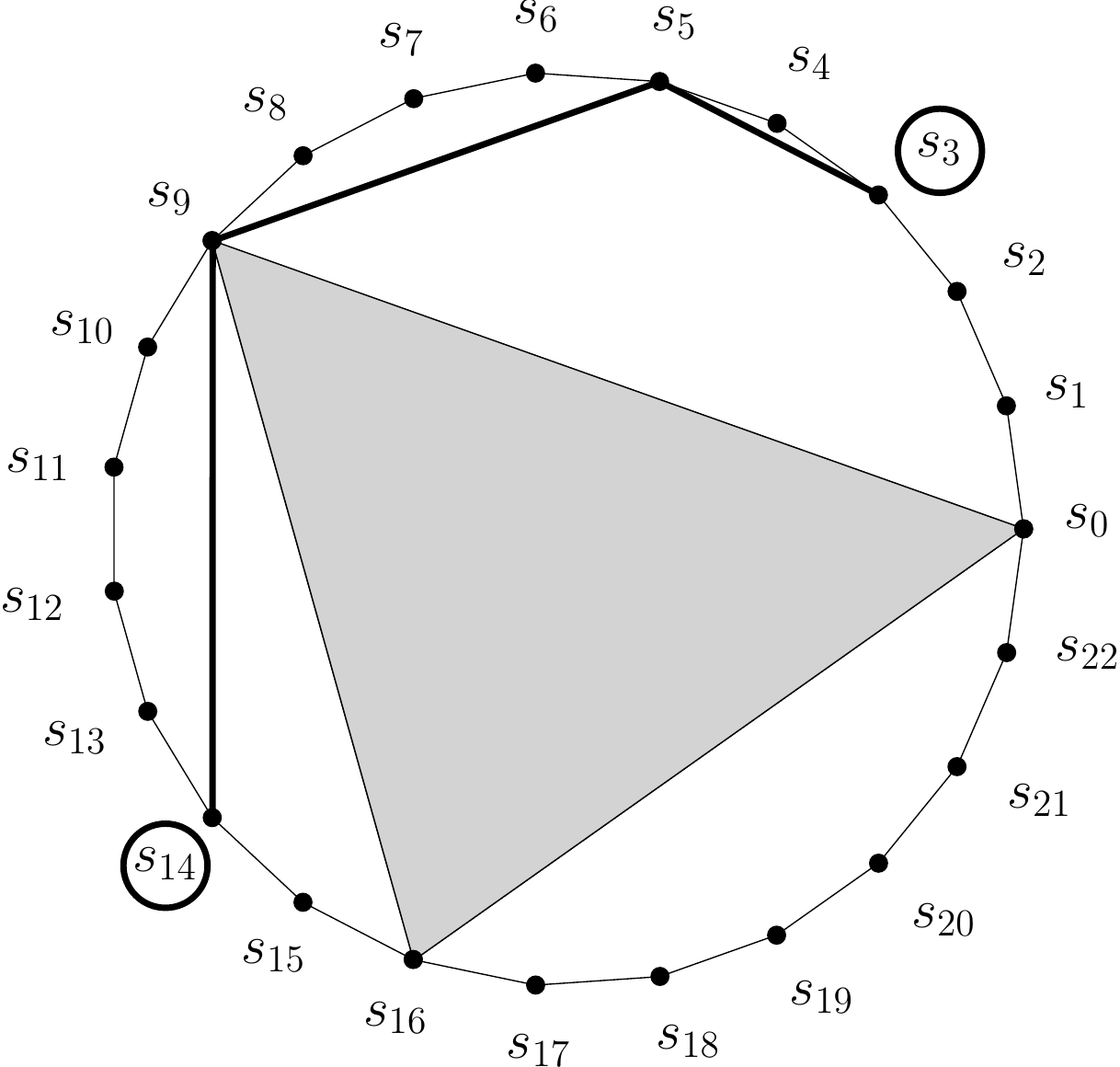}
	\hspace{15mm}
	\includegraphics[scale=0.5]{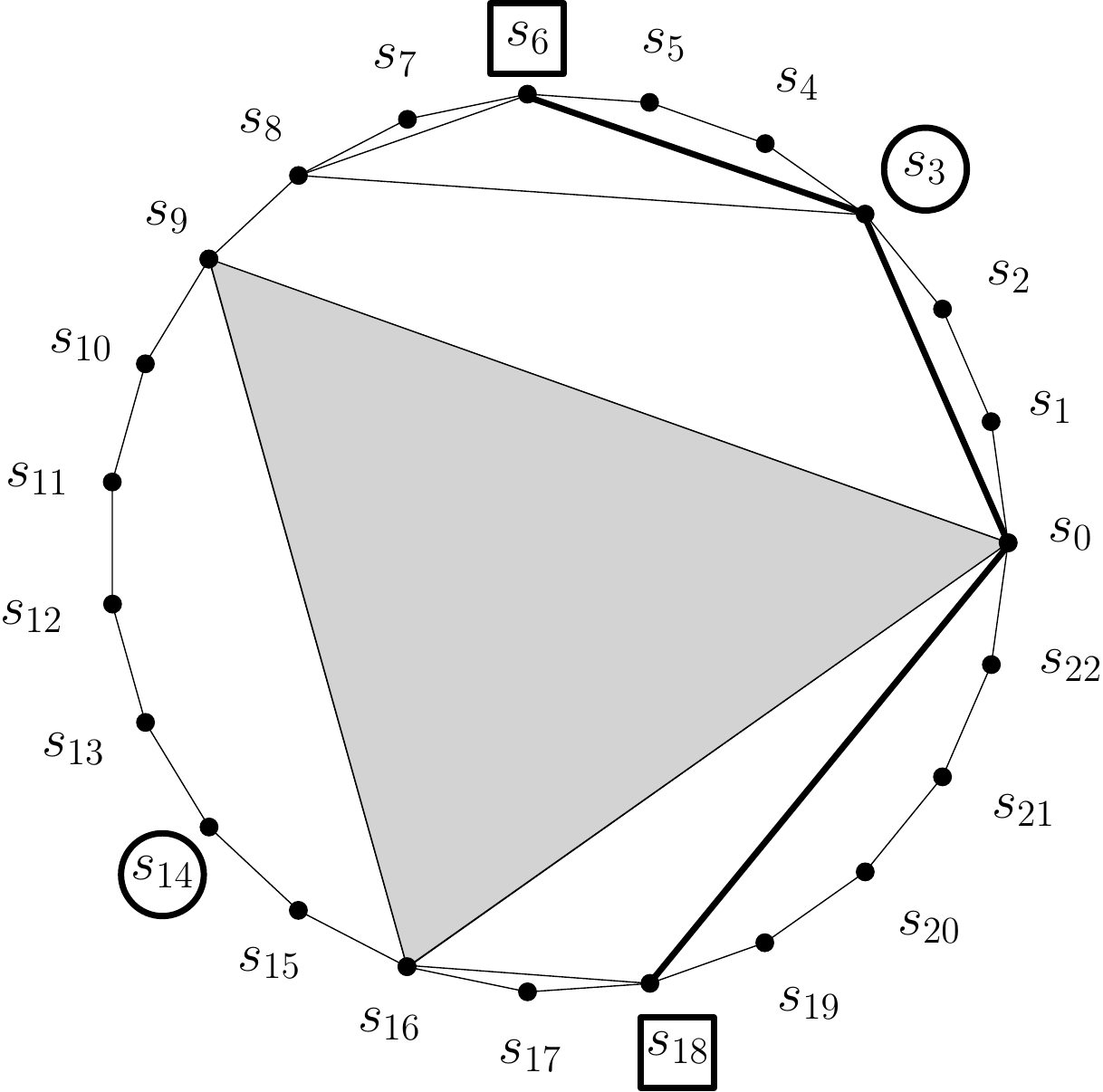}
	\caption{Illustrating \textsc{Case A} from Lemma~\ref{lem:s_23.9}.
          Left: $s_5 \in \pi(s_3,s_{14})$. Right: $s_3s_8 \in \pi(s_3,s_{14})$. } 
	\label{fig:s23_9A}
\end{figure}

\textsc{Case A}: The convex hull lengths of the other two sides are $\{7,7\}$.
Let $\Delta s_0s_9s_{16}$ be the required triangle; refer to Fig.~\ref{fig:s23_9A}.
In this case, the primary pair is 
$s_{3},s_{14}$ and the secondary pair is $s_6,s_{18}$.
Either $s_0 \in \pi(s_{3},s_{14})$ or $s_{9} \in \pi(s_{3},s_{14})$.
If $s_0 \in \pi(s_{3},s_{14})$, then
$\delta(s_{3},s_{14}) \geq |\rho(3,0,16,14)| / |s_{3}s_{14}| \geq 
f(3,7,2) $ $= 1.4886\ldots$ Thus, we assume that $s_{9} \in\pi(s_{3},s_{14})$. 

Similarly, for the pair $s_6,s_{18}$, either $s_9 \in \pi(s_6,s_{18})$ or
$s_0 \in \pi(s_6,s_{18})$. If $s_9 \in \pi(s_6,s_{18})$
then $\delta(s_6,s_{18})  \geq |\rho(6,9,16,18)| / |s_6s_{18}|\geq f(3,7,2) = 1.4886\ldots$
Thus, assume that $s_0 \in \pi(s_6,s_{18})$.

Now, if at least one of $s_5, s_6$, or $s_7$ is in $\pi(s_3,s_{14})$, then 
\begin{align*}
  \delta(s_3,s_{14})
  &\geq \frac{\min(|\rho(3,5,9,14)|, |\rho(3,6,9,14)|, |\rho(3,7,9,14)|)}{|s_3s_{14}|} \\
&\geq \min(f(2,4,5),f(3,3,5),f(4,2,5)) = f(2,4,5) = 1.4237\ldots
\end{align*}

Otherwise, one of $s_3s_8, s_3s_9, s_4s_8$, or $s_4s_9$ must be in $\pi(s_3,s_{14})$, and
\begin{align*}
\delta(s_6,s_{18}) &\geq \frac{\min(|\rho(6,3,0,18)|, |\rho(6,4,0,18)|)}{|s_6s_{18}|} 
\geq \min(f(3,3,5),f(2,4,5)) =  f(2,4,5) = 1.4237\ldots
\end{align*} 

\textsc{Case B}: The convex hull lengths of the other two
sides are $\{8,6\}$. Let $\Delta s_0s_9s_{17}$ be the required
triangle; refer to Fig.~\ref{fig:s23_9B-C}\,(left). As in \textsc{Case A}, the primary pair is 
$s_{3},s_{14}$ and the secondary pair is $s_6,s_{18}$. Consider the pair $s_{3},s_{14}$.
Either $s_{17} \in \pi(s_{3},s_{14})$ or $s_{9} \in \pi(s_{3},s_{14})$.
If $s_{17} \in \pi(s_{3},s_{14})$, then
$\delta(s_{3},s_{14}) \geq |\rho(3,0,17,14)| / |s_3s_{14}| \geq f(3,6,3) = 1.5312\ldots$
So we assume that $s_{9} \in \pi(s_{3},s_{14})$.
\begin{figure}[hbtp]
	\centering
	\includegraphics[scale=0.5]{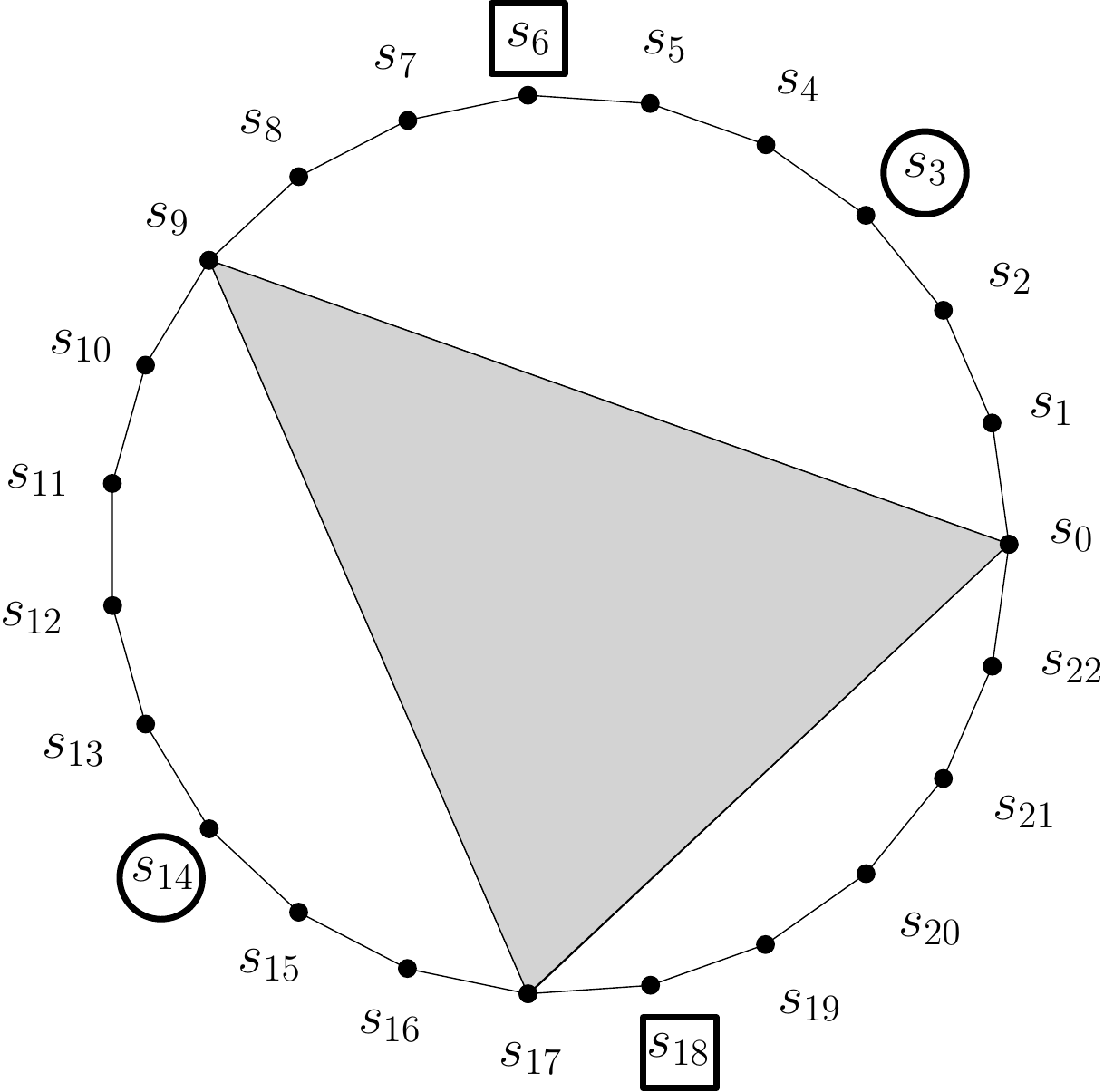}
	\hspace{15mm}
	\includegraphics[scale=0.5]{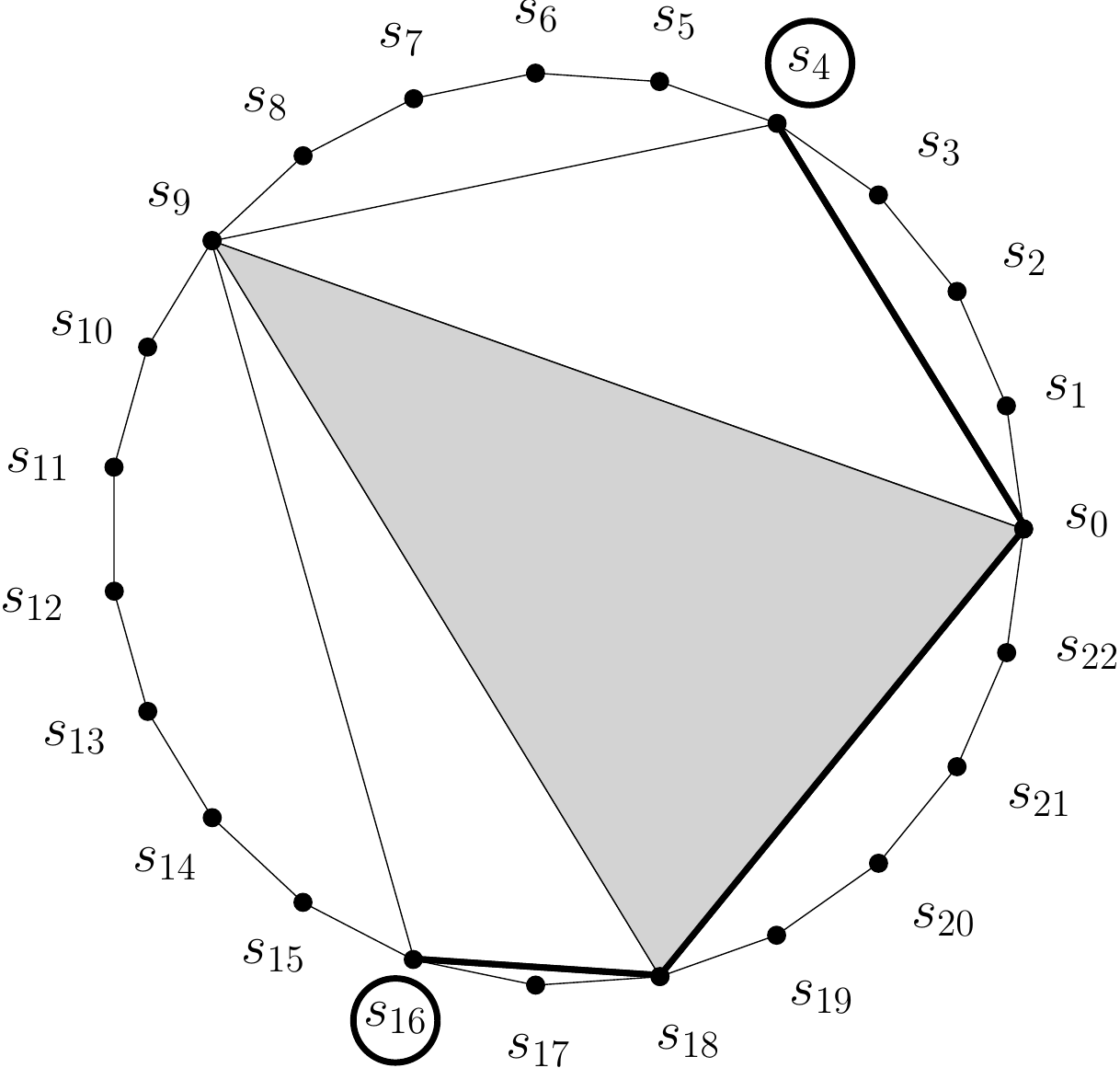}
	\caption{Illustrating \textsc{Case B}\,(left) and \textsc{Case C}\,(right)
          from Lemma~\ref{lem:s_23.9}. }
	\label{fig:s23_9B-C}
\end{figure}

Similarly, for the pair $s_6,s_{18}$, either $s_9 \in \pi(s_6,s_{18})$ or $s_0 \in \pi(s_6,s_{18})$.
If $s_9 \in \pi(s_6,s_{18})$ then
$\delta(s_6,s_{18})  \geq |\rho(6,9,17,18)| / |s_6s_{18}|\geq f(3,8,1) = 1.4257\ldots$
Thus, we assume that $s_0 \in \pi(s_6,s_{18})$. Now, it can be checked that by the same analysis
as in \textsc{Case A}, the same lower bound of $f(2,4,5)$ holds.

\smallskip
\textsc{Case C}: The convex hull lengths of the other two
sides of the triangle are $\{9,5\}$. Let $\Delta{s_0 s_9 s_{18}}$ be the required triangle;
refer to Fig.~\ref{fig:s23_9B-C}\,(right). Then,
\begin{equation*}
\delta(s_{4},s_{16}) \geq \frac{\min(|\rho(4,0,18,16)|,|\rho(4,9,16)|)}{|s_{4}s_{16}|} 
\geq \min(f(4,5,2),f(5,7)) = f(2,4,5)= 1.4237\ldots
\qedhere
\end{equation*}
\end{proof}

\begin{lemma}{\label{lem:s_23.10}}
If $\ell=10$, then $\delta(T) \geq f(2,4,5)= 1.4237\ldots$
\end{lemma}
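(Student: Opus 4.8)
The plan is to mimic the structure of Lemmas~\ref{lem:s_23.8} and~\ref{lem:s_23.9}: assume by symmetry that $s_0 s_{10}$ is a longest chord, and analyze the triangle sitting on this base inside the triangulation $T$. When $\ell=10$, the third vertex of this triangle must lie in $\lo(s_0 s_{10})=\{s_{11},\ldots,s_{22}\}$ (a vertex in $\up(s_0 s_{10})=\{s_1,\ldots,s_9\}$ would force one of the two new sides to have convex hull length exceeding $10$, contradicting maximality). Writing the third vertex as $s_m$ with $11\le m\le 22$, the constraints $\mu(0,m)\le 10$ and $\mu(10,m)\le 10$ pin down the admissible pairs of convex hull lengths of the two other sides; I expect a small handful of cases, analogous to the $\{7,7\},\{8,6\},\{9,5\}$ split in Lemma~\ref{lem:s_23.9}, and I would enumerate them explicitly first.

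For each resulting case, the approach is the obstacle-and-detour argument already used above. Having fixed the base triangle (so that its three edges plus all convex hull edges of $S$ are present), I would pick a \emph{primary pair} $s_i,s_j$ whose shortest path is forced to route around the long chord $s_0 s_{10}$ or around one of the triangle's sides. The key move in each case is: a shortest path $\pi(s_i,s_j)$ must pass through one of the two ``corners'' of the obstructing triangle; in the easy branch one corner immediately yields a detour of length at least $f(2,4,5)$ via a lookup such as $f(3,7,2)$, $f(3,6,3)$, or $f(3,8,1)$ from Table~\ref{table:f}, and in the remaining branch I would then case on which intermediate vertex or which long edge appears on $\pi(s_i,s_j)$. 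When the primary pair alone does not close the argument, I would introduce a \emph{secondary pair} $s_u,s_v$ exactly as in Lemma~\ref{lem:s_23.9}: the long edges that could cheaply shortcut the primary path are precisely the edges that obstruct the secondary path, so one of the two pairs must incur a detour bounded below by $\min(\ldots)$ of table entries, each of which is at least $f(2,4,5)=1.4237\ldots$

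Concretely, I would expect the symmetric placement of the base to make a pair near convex-hull distance $11$ from each other (so $|s_i s_j|=2\sin\frac{11\pi}{23}$, the denominator in $f$) straddle the chord $s_0 s_{10}$, forcing its path through $s_0$ or $s_{10}$ and then along a short arc; the ratios that arise should again be expressible as $f(\cdot,\cdot,\cdot)$ with the tabulated values all $\ge f(2,4,5)$. The main obstacle is bookkeeping rather than a new idea: one must choose the primary and secondary pairs in each geometric case so that \emph{every} way of completing the triangulation leaves some monitored pair with a detour, and verify that the governing minimum over the relevant table entries is never smaller than $f(2,4,5)$. Getting the indices right so that the two branches of the primary pair are covered by a secondary pair whose own two branches both exceed the threshold is the delicate part; I would handle the $\ell=10$ cases one at a time, reusing the detour templates ($\rho(i,0,m,j)$ around a corner, and $\rho(i,a,b,j)$ threading intermediate vertices) that already appear in the $\ell=8,9$ proofs.
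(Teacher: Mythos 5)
Your plan is the paper's plan---reduce to a triangle on the base $s_0s_{10}$ with apex in $\lo(s_0s_{10})$, split on the hull lengths of its two other sides, and run the obstacle-and-detour argument with a straddling pair of hull distance $11$ forced through $s_0$ or $s_{10}$, falling back on a primary/secondary pair when a long edge can shortcut the primary path---but there is one genuine logical error in it. Your justification for placing the apex in $\lo(s_0s_{10})$ is false: an apex $s_m$ with $1\leq m\leq 9$ gives the two new sides hull lengths $m$ and $10-m$, both at most $9$, so nothing ``exceeds $10$'' and maximality of $\ell=10$ is not contradicted (the triangle $\Delta s_0s_5s_{10}$, say, is perfectly admissible). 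The correct and sufficient observation is that $s_0s_{10}$ is an interior chord and hence the base of a triangle on \emph{each} side; one simply elects to analyze the lower one, whose apex then satisfies $\mu(0,m)\leq 10$ and $\mu(10,m)\leq 10$, i.e.\ $13\leq m\leq 20$, which up to the reflection fixing the chord yields exactly the four cases $\{10,3\},\{9,4\},\{8,5\},\{7,6\}$ that the paper enumerates. (Compare Lemma~\ref{lem:s_23.11}, where the paper chooses the $\up$ side precisely because it is smaller, not because the other side is impossible.)

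Beyond that, your proposal stops where the lemma's content begins: no monitored pairs are exhibited and no minima are checked, and for a statement that \emph{is} a finite verification, the ``bookkeeping'' is the proof. For comparison, the paper disposes of three of the four cases (apices $s_{20},s_{19},s_{18}$) at one stroke with the single pair $s_4,s_{16}$ and no secondary pair: through $s_{10}$ the detour is at least $f(6,6)$, while through $s_0$ the apex blocks direct access to $s_{16}$, giving $f(4,3,4)$, $f(4,4,3)$, $f(4,5,2)$ respectively, all at least $f(2,4,5)$. Only the case $\{7,6\}$ (apex $s_{17}$) needs your two-pair mechanism, with primary $s_3,s_{14}$ and secondary $s_6,s_{18}$: the branch $s_0\in\pi(s_3,s_{14})$ is killed by $f(3,6,3)=1.5312\ldots$, the branch $s_{10}\in\pi(s_6,s_{18})$ by $f(4,7,1)=1.4761\ldots$; then either some $s_i$, $5\leq i\leq 8$, lies on $\pi(s_3,s_{14})$, giving $\min\bigl(f(2,5,4),f(3,4,4),f(4,3,4),f(5,2,4)\bigr)=f(2,4,5)$, or one of the edges $s_3s_9,s_3s_{10},s_4s_9,s_4s_{10}$ is present and forces $\pi(s_6,s_{18})$ through $s_3$ or $s_4$, giving $\min\bigl(f(3,3,5),f(2,4,5)\bigr)=f(2,4,5)$. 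Your sketch correctly anticipates every one of these ingredients, but as submitted it neither fixes the pairs nor verifies the table minima, and the one deduction it does spell out is wrong; repair the apex argument and execute the four cases and you have the paper's proof.
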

\begin{proof} Let $s_0s_{10}$ be the longest chord. The possible
  convex hull lengths of the other two sides of the triangle with base
  $s_0s_{10}$ and the third vertex in $\lo(s_0s_{10})$ are
  $\{10,3\}$,$\{9,4\}$,$\{8,5\}$,$\{7,6\}$. We consider these cases
  successively. 
\begin{figure}[hbtp]
	\centering
	\includegraphics[scale=0.5]{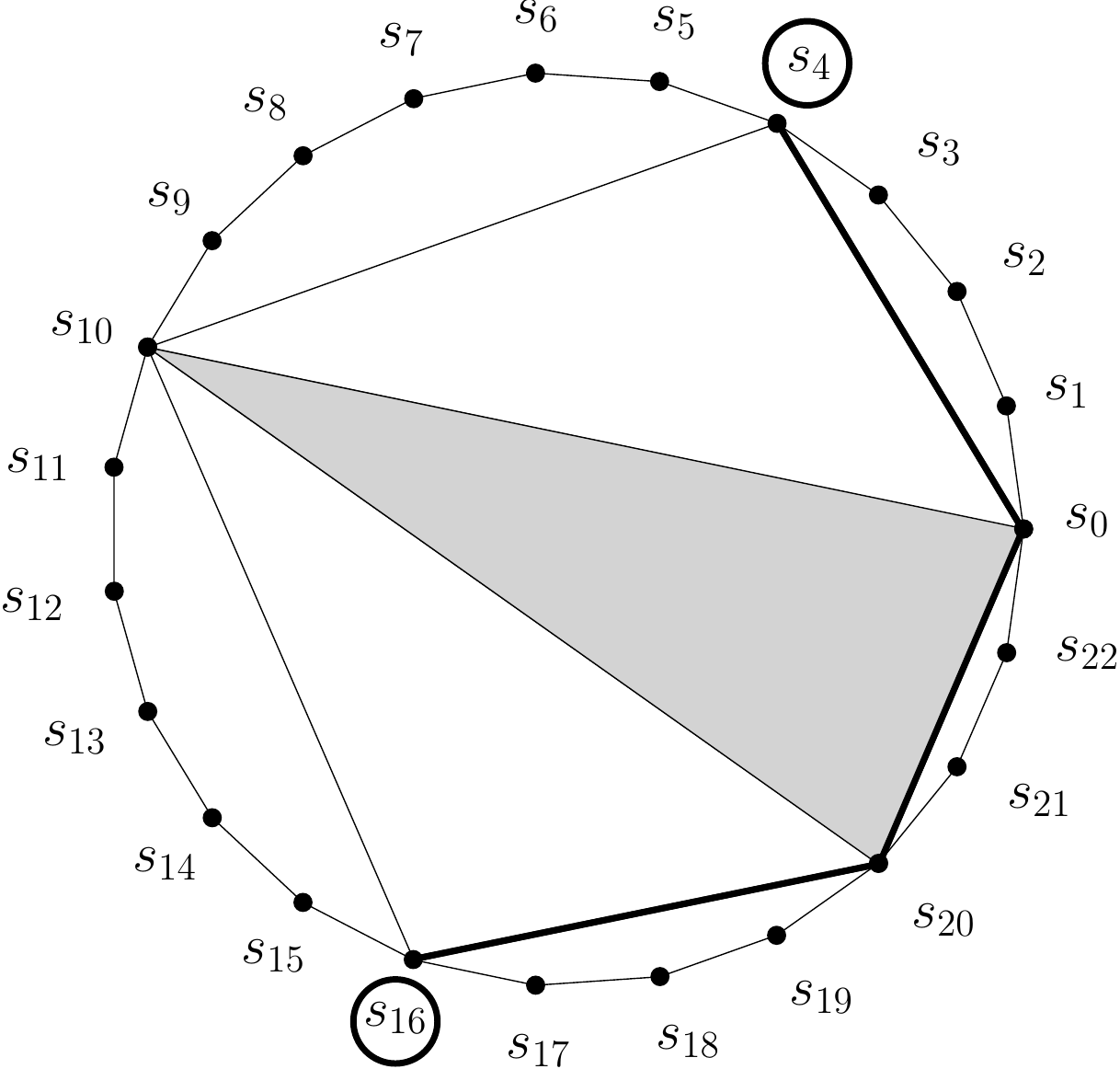}
	\hspace{15mm}
	\includegraphics[scale=0.5]{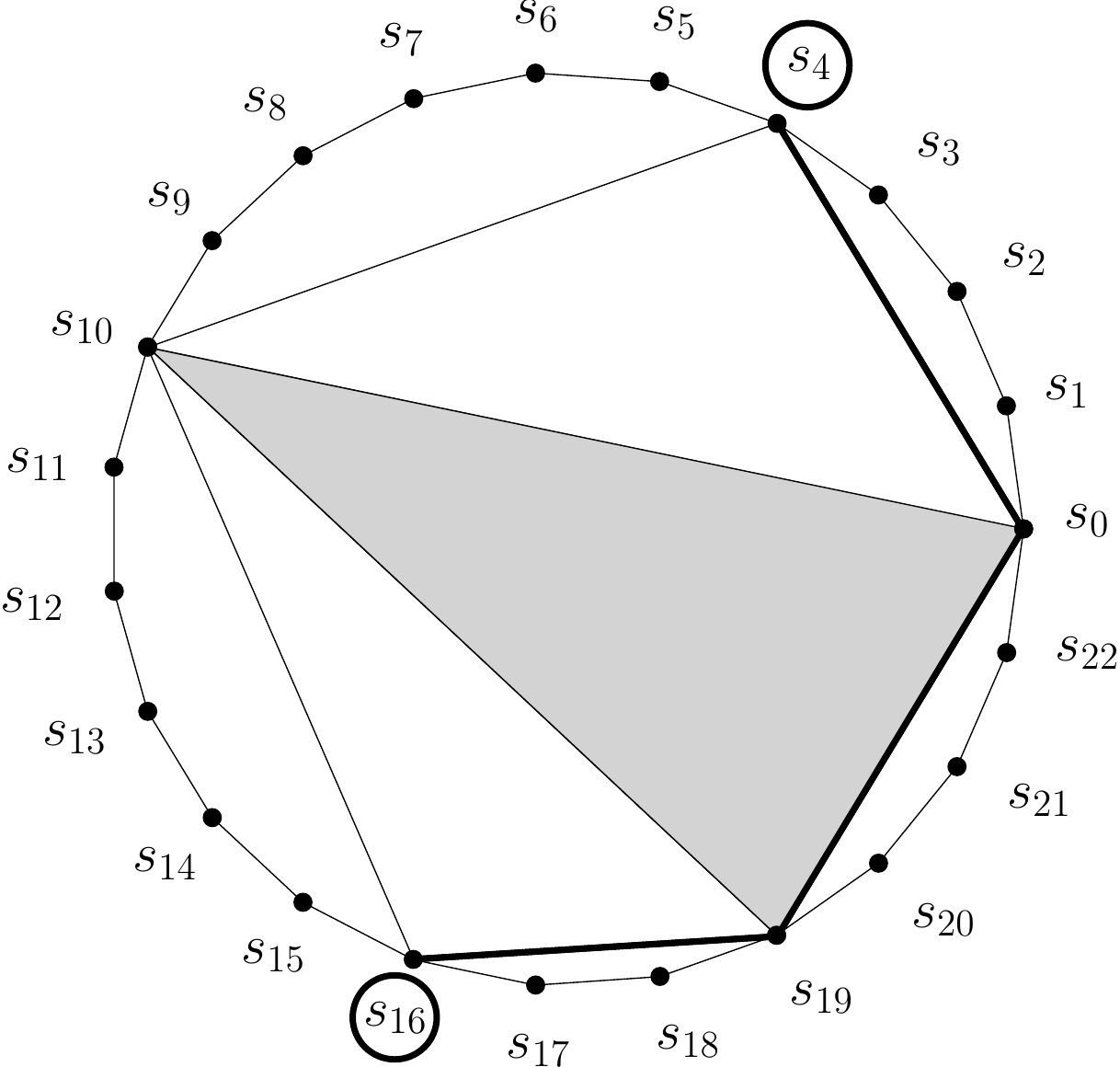}
	\caption{Illustrating \textsc{Case A} from Lemma~\ref{lem:s_23.10}. }
	\label{fig:s23_10A-B}
\end{figure}
	
\textsc{Case A}: The convex hull lengths of the other two
sides of the triangle are $\{10,3\}$, $\{9,4\}$ or $\{8,5\}$. 
Let $\Delta s_0s_{10}s_{20}$, $\Delta s_0s_{10}s_{19}$,  $\Delta s_0s_{10}s_{18}$ 
be the required triangles, respectively;
refer to Fig.~\ref{fig:s23_10A-B} and Fig.~\ref{fig:s23_10C-D}\,(left). Then, 
\begin{align*}
\delta(s_{4},s_{16}) &\geq
\frac{\min(|\rho(4,0,20,16)|,|\rho(4,0,19,16)|,|\rho(4,0,18,16)|,|\rho(4,10,16)|)}{|s_{4}s_{16}|}\\
&\geq \min(f(4,3,4),f(4,4,3),f(4,5,2),f(6,6)) = f(2,4,5)= 1.4237\ldots
\end{align*}

\begin{figure}[hbtp]
	\centering
	\includegraphics[scale=0.5]{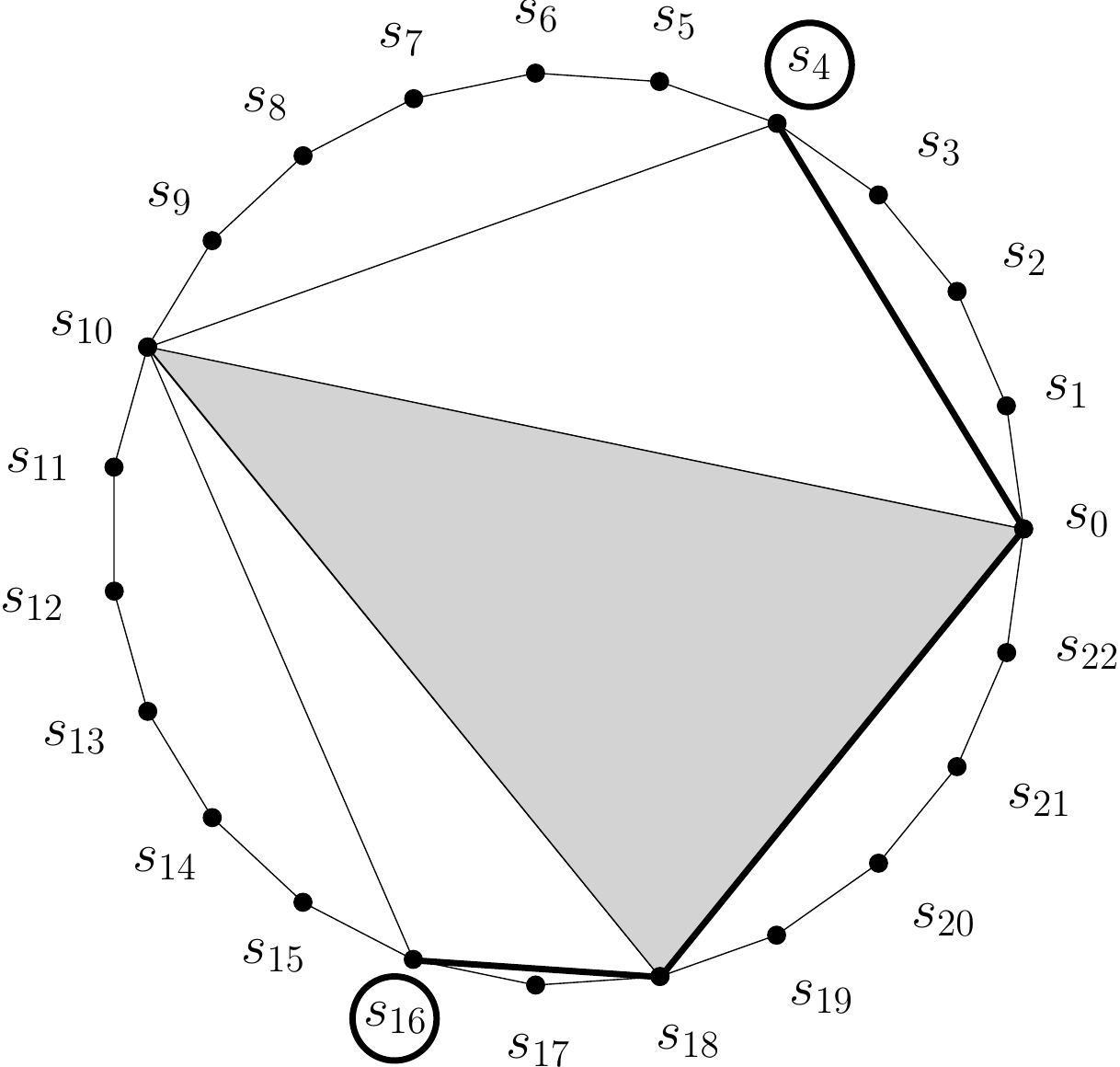}
	\hspace{15mm}
	\includegraphics[scale=0.5]{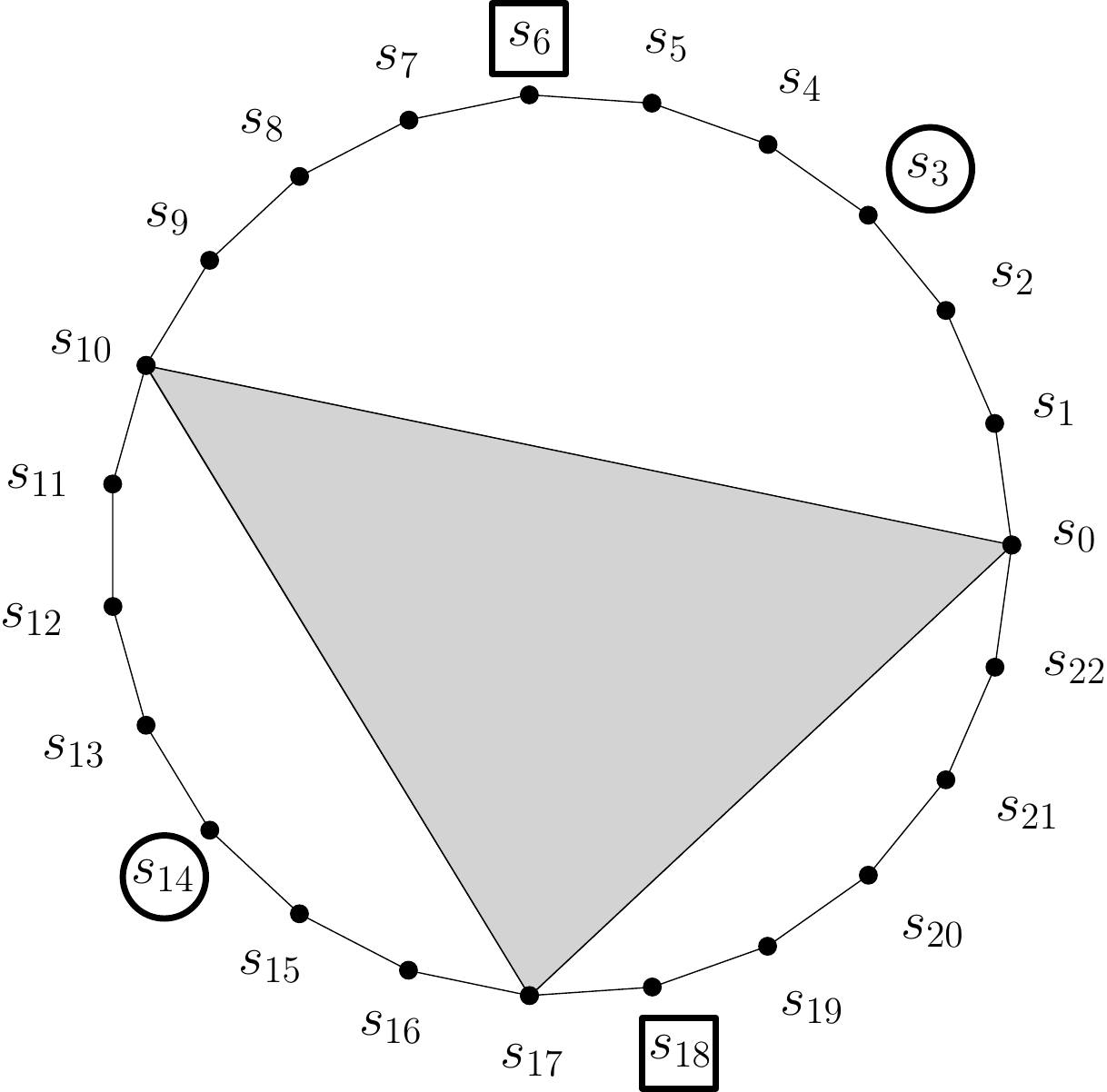}
	\caption{Illustrating \textsc{Case A}\,(left) and \textsc{Case B}\,(right)
		from Lemma~\ref{lem:s_23.10}. }
	\label{fig:s23_10C-D}
\end{figure}

\textsc{Case B}:  The convex
hull lengths of the other two sides are $\{7,6\}$. Let
$\Delta s_0s_{10}s_{17}$ be the required triangle; refer to Fig.~\ref{fig:s23_10C-D}\,(right).
In this case, the primary pair is 
$s_{3},s_{14}$ and the secondary pair is $s_6,s_{18}$.
Either $s_{0} \in \pi(s_{3},s_{14})$ or $s_{10} \in \pi(s_{3},s_{14})$.
If $s_{0} \in \pi(s_{3},s_{14})$, then
$\delta(s_{3},s_{14}) \geq |\rho(3,0,17,14)| / |s_{3}s_{14}| \geq 
f(3,6,3) = 1.5312\ldots$ Thus, we assume that $s_{10} \in\pi(s_{3},s_{14})$.

Similarly, for the pair $s_6,s_{18}$, either $s_{10} \in \pi(s_6,s_{18})$
or $s_0 \in \pi(s_6,s_{18})$. If $s_{10} \in \pi(s_6,s_{18})$, then
$\delta(s_6,s_{18})  \geq |\rho(6,10,17,18)| / |s_6s_{18}|\geq f(4,7,1) = 1.4761\ldots$
Thus, assume that $s_0 \in \pi(s_6,s_{18})$.

Now, if at least one of $s_5, s_6$, $s_7$, or $s_8$ is in $\pi(s_3,s_{14})$, then 
\begin{align*}
  \delta(s_3,s_{14}) &\geq
  \frac{\min(|\rho(3,5,10,14)|, |\rho(3,6,10,14)|, |\rho(3,7,10,14)|,|\rho(3,8,10,14)|)}{|s_3s_{14}|} \\
  &\geq \min(f(2,5,4),f(3,4,4),f(4,3,4),f(5,2,4)) = f(2,4,5) = 1.4237\ldots
\end{align*}

Otherwise, one of $s_3s_9, s_3s_{10}, s_4s_9$, or $s_4s_{10}$ must be in $\pi(s_3,s_{14})$, and
\begin{align*}
	\delta(s_6,s_{18}) &\geq \frac{\min(|\rho(6,3,0,18)|, |\rho(6,4,0,18)|)}{|s_6s_{18}|} \\
	&\geq \min(f(3,3,5),f(2,4,5)) =  f(2,4,5) = 1.4237\ldots \qedhere
\end{align*} 
\end{proof}

\begin{lemma}{\label{lem:s_23.11}}
If $\ell=11$, then $\delta(T) \geq f(2,4,5)= 1.4237\ldots$
\end{lemma}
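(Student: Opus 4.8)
The plan is to reproduce, for $\ell=11$, the obstacle-and-detour scheme of Lemmas~\ref{lem:s_23.8}--\ref{lem:s_23.10}, exploiting that $\mu\le 11$ makes $s_0s_{11}$ not merely a longest chord but one of the maximum possible convex hull length. By the rotational symmetry of $S$ I assume $s_0s_{11}$ is a longest chord. The one structural fact I rely on throughout is a separation property: since no edge of a plane graph crosses $s_0s_{11}$, every path joining a vertex of $\up(s_0s_{11})=\{s_1,\ldots,s_{10}\}$ to a vertex of $\lo(s_0s_{11})=\{s_{12},\ldots,s_{22}\}$ must pass through $s_0$ or through $s_{11}$. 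I then fix the triangle $\Delta s_0s_{11}s_m$ on the $\lo$ side of the chord; its other two sides have convex hull lengths $23-m$ and $m-11$, and its edges $s_0s_m$, $s_{11}s_m$ split $\lo$ into a pocket adjacent to $s_{11}$ and one adjacent to $s_0$. Using the reflection $s_i\mapsto s_{11-i\bmod 23}$, which fixes $s_0s_{11}$ and preserves $\up$ and $\lo$, it suffices to treat $m\in\{12,\ldots,17\}$, i.e. the side-length pairs $\{11,1\},\{10,2\},\{9,3\},\{8,4\},\{7,5\},\{6,6\}$.

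As in the earlier lemmas I would work with length-$11$ transversal pairs and their dichotomy of crossing at $s_0$ or at $s_{11}$. For a suitable primary pair the route crossing at the endpoint lying on the far side of the apex is forced to bend around $s_m$, and its convex-hull-length decomposition lands comfortably above the target (values such as $f(3,6,3)=1.5312\ldots$ or $f(1,5,6)$ appear, just as $f(3,7,2)$ and $f(3,6,3)$ did before). One may then assume the primary crosses at the near endpoint and inspect its sub-path inside $\up$: if it meets a designated middle vertex the resulting routes are $\ge f(2,4,5)$, and otherwise it leaps over them along a chord $s_is_j$, which (for the relevant low endpoints) walls the secondary pair's near vertex off from the appropriate chord-endpoint, forcing its detour through $s_3$ or $s_4$ and giving $\min(f(3,3,5),f(2,4,5))=f(2,4,5)$.

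The hard part is that this transfer genuinely breaks for $\ell=11$, and identifying the repair is the crux. Because $s_0s_{11}$ is a near-diameter, a length-$11$ transversal chord has its $\lo$-endpoint in a pocket bordering $s_0$ or $s_{11}$, so \emph{both} of its crossing routes are cheap: a direct check over all transversal chords (with apex, say, $s_{17}$) shows every one of them has shortest-route value strictly below $f(2,4,5)=1.4237\ldots$, with the maxima clustering in the narrow band just underneath it, e.g. $f(2,3,6)=1.4023\ldots$ and $f(1,5,5)=1.4061\ldots$ from Table~\ref{table:f} (and $f(1,4,6)\approx 1.39$). Thus, unlike for $\ell\le 10$, neither the primary nor the walled secondary is individually certified, and indeed one can arrange a single triangulation in which two natural candidate pairs are simultaneously cheap. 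The plan at this point is therefore to make the argument global rather than local: use planarity to show that the edge sets realizing the cheap routes of the various transversal chords are mutually incompatible — the edges that cheapen one chord necessarily obstruct another, forcing some pair up to $f(2,4,5)$. I expect this incompatibility, carried out uniformly across the six apex positions and the finitely many admissible blocking edges while steering every branch clear of the sub-threshold tuples $f(4,7),f(5,6),f(2,3,6),f(1,5,5)$, to be the main obstacle; it is precisely the kind of exhaustive finite verification that the subsequent computer search both confirms and sharpens to $1.4308\ldots$.
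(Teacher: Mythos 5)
Your setup is sound as far as it goes: the separation property (every $\up$--$\lo$ path passes through $s_0$ or $s_{11}$), the reflection $s_i\mapsto s_{11-i}\pmod{23}$, and the enumeration of apex positions are all correct, and the obstruction you run into is real --- with the apex on the $\lo$ side (e.g.\ $\Delta s_0s_{11}s_{17}$) the detour values do drop below the target, landing on sub-threshold tuples such as $f(2,3,6)=1.4023\ldots$ and $f(1,5,5)=1.4061\ldots$ But you stop exactly where the proof has to happen. The repair you failed to find is not a ``global incompatibility'' argument at all: since $s_0s_{11}$ is an internal chord, it bounds a triangle on \emph{each} side, and one is free to analyze whichever triangle is more convenient. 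The paper takes the apex in $\up(s_0s_{11})$ (explicitly because $\up$ is the smaller side), giving only the five side-length pairs $\{1,10\},\{2,9\},\{3,8\},\{4,7\},\{5,6\}$, and there the local primary/secondary scheme goes through: the primary pair is always $s_6,s_{18}$, with $s_{11}\in\pi(s_6,s_{18})$ forcing $f(5,7)=1.4514\ldots$; assuming $s_0\in\pi(s_6,s_{18})$, the pairs $\{2,9\},\{3,8\},\{4,7\}$ are killed outright by $\min(f(4,2,5),f(3,3,5),f(2,4,5))=f(2,4,5)$, while $\{1,10\}$ and $\{5,6\}$ split on whether the path threads the intermediate vertices (four-edge detours giving $f(1,1,4,5)=1.4263\ldots$) or leaps over them via $s_1s_6$, resp.\ $s_0s_{18}$, in which case the secondary pairs $s_4,s_{15}$, resp.\ $s_8,s_{20}$, yield $f(2,4,5)$, resp.\ $f(3,3,5)$. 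Your reflection argument only ranges over $\lo$-side apexes, so you never see this option.

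As submitted, the proposal is therefore not a proof. The ``plan'' to show that the cheap routes of the various transversal chords are mutually incompatible is never executed --- no pairs, no edge sets, no inequalities --- and the fallback to the computer search is inadmissible here: Lemma~\ref{lem:s_23.11} is one of the four lemmas constituting the \emph{theoretical} proof of $\delta_0(S)\geq f(2,4,5)=1.4237\ldots$, which the paper establishes independently before the program sharpens the bound to $f(2,2,8)=1.4308\ldots$; citing the program to close the lemma makes the theoretical argument circular. Concretely: replace your $\lo$-side case split $\{11,1\},\ldots,\{6,6\}$ by the $\up$-side split above and carry out the finite detour checks; no global argument is needed.
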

\begin{proof} Let $s_0s_{11}$ be the longest chord. Since the size of
  $\up(s_0s_{11})$ is smaller than the size of  $\lo(s_0s_{11})$, we consider
  $\up(s_0s_{11})$ is our analysis. The possible
  convex hull lengths of the other two sides of the triangle with
  base $s_0s_{11}$ and the third vertex in $\up(s_0s_{11})$ are
  $\{1,10\}$, $\{2,9\}$, $\{3,8\}$, $\{4,7\}$, $\{5,6\}$. We consider the
  following cases successively. 

\begin{figure}[hbtp]
	\centering
	\includegraphics[scale=0.5]{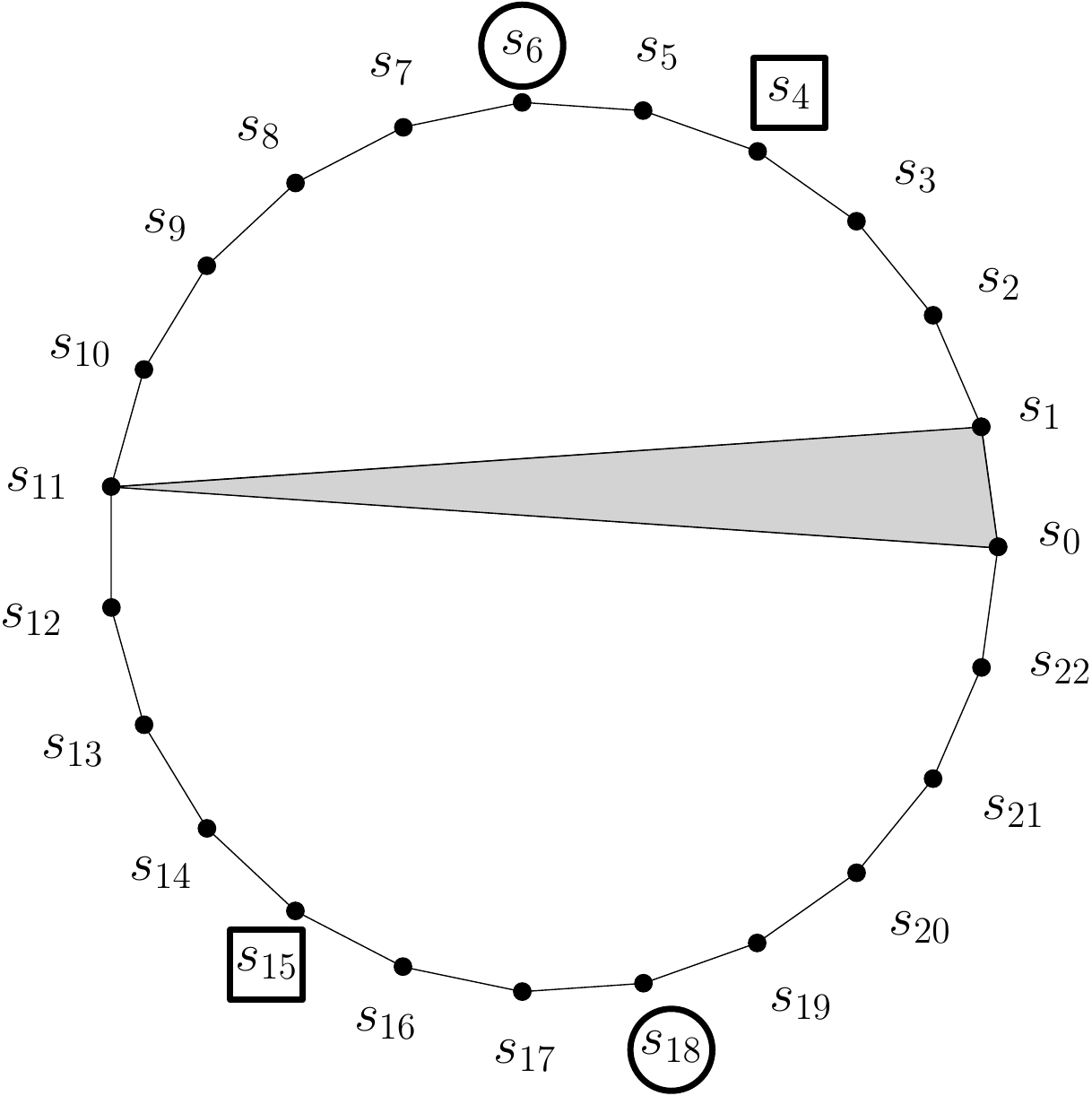}
	\hspace{15mm}
	\includegraphics[scale=0.5]{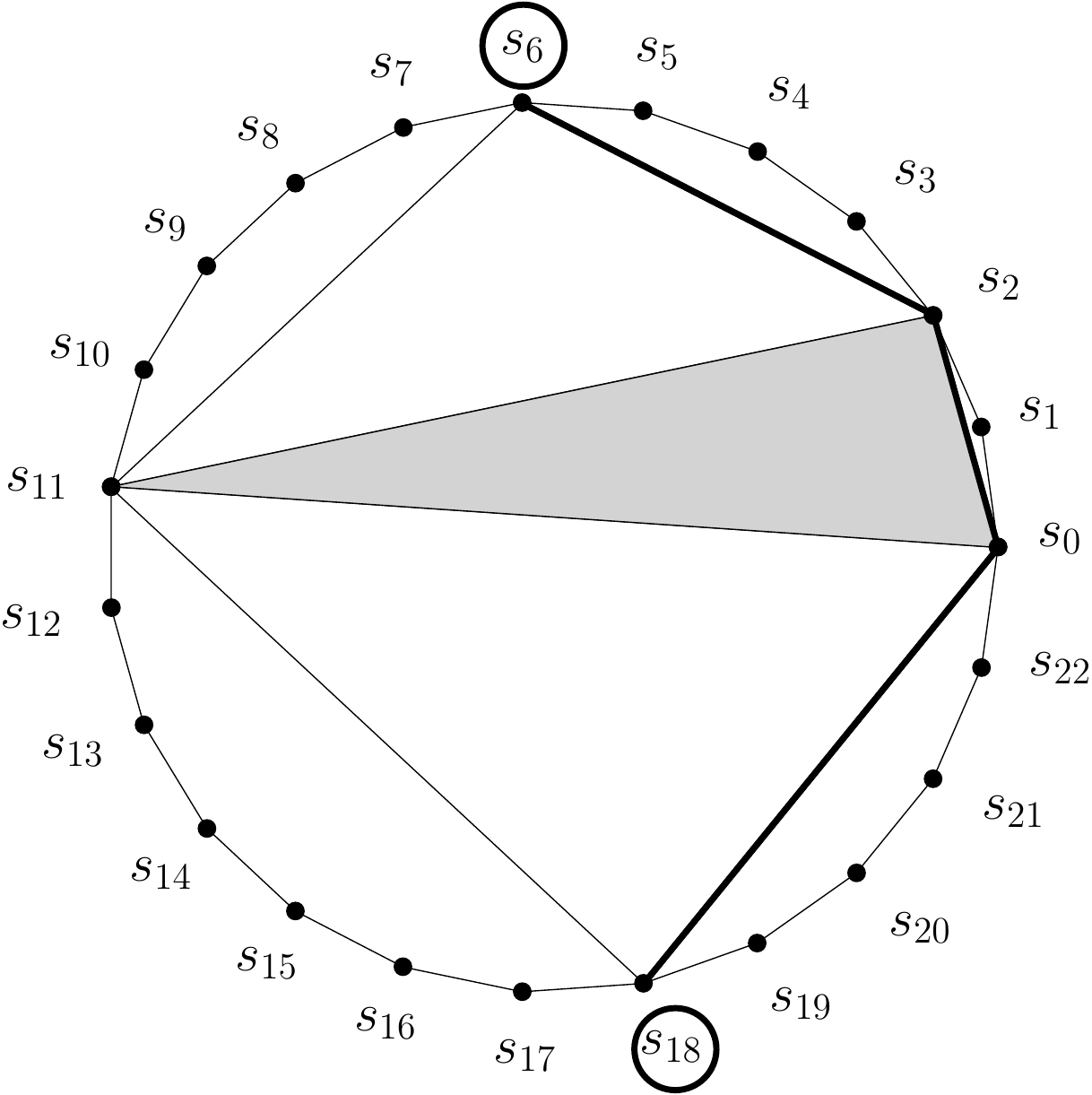}
	\caption{Illustrating \textsc{Case A}\,(left) and \textsc{Case B}\,(right) 
          from Lemma~\ref{lem:s_23.11}. }
	\label{fig:s23_11A1-B}
\end{figure}
	
\textsc{Case A}: The convex
hull lengths of the other two sides are $\{1,10\}$. Let
$\Delta s_0s_1s_{11}$ be the required triangle; refer to Fig.~\ref{fig:s23_11A1-B}\,(left).
In this case, the primary pair is 
$s_6,s_{18}$ and the secondary pair is $s_4,s_{15}$. Consider the pair
$s_{6},s_{18}$. Either $s_{11} \in \pi(s_{6},s_{18})$ or $s_{0} \in \pi(s_{6},s_{18})$.
If $s_{11} \in \pi(s_{6},s_{18})$, then
$\delta(s_{6},s_{18}) \geq |\rho(6,11,18)|  /  |s_{6}s_{18}| \geq
f(5,7) = 1.4514\ldots$ Hence, we assume that $s_{0} \in
\pi(s_{6},s_{18})$. 
\smallskip

If at least one of $s_2, s_3, s_4$, or  $s_5$ is in $\pi(s_6,s_{18})$, then
\begin{align*}
\delta(s_6,s_{18}) &\geq
\frac{\min(|\rho(6,2,1,0,18)|,|\rho(6,3,1,0,18)|,|\rho(6,4,1,0,18)|,|\rho(6,5,1,0,18)|)}{|s_6s_{18}|}\\
&= \min(f(4,1,1,5),f(3,2,1,5),f(2,3,1,5),f(1,4,1,5)) = f(1,1,4,5) =  1.4263\ldots
\end{align*}
Otherwise,  $s_1s_6$ is in $\pi(s_6,s_{18})$, and then 
$$\delta(s_{4},s_{15}) \geq
\frac{\min(|\rho(4,6,11,15)|,|\rho(4,1,0,15)|)}{|s_{4}s_{15}|} \geq \min(f(2,5,4),f(3,1,8)) =
f(2,4,5) = 1.4237\ldots$$ 

\textsc{Case B}: The convex hull lengths of the other two sides are $\{2,9\}, \{3,8\}$ or $\{4,7\}$.
Let $\Delta s_0s_2s_{11}$, $\Delta s_0s_3s_{11}$, $\Delta s_0s_4s_{11}$ be the required triangles,
respectively. Refer to Fig.~\ref{fig:s23_11A1-B}\,(right) and Fig.~\ref{fig:s23_11C-D}\,(left)
for illustrations.

\begin{figure}[hbtp]
	\centering
	\includegraphics[scale=0.5]{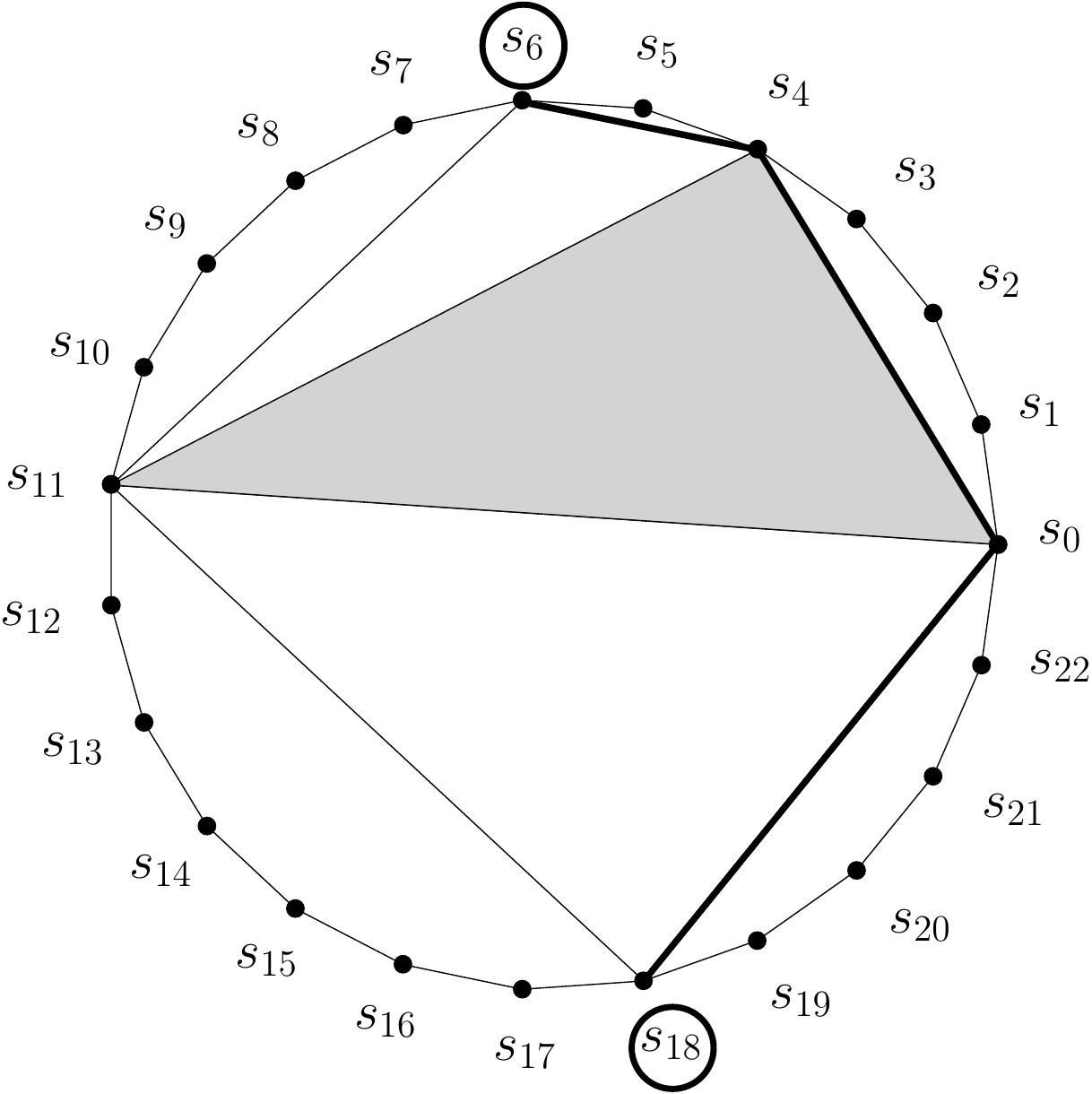}
	\hspace{15mm}
	\includegraphics[scale=0.5]{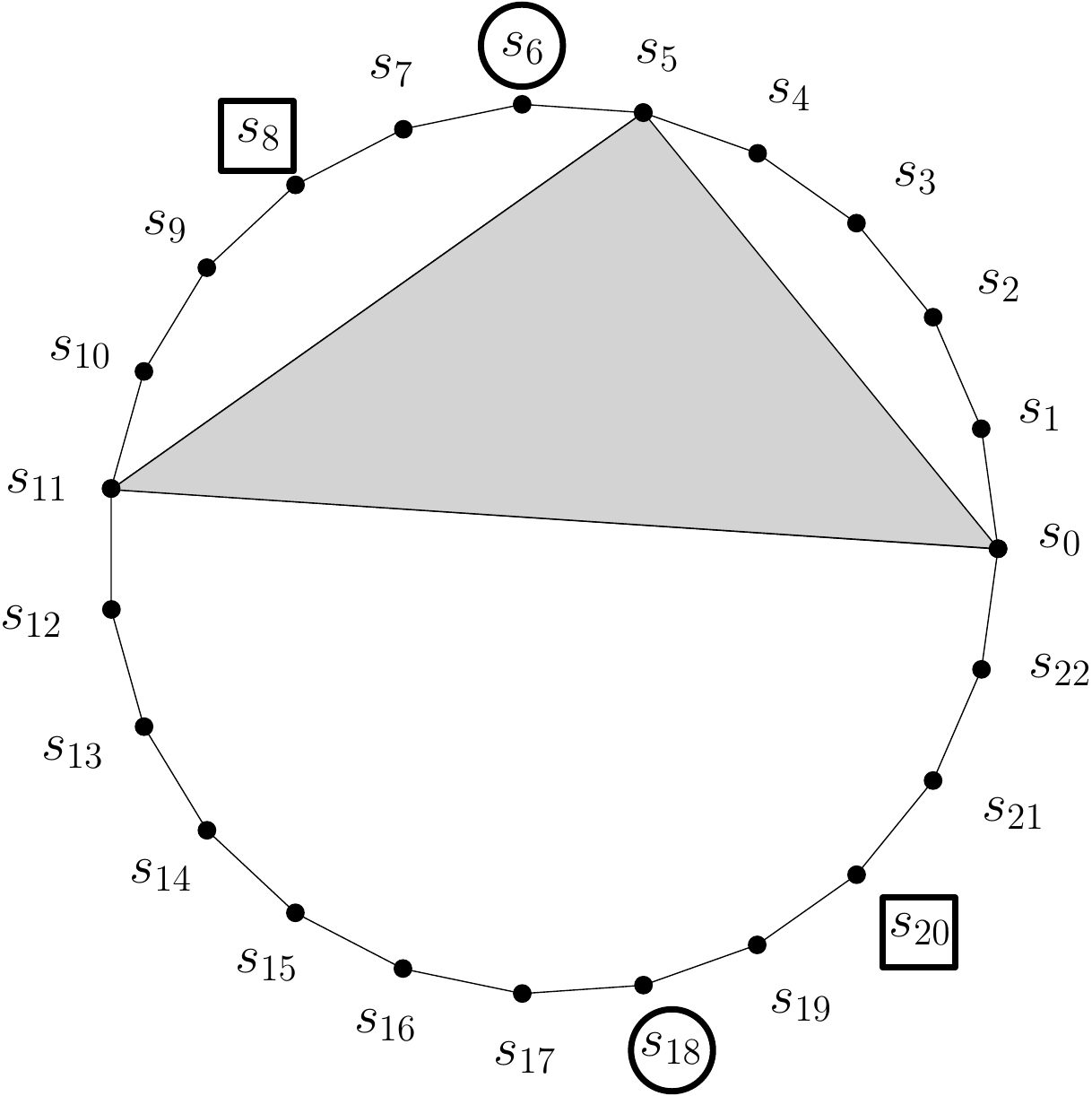}
	\caption{Illustrating \textsc{Case B}\,(left)
          and \textsc{Case C}\,(right) from Lemma~\ref{lem:s_23.11}.}
	\label{fig:s23_11C-D}
\end{figure}

As in the \textsc{Case A}, we may assume that $s_0 \in \pi(s_{6},s_{18})$. Then,
\begin{align*}
  \delta(s_{6},s_{18}) &\geq
  \frac{\min(|\rho(6,2,0,18)|,|\rho(6,3,0,18)|,|\rho(6,4,0,18)| )}{|s_{6}s_{18}|}\\
	&\geq \min(f(4,2,5),f(3,3,5),f(2,4,5)) =  f(2,4,5)= 1.4237\ldots
\end{align*}

\textsc{Case C}: The convex hull lengths of the other two sides are $\{5,6\}$.
Let $\Delta s_0s_5s_{11}$ be the required triangle; refer to Fig.~\ref{fig:s23_11C-D}\,(right).
In this case, the primary pair is $s_6,s_{18}$ and the secondary pair is $s_8,s_{20}$. As
in the \textsc{Case A}, we assume that $s_0 \in \pi(s_{6},s_{18})$. 

Now, if at least one of $s_{19}, s_{20}, s_{21}$, or  $s_{22}$ is in $\pi(s_6,s_{18})$, then
\begin{align*}
\delta(s_6,s_{18}) &\geq
\frac{\min(|\rho(6,5,0,19,18)|,|\rho(6,5,0,20,18)|,|\rho(6,5,0,21,18)|,
  |\rho(6,5,0,22,18)|)}{|s_6s_{18}|}\\
&= \min(f(1,5,4,1),f(1,5,3,2),f(1,5,2,3),f(1,5,1,4)) = f(1,1,4,5) =  1.4263\ldots
\end{align*}

Otherwise, $s_0s_{18}$ is in $\pi(s_6,s_{18})$, and then 
\begin{align*}
\delta(s_{8},s_{20}) &\geq  \frac{\min
	(|\rho(8,5,0,20)|,|\rho(8,11,18,20)|)}{|s_{8}s_{20}|} \\
&\geq \min(f(3,5,3),f(3,7,2)) = f(3,3,5) = 1.4312\ldots \qedhere
\end{align*}
\end{proof}

Putting these facts together yields the main result of this section:
\begin{theorem} \label{thm-s23}
Let $S$ be a set of $23$ points placed at the vertices of a regular $23$-gon. Then
$$ \delta_0(S) = f(2,2,8) =
\left(2 \sin \frac{2 \pi}{23} + \sin \frac{8 \pi}{23}\right)
\bigg/ \sin\frac{11 \pi}{23} = 1.4308\ldots $$
\end{theorem}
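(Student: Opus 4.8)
The plan is to assemble the four case-lemmas into a single theoretical lower bound, then close the remaining gap to the tight value by an exhaustive computation, and finish with an explicit triangulation that matches the bound. First I would invoke Lemma~\ref{lem:ell} to reduce the analysis to the four possible convex hull lengths $\ell \in \{8,9,10,11\}$ of the longest chord in an arbitrary triangulation $T$ of $S$. By Lemmas~\ref{lem:s_23.8}, \ref{lem:s_23.9}, \ref{lem:s_23.10}, and~\ref{lem:s_23.11}, each of these four cases forces $\delta(T) \geq f(2,4,5) = 1.4237\ldots$, so \emph{every} triangulation of $S$ has stretch factor at least $f(2,4,5)$. This already yields the human-checkable lower bound $\delta_0(S) \geq 1.4237\ldots$, improving on Mulzer's $1.4161\ldots$

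To sharpen the bound to the claimed value, I would rely on an exhaustive enumeration of all $C_{21} = 24{,}466{,}267{,}020$ triangulations of the regular $23$-gon. For each triangulation the program computes, for every pair $s_i,s_j$, the minimum-length detour under the convex hull length encoding $g$ of~\eqref{eq:g}, extracts the stretch factor $\delta(T)$ as the maximum over all pairs, and tracks the global minimum over all triangulations. Running over all triangulations confirms that $\min_T \delta(T) = f(2,2,8) = 1.4308\ldots$, which simultaneously gives $\delta_0(S) \geq f(2,2,8)$ and exhibits a minimizing triangulation.

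For an independent verification of the matching upper bound, I would point to the explicit triangulation in Fig.~\ref{fig:s23}\,(right). There the worst pairs are $s_{10},s_{21}$ and $s_6,s_{18}$, each forced onto a detour of convex hull lengths $\{2,2,8\}$ and hence of stretch factor exactly $f(2,2,8)$, while every remaining pair has stretch factor at most this value. Thus $\delta_0(S) \leq f(2,2,8)$, and combining the two directions yields $\delta_0(S) = f(2,2,8) = 1.4308\ldots$ as claimed.

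The main obstacle is the gap between the theoretical bound $f(2,4,5) = 1.4237\ldots$ and the true value $f(2,2,8) = 1.4308\ldots$. The obstacle-based arguments of the four lemmas reason about one forced long detour at a time, using a primary pair supplemented by a secondary pair, and the configurations they isolate around the longest chord can only guarantee $f(2,4,5)$. Reaching $f(2,2,8)$ requires accounting for the joint effect of several long chords acting as simultaneous obstacles on the same detour, and capturing this interaction uniformly across all $C_{21}$ triangulations appears to be beyond a short combinatorial argument, which is why the exhaustive search is invoked for the tight bound.
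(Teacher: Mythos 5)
Your proposal matches the paper's proof essentially step for step: the four case lemmas (via Lemma~\ref{lem:ell}) give the theoretical bound $f(2,4,5)=1.4237\ldots$, the exhaustive enumeration of all $C_{21}$ triangulations (the paper uses a parallel C\texttt{++} program based on the algorithm of Parvez~\etal) raises the lower bound to $f(2,2,8)$, and the explicit triangulation of Fig.~\ref{fig:s23}\,(right) supplies the matching upper bound. The argument is correct and takes the same route as the paper, including your closing observation that the hand analysis cannot capture the joint effect of multiple long chords, which is precisely why the paper, too, delegates the tight value to the computation.
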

\begin{proof}
  By Lemmas~\ref{lem:s_23.8}-\ref{lem:s_23.11},
  we conclude that  $\delta_0(S) \geq f(2,4,5)=
  (\sin\frac{2\pi}{23}+\sin\frac{4\pi}{23}+\sin\frac{5\pi}{23})/\sin\frac{11\pi}{23}
  =1.4237\ldots$ On the other hand,
the triangulation of $S$ in Fig.~\ref{fig:s23}\,(right) has stretch factor
$f(2,2,8) =1.4308\ldots$
and thus $f(2,4,5)=1.4237\ldots \leq \delta_0(S) \leq f(2,2,8) = 1.4308\ldots$

A parallel C\texttt{++} program\footnote{Refer to the ${\tt .cpp}$ file
  and the Appendix within the source at \url{arXiv:1509.07181}.}
that generates all triangulations of $S$ based on a low memory algorithm
by Parvez~\etal~\cite[Section 4]{PRN11}
shows that each of the $C_{21}$ triangulations has stretch factor at least $f(2,2,8)$.
We thereby obtain the following final result: $\delta_0(S) = f(2,2,8) =
(2\sin\frac{2\pi}{23}+\sin\frac{8\pi}{23})/\sin\frac{11\pi}{23}=1.4308\ldots$
\end{proof}

\paragraph{Remarks.} 
Using the program we have also checked that the next largest stretch factor
among all triangulations is $f(3,3,5)= 1.4312\ldots$, and further that 
there is no triangulation of $S$ that has stretch-factor $<1.4312$ other than
$f(2,2,8)$. Thus, the result in Theorem~\ref{thm-s23} is not affected by floating-point
precision errors.
\begin{figure}[htpb]
	\centering 
	\includegraphics[scale=0.52]{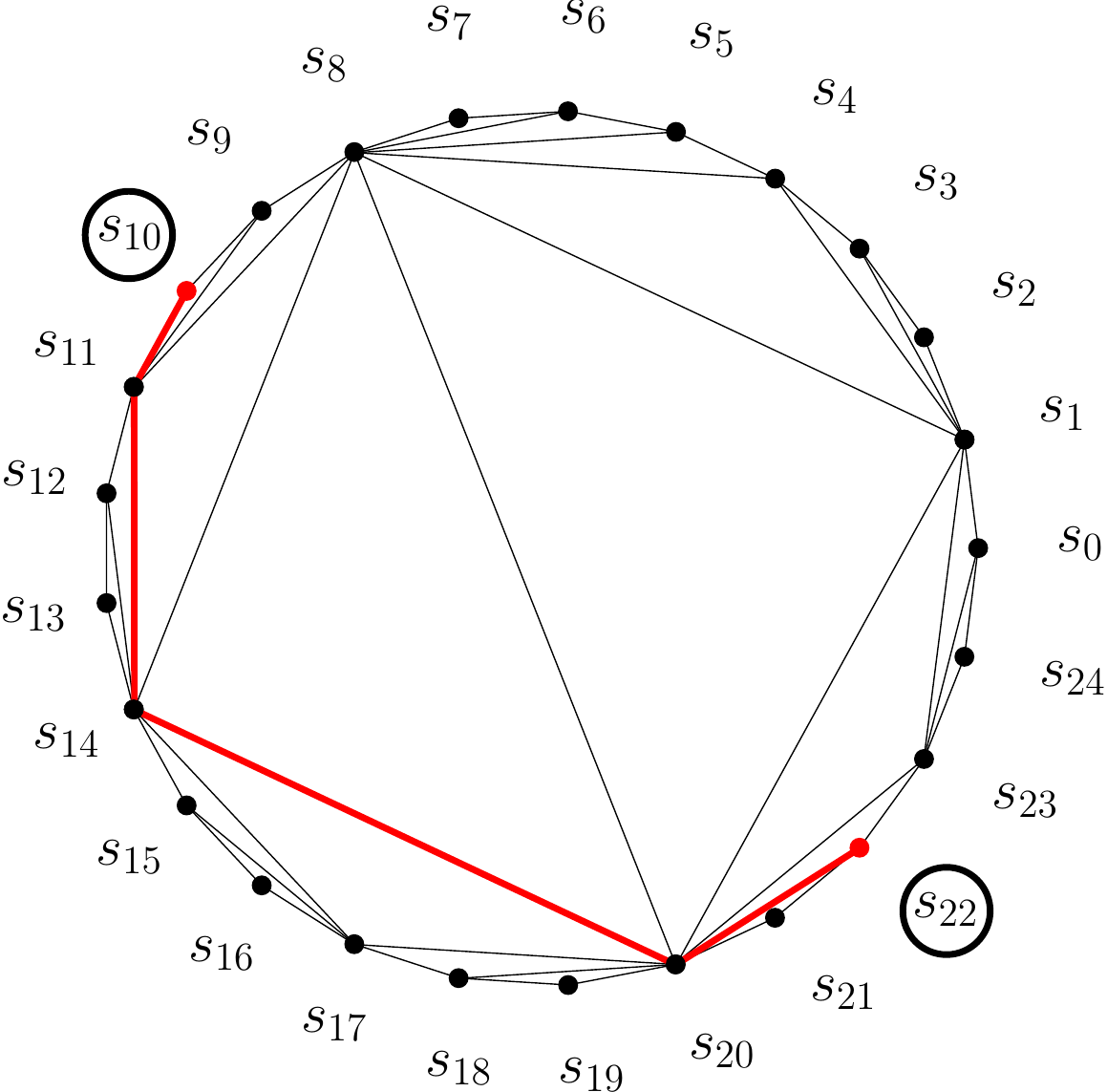}
	\hspace{15mm}
	\includegraphics[scale=0.52]{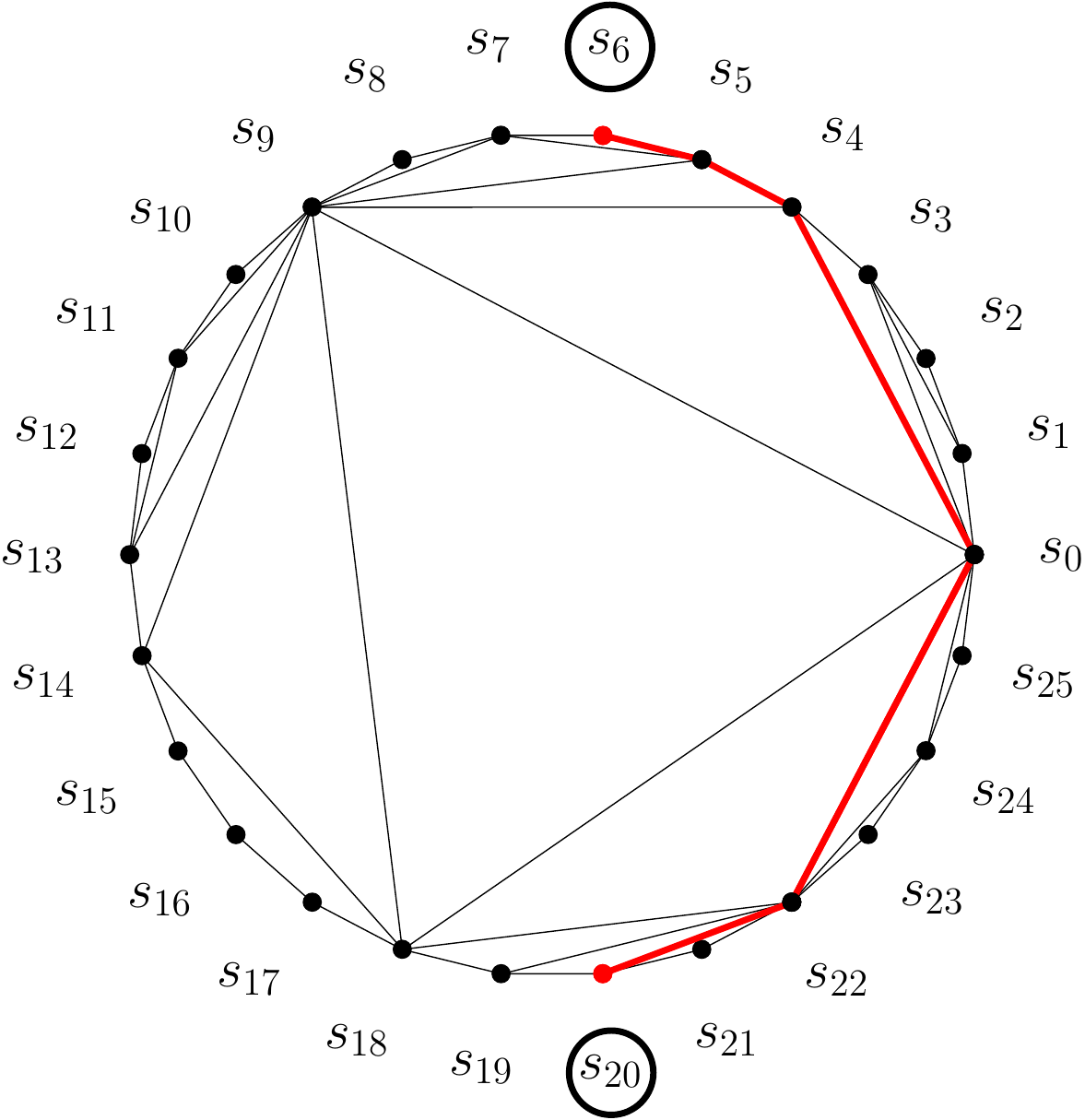}
	\caption{Triangulations of $S_{25}$ and $S_{26}$ with stretch factors 
	  $< 1.4296$ and $<1.4202$, respectively. Worst stretch factor pairs are marked
          in circles and the corresponding shortest paths are shown in red.}
	\label{fig:table}
\end{figure}

Let $S_n$ denote the set of points placed at the vertices of a regular $n$-gon.
Using a computer program, Mulzer obtained the values $\delta_0(S_n)$
for $4 \leq n \leq 21$ in his thesis~\cite[Chapter 3]{Mu04}. 
Using our C\texttt{++} program, we confirmed the previous values
and extended the range up to $n=24$: $\delta_0(S_{22}) =1.4047\ldots$,
$\delta_0(S_{24}) =  1.4013\ldots$ and somewhat surprisingly,
$\delta_0(S_{23}) =  1.4308\ldots$ By upper bound constructions,
it follows that $\delta_0(S_{25}) < 1.4296$ and $\delta_0(S_{26}) < 1.4202$;
see Fig.~\ref{fig:table}.
Observe that $\delta_0(S_n)$ does not exhibit a monotonic behavior;
see Table~\ref{table:Sn}. 

\begin{table}[htpb]
\begin{center}
\begin{tabular}{||c|c||c|c||c|c||}
	\hline
	$n$ & $\delta_0(S_n)$ & $n$ & $\delta_0(S_n)$ & $n$ & $\delta_0(S_n)$\\
	\hline
	4 & $1.4142\ldots$ & 12 & $1.3836\ldots$ & 20 & $1.4142\ldots$  \\
	5 & $1.2360\ldots$ & 13 & $1.3912\ldots$ & 21 & $1.4161\ldots$\\
	6 & $1.3660\ldots$ & 14 & $1.4053\ldots$ & 22 & $1.4047\ldots$\\
	7 & $1.3351\ldots$ & 15 & $1.4089\ldots$ & \textbf{23} & $\textbf{1.4308}\ldots$\\
	8 & $1.4142\ldots$ & 16 & $1.4092\ldots$ & 24 & $1.4013\ldots$\\
	9 & $1.3472\ldots$ & 17 & $1.4084\ldots$ & 25 & $< 1.4296$\\
        10 & $1.3968\ldots$ & 18 & $1.3816\ldots$& 26 & $< 1.4202$\\
	11 & $1.3770\ldots$ & 19 & $1.4098\ldots$& & \\
	\hline
\end{tabular}
\end{center}
\caption{The values of $\delta_0(S_n)$ for $n=4,\ldots,26$.}
\label{table:Sn}
\end{table}

\section{Lower bounds for the degree~$3$ and~$4$ dilation}
\label{sec:deg3and4}

In this section, we provide lower bounds for the worst case degree~$3$ and~$4$ dilation
of point sets in the Euclidean plane. We begin with degree~$3$ dilation. We first present
a set $P$ of $n=13$ points (a section of the hexagonal lattice with six boundary points removed)
that has $\delta_0(P,3) \geq 1+\sqrt{3}$ and then extend $P$ to achieve this lower bound
for any $n>13$. 

\begin{theorem} \label{thm:deg3}
For every $n\geq13$, there exists a set $S$ of $n$ points such
that $\delta_0(S,3) \geq 1+\sqrt{3}=2.7321\ldots$ The inequality is tight
for the presented sets.
\end{theorem}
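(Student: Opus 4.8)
The plan is to establish the lower bound $\delta_0(S,3) \geq 1+\sqrt{3}$ on a carefully chosen $13$-point configuration, and then show that the configuration can be padded out to any larger size $n$ without decreasing the dilation, while verifying tightness by exhibiting a matching degree-$3$ plane spanner. I would take $P$ to be the described section of the hexagonal (triangular) lattice with six boundary points removed; concretely, a central point surrounded by its six lattice neighbors forming a regular hexagon, together with a second ring of six points, from which the six corner (or edge-midpoint) points are deleted so that $|P|=13$. The crucial geometric feature to exploit is that in the hexagonal lattice every vertex has $6$ natural lattice neighbors at unit distance, but a degree-$3$ graph can retain at most half of them, forcing a long detour for at least one close pair. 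I would first fix coordinates for all $13$ points and identify a designated pair $u,v$ (most naturally a central vertex and one of its lattice neighbors, at distance $1$) whose stretch factor I will lower-bound.

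The main argument is a case analysis driven by the degree constraint. First I would argue that, because every vertex has degree at most $3$ and the graph is plane, the central vertex $c$ cannot be joined to all six of its unit-distance neighbors; at most three of the six short incident edges can be present. I would then show that for the designated close pair $u,v$ at Euclidean distance $1$, any plane degree-$3$ path avoiding the missing short edge must route around through the next ring, and the shortest such detour has length exactly $1+\sqrt{3}$ times $|uv|$ (the $\sqrt{3}$ arising as the lattice distance between second-ring points and the $1$ as the residual short hop). The heart of the proof is verifying that no matter which three edges the degree budget allocates at the relevant vertices, some close pair is forced onto a path of length at least $(1+\sqrt3)|uv|$; this is where I would enumerate the admissible local edge sets (up to the symmetry group of the hexagon, which cuts the cases substantially) and check each one produces a detour of the claimed length or longer.

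To extend the bound to every $n \geq 13$, I would append the extra $n-13$ points far away from $P$ — for instance, placed so that each added point attaches to the existing structure by a single edge and sits at distance much larger than the diameter of $P$, so that the added vertices neither create new short pairs with small stretch nor provide any shortcut between the original $13$ points. Since the worst-case pair remains inside $P$ and the added points cannot lower its stretch factor, $\delta_0(S,3) \geq 1+\sqrt3$ persists. Finally, for tightness I would exhibit an explicit plane spanning graph of maximum degree $3$ on $S$ realizing stretch factor exactly $1+\sqrt3$, confirming that the inequality cannot be improved for these sets; the natural candidate is the degree-$3$ subgraph of the lattice triangulation that keeps alternate short edges around each vertex.

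The step I expect to be the main obstacle is the exhaustive-yet-clean case analysis at the center: one must be certain that \emph{every} plane degree-$3$ assignment of edges around the critical vertices leaves some unit-distance pair with no path shorter than $1+\sqrt3$, and that the planarity constraint (not merely the degree bound) is used correctly to rule out crossing shortcuts. Organizing the cases by the symmetry of the hexagonal configuration, and pinning down that the only competing short detour has length exactly $\sqrt3 + 1$, will be the delicate part; everything else (coordinates, the extension to larger $n$, and the matching upper-bound construction) is routine once the core local obstruction is nailed down.
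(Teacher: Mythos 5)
Your proposal follows essentially the same route as the paper's proof: the identical $13$-point hexagonal-lattice section (center, unit hexagon, and the six collinear second-ring points at distance $2$), the same pigeonhole observation that degree $3$ forbids the center from keeping all six unit edges, the same symmetry-reduced case analysis forcing some unit-distance pair onto a detour of length $1+\sqrt{3}$ (with $\sqrt{3}$ realized by pairs such as $p_2p_3$ and $p_1p_5$), the same padding by far-away collinear points for $n>13$, and the same tightness construction keeping alternate short edges. The case enumeration you defer is exactly what the paper executes --- splitting on $\deg(p_0)\in\{1,2,3\}$ and on which inner hexagon edges survive --- so your outline is sound and matches the published argument.
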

\begin{proof}
Let $P=\{p_0\} \cup P_1 \cup P_2$ be a set of 13 points as
shown in Fig.~\ref{fig:P}\,(left) where $P_1 = \{p_1,p_3,p_5,p_7,p_9,p_{11}\}$
and $P_2 = \{p_2,p_4,p_6,p_8,p_{10},p_{12}\}$.
\begin{figure}[hbtp]
	\centering
	\includegraphics[scale=0.35]{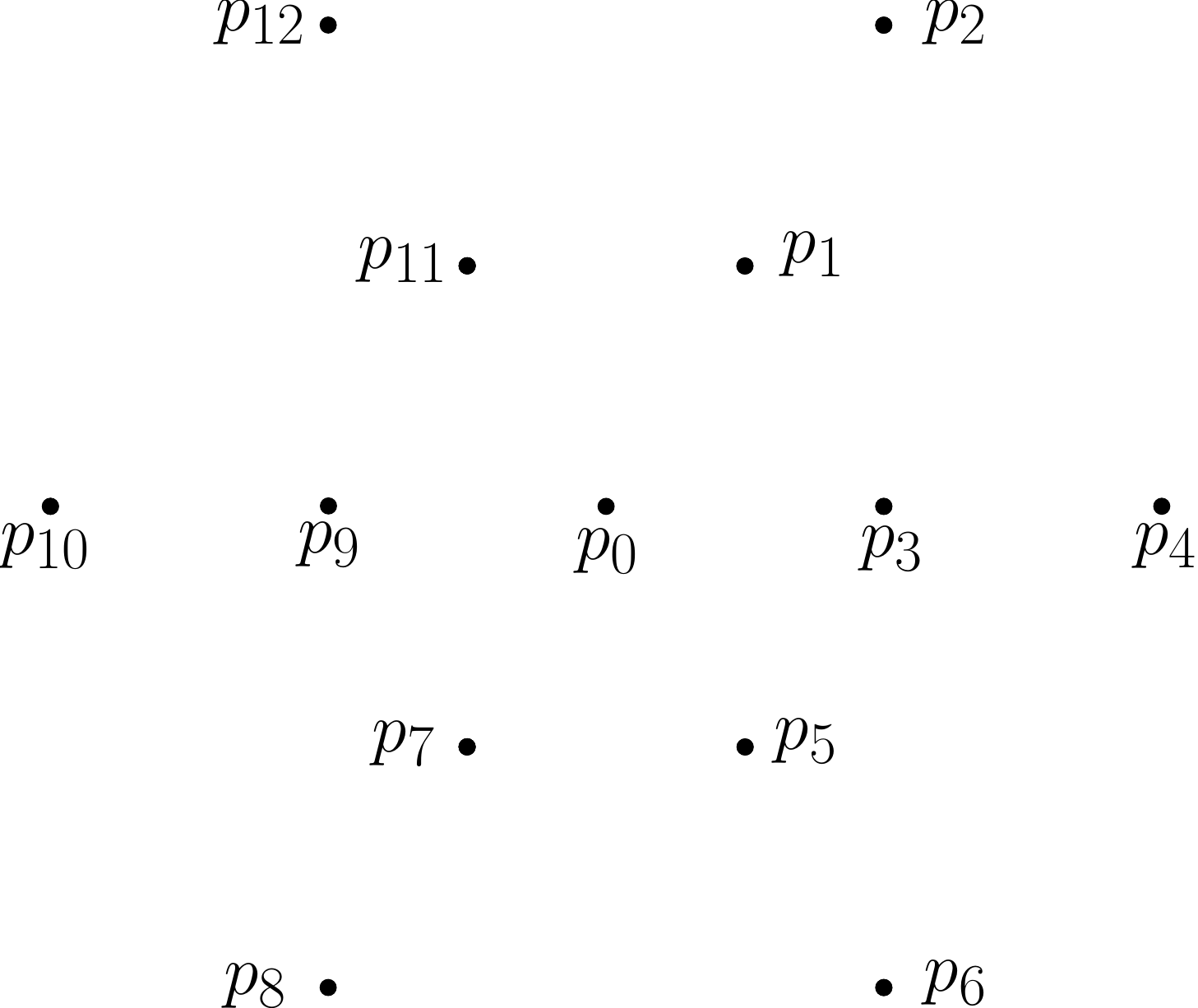}
	\hspace{15mm}
	\includegraphics[scale=0.35]{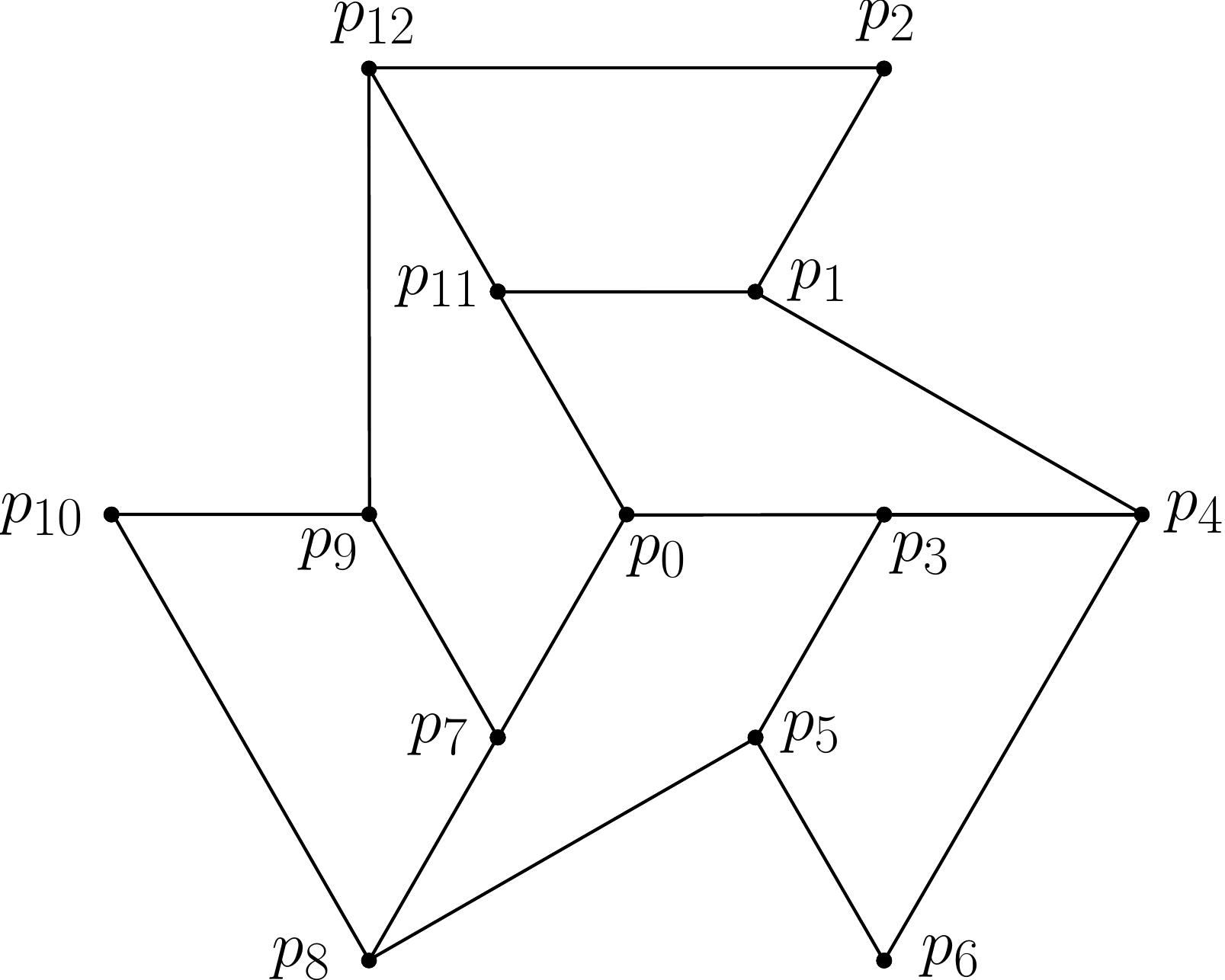}
	\caption{Left: the point set $P=\{p_0,p_1,\ldots,p_{12}\}$;
		some pairwise distances are: $|p_2 p_{12}|=2$,
		$|p_2 p_{3}|=|p_1 p_{5}|=|p_1 p_{12}|=\sqrt{3}$.
		Right: a plane degree~$3$ geometric spanner on $P$ with
		stretch factor  $1+\sqrt{3}$,
		which is achieved by the detours for the point pairs
		$\{p_1,p_3\}$, $\{p_5,p_7\}$ and $\{p_9,p_{11}\}$.} 
	\label{fig:P}
\end{figure}
The points in $P_1$ and $P_2$ lie on the vertices of two regular homothetic hexagons
centered at $p_0$ of radius 1 and 2 respectively. Furthermore, the
points in each of the sets $\{p_2,p_1,p_0,p_7,p_8\}$,
$\{p_4,p_3,p_0,p_9,p_{10}\}$ and $\{p_{12},p_{11},p_0,p_5,p_6 \}$ are collinear. 

We show that $\delta_0(P,3) \geq 1+\sqrt{3}$. Since no edge can contain
a point in its interior,
the point $p_0$ can have connecting edges only with the points from $P_1$.
First, assume that the six edges in
$E = \{p_1p_3, p_3p_5, p_5p_7,p_7p_9, p_9p_{11}, p_{1}p_{11}\}$
are present (see Fig.~\ref{fig:P1}\,(left)).

\begin{figure}[hbtp]
	\centering
	\includegraphics[scale=0.35]{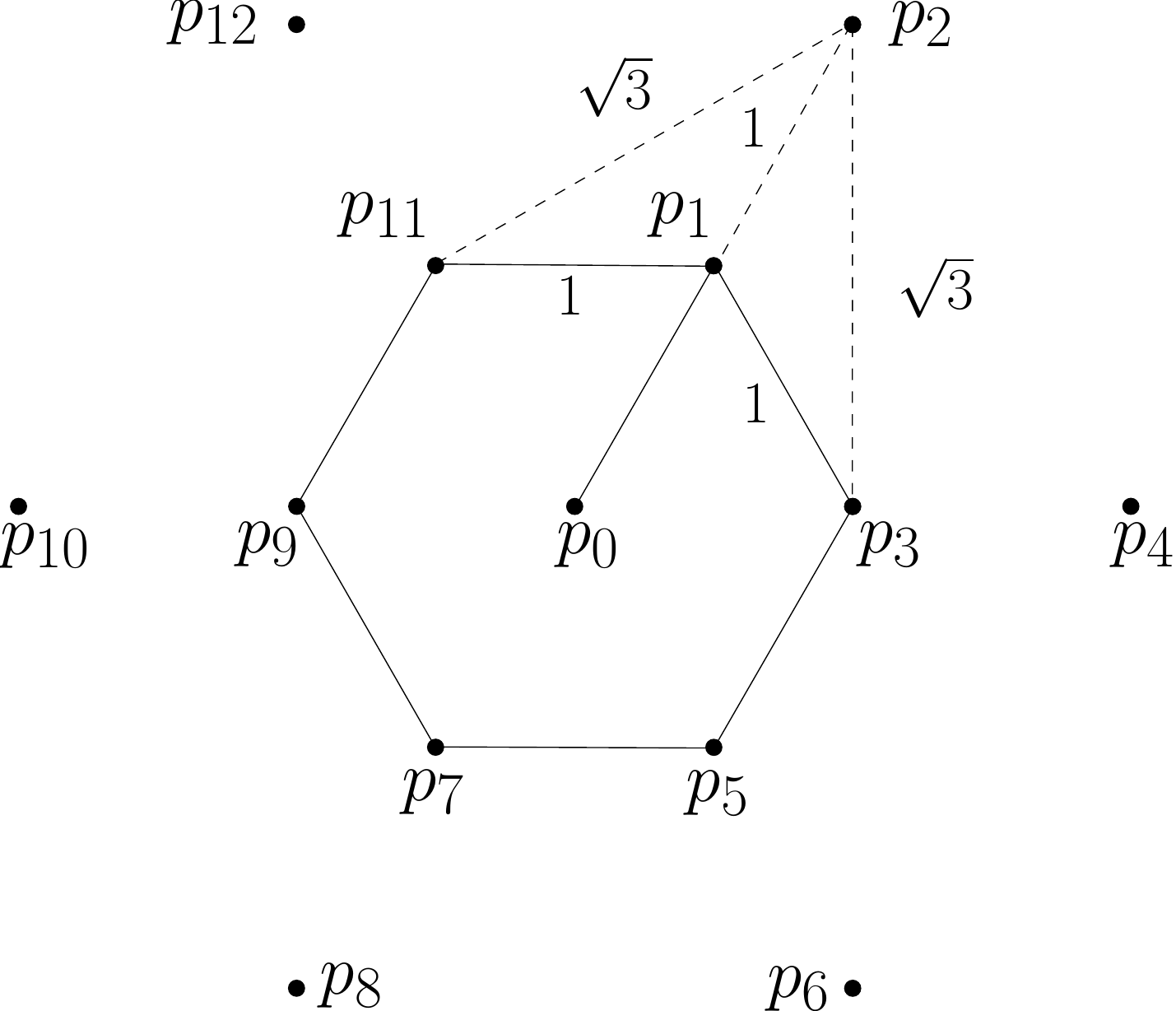}
	\hspace{15mm}
	\includegraphics[scale=0.35]{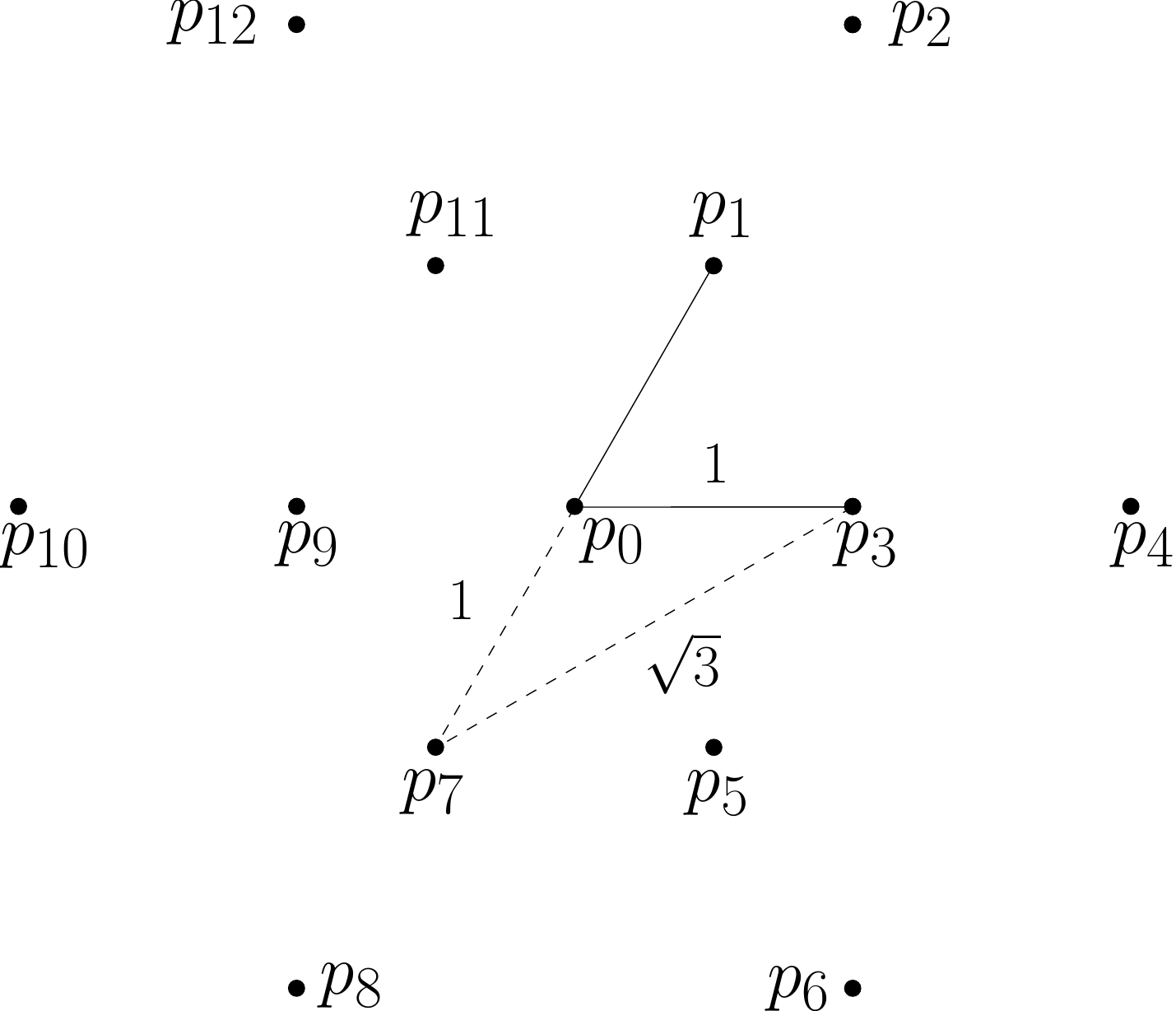}
	\caption{Left: all edges in $E$ are present. Right: \textsc{Case~B}.}
	\label{fig:P1}
\end{figure}

We can also assume that the edge $p_0p_1$ is present since $p_0$ must be connected
to at least one of the points in $P_1$. Observe that now
$\deg(p_1)=3$. In this case,
$$\delta(p_1,p_{2}) \geq \frac{|\rho(1,i,2)|}{ |p_1p_2|} \geq 1 +\sqrt{3},
\text{ where } i \in \{3,11\}. $$  
	Now assume that an edge in $E$, say $p_1p_3$, is missing. Then,
        the following three cases arise depending on $\deg(p_0) \in \{1,2,3\}$. 

\smallskip
\textsc{Case A}:  If $\deg(p_0) = 1$,
          then $$\delta(p_1,p_3) \geq  \frac{|\rho(1,i,3)|}{|p_1p_3|}
          \geq 1+\sqrt{3} \text{ where } i \in \{2,5,4,11\}.$$ 

\smallskip
\textsc{Case B}:  If $\deg(p_0) = 2$, consider the edges
          $p_0p_1, p_0p_3$; see Fig.~\ref{fig:P1}\,(right).
          If $p_0p_1, p_0p_3$ are present
          $\delta(p_0,p_7) \geq |\rho(0,3,7)| / |p_0p_7| =
          1+\sqrt{3}$ else if at least one edge in $\{p_0p_1,
          p_0p_3\}$ is absent then since $p_1p_3$ is absent,
          $\delta(p_1,p_3) \geq  1+\sqrt{3}$ by the same analysis as
          in \textsc{Case} A. 

\smallskip
\textsc{Case C}:  If $\deg(p_0) = 3$, then if at least
           one of the edges $p_0p_1, p_0p_3$ is absent,
           $\delta(p_1,p_3) \geq  1+\sqrt{3}$ as shown in
           \textsc{Case} A. Thus, assume that $p_0p_1,p_0p_3$ are
           present. Now, the following two non-symmetric cases will
           arise. Either $p_0p_5$ is present or $p_0p_7$ is present. 
\begin{figure}[hbtp]
\centering
\includegraphics[scale=0.35]{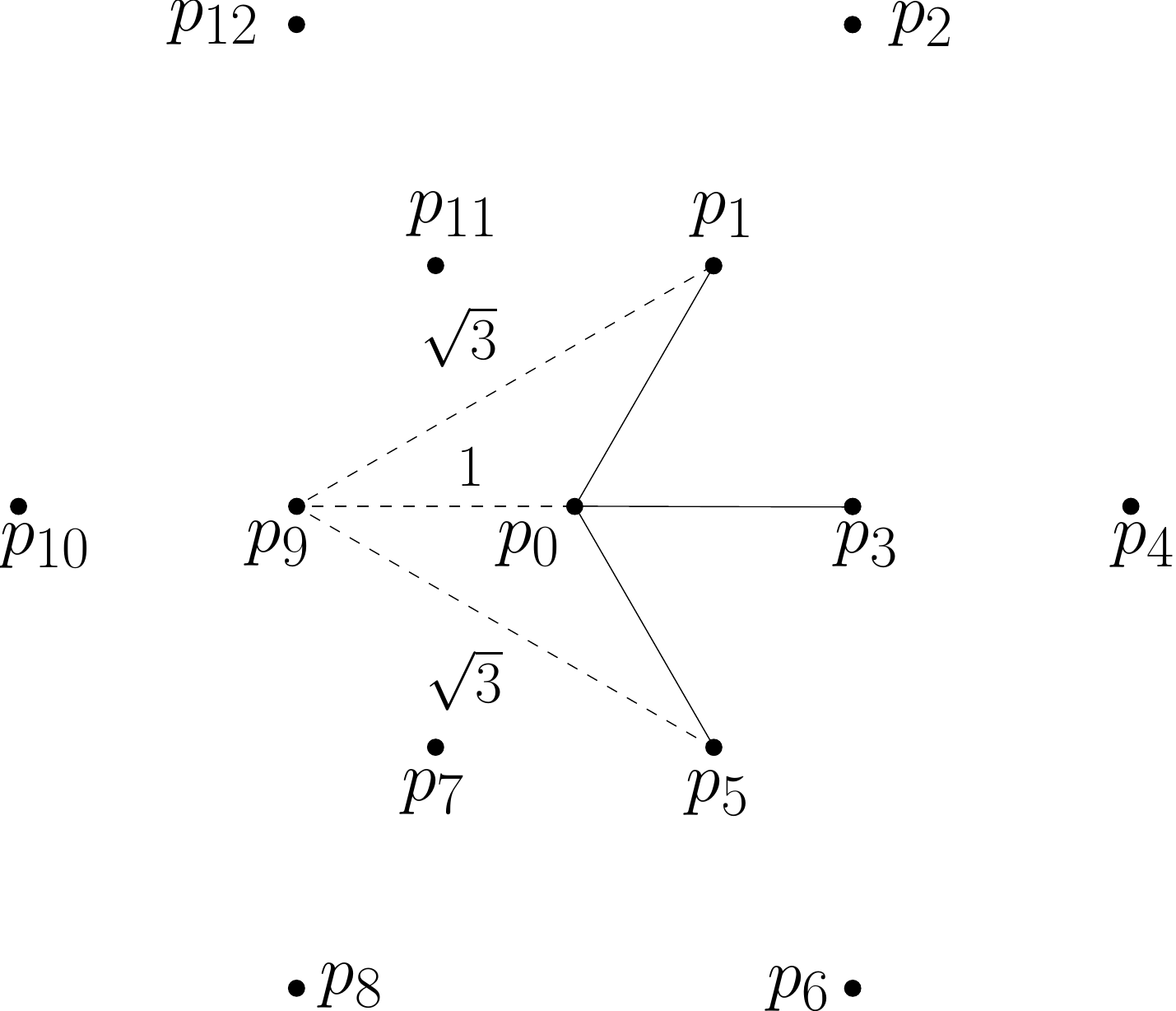}
\hspace{15mm}
\includegraphics[scale=0.35]{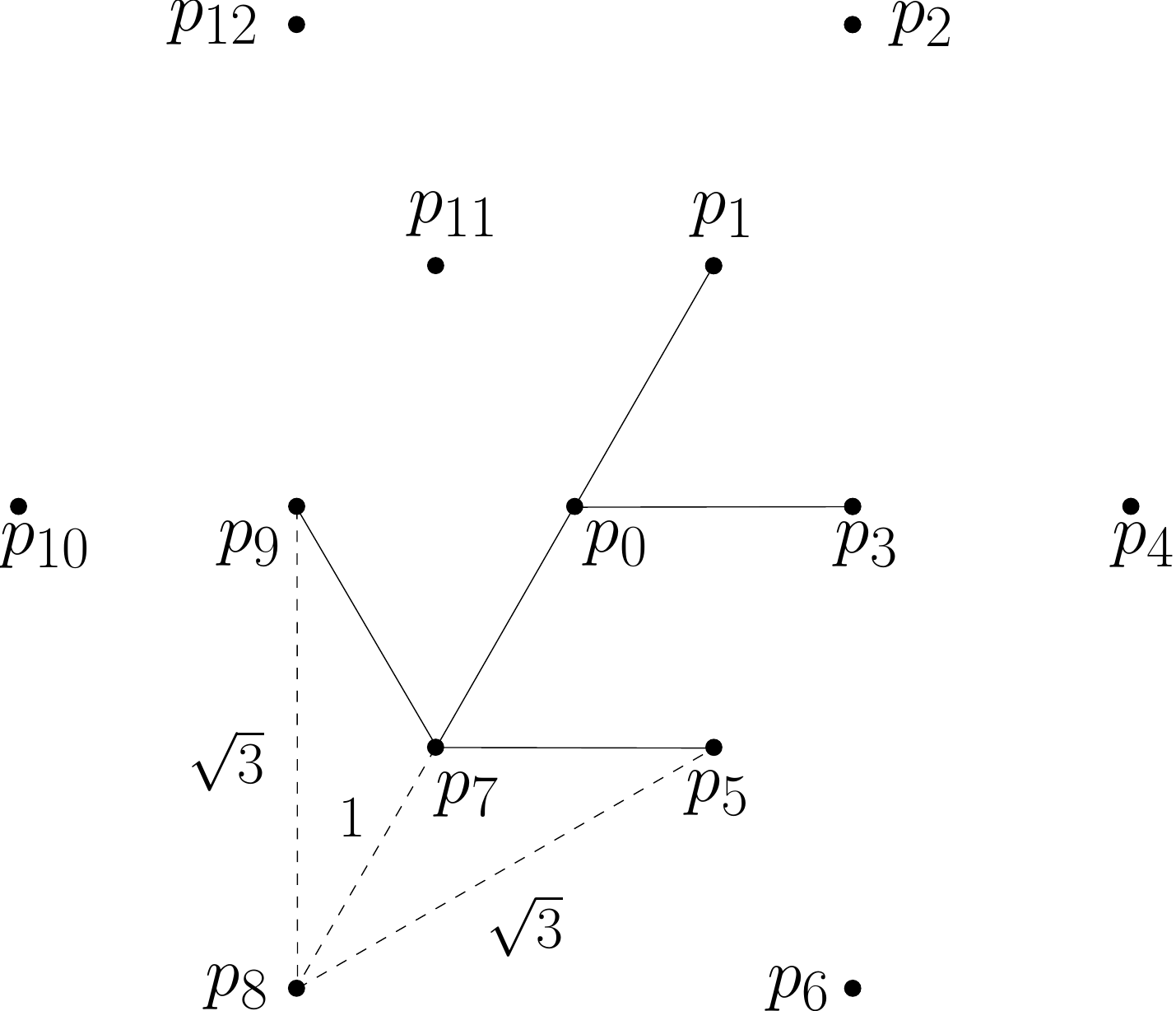}
	\caption{Left: the edges $p_0p_1,p_0p_3,p_0p_5$ are present.
          Right: the edges $p_0p_1,p_0p_3,p_0p_7$ are present.} 
\label{fig:P2}
\end{figure}
	
 If $p_0p_5$ is present (refer to Fig.~\ref{fig:P2}\,(left)) then, 
 $$\delta(p_0,p_9) \geq \frac{|\rho(0,i,9)|}{|p_0p_9|} \geq 1+\sqrt{3},
 \text{ where } i \in \{1,5\}.$$ 
	 
 Now assume that $p_0p_7$ is present (refer to Fig.~\ref{fig:P2}\,(right)).
 Observe that if $p_7p_9$ is absent then, 
 $$\delta(p_7,p_9) \geq \frac{|\rho(7,i,9)|}{|p_7p_9|} \geq 1+\sqrt{3},
 \text{ where } i \in \{8,10,11\}.$$
 Thus, assume that $p_7p_9$ is present. Similarly, assume that $p_5p_7$ is present,
 otherwise
 $$ \delta(p_7,p_5) \geq \frac{|\rho(7,i,5)|}{|p_7p_5|} \geq 1 + \sqrt{3},
 \text{ where } i \in \{3,5,6\}. $$
 Now, as $p_0p_7$, $p_5p_7$ and $p_7p_9$ are present, $\deg(p_7) = 3$. In this case, 
 $$ \delta(p_7,p_8) = \frac{|\rho(7,9,8)|}{|p_7p_8|} \geq 1 + \sqrt{3},
 \text { where } i \in \{5,9\}. $$  

 We have thus just shown that $\delta_0(P,3) \geq 1+\sqrt{3}$.
 For $n\geq 14$, we may assume that $p_0=(0,0)$, $p_3=(1,0)$, and let
 $p_i=(x+i,0)$ for $i=13,\ldots,n-1$, where $x \gg 1$ (\eg, setting $x=100$ suffices);
 finally, let $S = P \cup P'$, where  $P' = \{p_{13},\ldots,p_{n-1}\}$.
If $u ,v \in P \subset S$, then going from $u$ to $v$ via $P'$ is inefficient,
so as shown earlier in this proof, $\delta(u,v) \geq 1+\sqrt{3}$.
Thus, $\delta_0(S,3) \geq 1+\sqrt{3}$, as required.
Moreover, this lower bound is tight for both $P$ and $S$;
see Fig.~\ref{fig:P1}\,(right).
\end{proof}

\paragraph{Remark.} If $\Lambda$ is the infinite hexagonal lattice, it is shown in~\cite{DG16}
that $\delta_0(\Lambda,3) = 1+\sqrt{3}$.

\medskip

We now continue with degree $4$ dilation. We first exhibit a point set
$P$ of $n=6$ points with degree~$4$ dilation $1 + \sqrt{(5-\sqrt{5})/2}$,
and then extend it so to achieve the same lower bound for any larger $n$.
Consider the $6$-element point set $P=\{p_0,\ldots,p_5\}$, where $p_1,\ldots,p_5$
are the vertices of a regular pentagon centered at $p_0$.
\begin{theorem}\label{thm:deg4}
	For every $n \geq6$, there exists a set $S$ of $n$ points such that
        $$\delta_0(S,4) \geq 1 + \sqrt{(5-\sqrt{5})/2} = 2.1755\ldots $$
        The inequality is tight for the presented sets.
\end{theorem}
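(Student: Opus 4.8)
The plan is to mirror the structure of the degree-3 argument in Theorem~\ref{thm:deg3}, exploiting the pentagon's symmetry to force a long detour between some pair of points. The key geometric fact is that $p_0$ sits at the center of a regular pentagon with vertices $p_1,\ldots,p_5$, so by the same ``no edge contains a point in its interior'' convention the only candidate edges at $p_0$ are the five spokes $p_0p_i$. The target value $1+\sqrt{(5-\sqrt5)/2}$ should arise as a ratio $|\rho(i,j,k)|/|p_ip_k|$ for a specific two-edge detour, so first I would fix coordinates (say unit circumradius) and verify that the relevant quantity equals $1+\sqrt{(5-\sqrt5)/2}$. Concretely, if the pentagon has unit radius, a spoke has length $1$ and a side has length $2\sin(\pi/5)=\sqrt{(5-\sqrt5)/2}$, so a detour consisting of one spoke followed by one side, used to connect a pair whose direct distance is $1$, would give exactly $1+\sqrt{(5-\sqrt5)/2}$. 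I would therefore aim to identify a primary pair at Euclidean distance $1$ (e.g.\ $p_0$ and one of the outer vertices, or two vertices forced onto such a configuration) for which every admissible shortest path must begin with a spoke and continue along a pentagon edge.

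The main case analysis would proceed by the degree of $p_0$. Since $\deg(p_0)\le 4$ but there are five spokes, at least one spoke $p_0p_i$ is \emph{absent}; by the fivefold rotational symmetry of the pentagon I may assume a specific spoke, say $p_0p_1$, is missing. I would then argue that the pair $p_0,p_1$ (at distance $1$) has no direct edge, so any path from $p_0$ to $p_1$ must leave $p_0$ along some other spoke $p_0p_j$ and reach $p_1$; the shortest such route uses an adjacent vertex and one pentagon side, giving $\delta(p_0,p_1)\ge (|p_0p_j|+|p_jp_1|)/|p_0p_1| = 1+\sqrt{(5-\sqrt5)/2}$ whenever $j\in\{2,5\}$ (the neighbors of $p_1$). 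The remaining subtlety is the case where the closest available spoke is $p_0p_3$ or $p_0p_4$; here I would check the competing two-edge paths through $p_2$ or $p_5$ and confirm that the minimum over the admissible detours still equals the target value, using the chord lengths $2\sin(2\pi/5)$ for the ``long'' pentagon diagonals. This is the step I expect to require the most care: one must enumerate which pentagon edges $p_ip_{i\pm1}$ are present and verify that in every configuration some pair is forced into a detour of ratio at least $1+\sqrt{(5-\sqrt5)/2}$, possibly by invoking a secondary pair as in the degree-3 proof.

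For the extension to arbitrary $n\ge 6$, I would reuse the gadget-padding technique from Theorem~\ref{thm:deg3}: place the extra points $p_6,\ldots,p_{n-1}$ far away along a ray (setting $p_i=(x+i,0)$ with $x\gg 1$) so that for any pair $u,v\in P$, routing through the distant points $P'$ is strictly longer than the internal detour, hence $\delta(u,v)\ge 1+\sqrt{(5-\sqrt5)/2}$ still holds and the degree bound is preserved. Finally I would exhibit a matching upper-bound construction---a concrete plane degree-$4$ spanner on $P$ (analogous to Fig.~\ref{fig:P}\,(right))---whose worst pair realizes exactly $1+\sqrt{(5-\sqrt5)/2}$, thereby establishing tightness. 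The principal obstacle throughout is the finite case check over spoke/edge presence patterns; everything else is bookkeeping and the standard padding argument already validated for degree~$3$.
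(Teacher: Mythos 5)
Your proposal is correct and takes essentially the same route as the paper: the degree cap $\deg(p_0)\le 4$ forces a missing spoke, WLOG $p_0p_1$, and any path from $p_0$ to $p_1$ through an intermediate vertex $p_j$ has length at least $|p_0p_j|+|p_jp_1| \ge 1+\sqrt{(5-\sqrt{5})/2}$ (attained for $j\in\{2,5\}$), with the far-away padding points and the explicit degree-$4$ graph on the pentagon handling $n>6$ and tightness exactly as in the paper. The only remark worth making is that the edge-presence case analysis (and possible secondary pairs) you anticipate is unnecessary: since the diagonal $2\sin(2\pi/5)$ exceeds the side $2\sin(\pi/5)$, the triangle-inequality bound through whichever vertex the path visits first already gives the stated minimum regardless of which edges are present.
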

\begin{proof}
  Assume that $p_1,\ldots,p_5$ lie on a circle of unit radius centered at $p_0$.
  Since $\deg(p_0) \leq 4$, there exists a point $p_i, 1\leq i\leq 5$ such that
  $p_0p_i$ is not present; we may assume that $i=1$; see Fig.~\ref{fig:pentagon}.
 \begin{figure}[hbtp]
 	\centering
 	\includegraphics[scale=0.45]{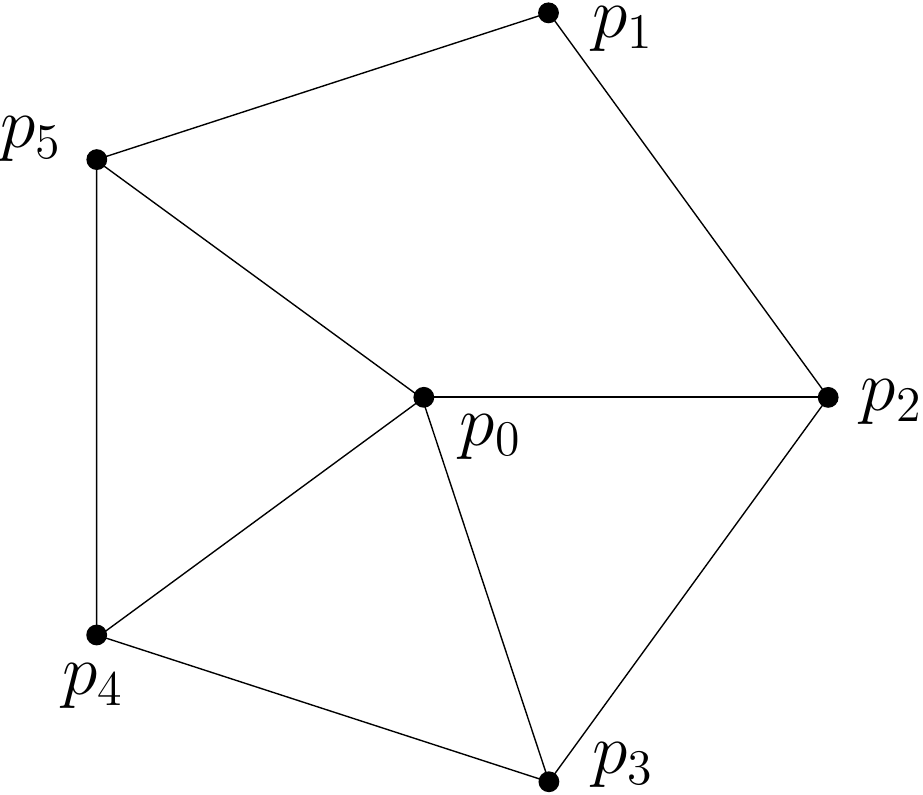}
 	\caption{A plane degree~$4$  geometric graph on the point set
          $\{p_0,\ldots,p_5\}$ that has stretch factor exactly $1 +
          \sqrt{(5-\sqrt{5})/2}$, which is achieved by the detour
          between the pair $p_0,p_1$. } 
 	\label{fig:pentagon}
 \end{figure}
Observe that
  $$ |p_0p_1| = 1 \text{ and }|p_1p_2| = |p_1p_5| =
  \sqrt{1^2+1^2-2\cdot 1 \cdot 1 \cos (2\pi/5)} = \sqrt{(5-\sqrt{5})/2}. $$
Now,
$$\delta(p_0,p_1) \geq \frac{|\rho(0,i,1)|}{|p_0p_1|} \geq 1+
\sqrt{(5-\sqrt{5})/2} = 2.1755\ldots, \text{ where } i\in \{2,5\}. $$ 
	
 Thus, $\delta_0(P,4) \geq 1 + \sqrt{(5-\sqrt{5})/2}$. As in the proof of Theorem~\ref{thm:deg3},
 the aforesaid six points can be used to obtain the same lower bound for any $n \geq 6$.
 
To see that the above lower bound is tight, 
consider the degree~$4$ geometric graph on $P$ in Fig.~\ref{fig:pentagon}
whose stretch factor is exactly that, due to the detour between $p_0,p_1$.
\end{proof}

\section{A lower bound on the dilation of the greedy triangulation}
\label{sec:greedy}

In this section, we present a lower bound on the worst case dilation of the greedy triangulation.
Place four points at the vertices of a unit square $U$, and two other points
in the exterior of $U$ on the vertical line through the center of $U$ and close to the 
lower and upper sides of $U$, as shown in Fig.~\ref{fig:greedy}\,(left).
For any small $\eps>0$, the points can be placed so that the resulting stretch factor
is at least $\delta(p_0,p_3) \geq 2-\eps$.
A modification of this idea gives a slightly better lower bound.

\begin{theorem} \label{thm:greedy}
For every $n\geq 6$, there exists a set $S$ of $n$ points such
that the stretch factor of the greedy triangulation of $S$ is at least $2.0268$. 
\end{theorem}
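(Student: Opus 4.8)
The plan is to construct an explicit point set whose greedy triangulation forces a long detour between some pair of points, then optimize the construction numerically to push the stretch factor up to $2.0268$. The preceding paragraph already sketches the core idea with the unit square $U$ plus two exterior points yielding $\delta(p_0,p_3)\geq 2-\eps$; my job is to refine this so the bound exceeds $2$. First I would pin down the greedy triangulation rule precisely: edges are inserted in nondecreasing length order, keeping the set noncrossing. The whole argument hinges on controlling \emph{which} edges the greedy procedure selects, so I would place the six points at carefully chosen coordinates and then verify, by comparing the relevant Euclidean lengths, exactly the order in which candidate edges are considered and accepted.

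\medskip

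Concretely, I would set up coordinates for the four square vertices and the two exterior points as suggested in Fig.~\ref{fig:greedy}, introducing one or two free parameters (for instance the horizontal offset of the exterior points and their vertical distance from the square). The key steps are: (1) enumerate all $\binom{6}{2}=15$ candidate edges and their lengths as functions of the parameters; (2) sort these lengths and simulate the greedy insertion, at each stage recording which shorter edges block a given longer edge by crossing it; (3) identify the resulting triangulation $T$ and locate the pair $s_i,s_j$ realizing the worst stretch factor — the geometry is arranged so that the short direct segment between two points is blocked, forcing the shortest path in $T$ to take a detour of length close to twice the direct distance. Step (3) produces $\delta(T)\ge \delta(s_i,s_j)$ as an explicit function of the parameters, which I would then maximize to obtain the value $2.0268$.

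\medskip

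The extension from $n=6$ to every $n\ge 6$ follows the same template used in the proofs of Theorems~\ref{thm:deg3} and~\ref{thm:deg4}: append the remaining $n-6$ points far away along a line (at distance $x\gg 1$), so that any greedy edge among the new points is much longer than the edges within the original gadget and hence inserted only after the gadget's triangulation is fixed; routing through these distant points is grossly inefficient, so the worst-case pair and its detour inside the gadget are unaffected, and $\delta\ge 2.0268$ persists.

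\medskip

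The main obstacle will be step (2): rigorously certifying that the greedy procedure produces exactly the intended triangulation. Unlike Delaunay or minimum-weight triangulations, the greedy rule is sensitive to the precise length ordering, so I must check every pairwise length comparison that matters — in particular that each blocking edge is genuinely shorter than the edge it excludes, and that no unexpected crossing alters the acceptance sequence. Because the optimal parameter values are irrational, these comparisons reduce to verifying a finite list of algebraic inequalities; I would confirm them symbolically where possible and otherwise by interval arithmetic to rule out floating-point artifacts. Once the triangulation is certified, the stretch-factor computation and its maximization are routine.
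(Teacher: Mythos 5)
Your overall template matches the paper's: a six-point gadget whose greedy triangulation blocks one pair and forces a detour, parameters optimized to reach $2.0268$, and the extension to all $n\geq 6$ by appending far-away collinear points exactly as in Theorems~\ref{thm:deg3} and~\ref{thm:deg4} (that last part of your proposal is correct and is all the paper does for the extension). But there is a genuine gap at the heart of your step (1): you keep the four points at the vertices of a \emph{square} and propose to optimize only over the positions of the two exterior points (horizontal offset, vertical distance). That family cannot beat $2$, let alone reach $2.0268$. With the square fixed, the blocked pair $p_0,p_3$ has $|p_0p_3|\approx 1$ and the best available detour goes through two adjacent corners, of length $\approx \frac12+1+\frac12=2$, so the ratio tends to $2$ only as the exterior points approach the sides; pulling them away strictly decreases the ratio (the detour grows like $2\sqrt{1/4+d^2}+1$ while the direct distance grows like $1+2d$), and offsetting them horizontally only shortens the corner path on one side. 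Trying instead to flatten the quadrilateral (width larger than height) backfires: the direct segment $p_0p_3$ then becomes \emph{shorter} than the horizontal sides that are supposed to block it, so greedy inserts $p_0p_3$ itself and the stretch factor collapses to $1$. So no choice of your stated parameters exceeds $2$.

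The missing idea in the paper's proof is to change the shape of the quadrilateral, not the exterior points: replace the square by a parallelogram $V$ with horizontal unit sides, unit height, and slant angle $\alpha\in(\pi/4,\pi/2)$. The shear creates \emph{two} competing detours for the pair $p_0,p_3$, one around a slanted side of length slightly more than $1+a$ with $a=1/\sin\alpha>1$, and one through the short diagonal of length slightly more than $2x+b$, where $b=\sqrt{1+\sin^2\alpha-2\sin\alpha\cos\alpha}/\sin\alpha$ and $x=(1-\cot\alpha)/2$; both exceed $2$ for suitable $\alpha$, and the greedy triangulation of the six points is unique, so no case analysis of insertion order beyond checking this is needed. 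Balancing the two candidates by maximizing $f(\alpha)=\min\bigl(1+\frac{1}{\sin\alpha},\,1-\cot\alpha+\frac{\sqrt{1+\sin^2\alpha-2\sin\alpha\cos\alpha}}{\sin\alpha}\bigr)$ at $\alpha=1.3416$ yields $\delta(p_0,p_3)\geq 2.0268$. Your certification machinery (sorting the $\binom{6}{2}$ lengths, verifying blockings symbolically) is sound methodology, but without the shear there is nothing for it to certify that exceeds $2$.
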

\begin{proof}
Replace the unit square by a parallelogram $V$ with two horizontal unit sides, unit height
and angle $\alpha \in (\pi/4,\pi/2)$ to be determined, as shown in Fig.~\ref{fig:greedy}\,(right).
Place four points at the vertices of $V$ and two other points  in the exterior of $V$
on the vertical line through the center of the $V$ and close to the
lower and upper side of $V$. First, observe that the greedy triangulation is unique
for this point set. Second, observe that there are two candidate detours
connecting $p_0$ with $p_3$: one of length (slightly longer than)
$1+a$ and one of length (slightly longer than) $2x+b$, where 
$a$ is the length of the slanted side of $V$, $b$ is the length of the short diagonal of $V$,
and $x$ is the horizontal distance between the upper left corner and the center of $V$.
\begin{figure}[hbtp]
	\centering
	\includegraphics[scale=0.6]{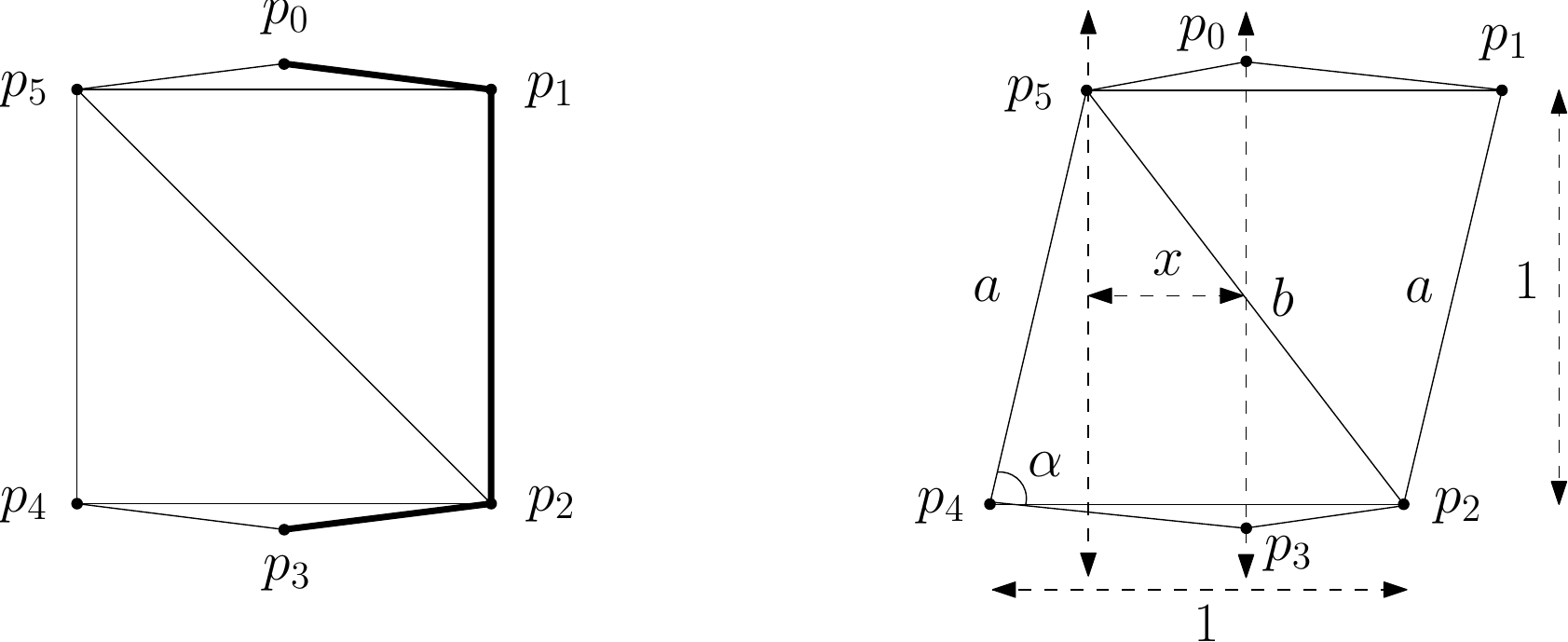}      
      \caption{Greedy triangulation of $6$ points with stretch factors
        $2-\eps$ (left) and $2.0268$ (right).}
	\label{fig:greedy}
\end{figure}

A straightforward calculation gives:
$$ a =\frac{1}{\sin \alpha}, \text{   }
b = \frac{\sqrt{1 + \sin^2 \alpha - 2 \sin \alpha \cos \alpha}}{\sin \alpha},  \text{ and }
x= \frac{1 - \cot \alpha}{2}. $$
$$\text{Let } f(\alpha) =  \min \left( 1 + \frac{1}{\sin \alpha},
1 - \cot \alpha + \frac{\sqrt{1 + \sin^2 \alpha - 2 \sin \alpha \cos \alpha}}{\sin \alpha} \right),
\text{ for } \alpha \in \left(\frac{\pi}{4},\frac{\pi}{2}\right). $$
    
Setting $\alpha= 1.3416$ (\ie, $\alpha= 76.87^\circ$) yields
$$ \delta(p_0,p_3) \geq \max_{\alpha \in (\pi/4,\pi/2)} f(\alpha) \geq f(1.3416) = 2.0268\ldots, $$
as required. As in the proofs of Theorems~\ref{thm:deg3} and~\ref{thm:deg4}, the lower bound
can be extended for every $n \geq 6$ in a straightforward way. 
\end{proof}

\section{Concluding remarks} \label{sec:remarks}

In Section~\ref{sec:lowerbound}, we have shown that any plane spanning graph of the vertices
of a regular $23$-gon requires a stretch factor of 
$(2\sin\frac{2\pi}{23}+ \sin\frac{8\pi}{23})/\sin\frac{11\pi}{23}=1.4308\ldots$
Henceforth, the question of Bose and Smid~\cite[Open Problem 1]{PBMS13} mentioned
in the Introduction can be restated:

\begin{question}
Does there exist a point set $S$ in the Euclidean plane such that
$\delta_0(S) > (2\sin \frac{2\pi}{23} + \sin \frac{8\pi}{23})/\sin \frac{11\pi}{23} = 1.4308\ldots$?
\end{question}

Next in Section~\ref{sec:deg3and4}, it has been shown that there exist point sets that require
degree~$3$ dilation $1+\sqrt{3}=2.7321\ldots$ (Theorem~\ref{thm:deg3})
and degree~$4$ dilation $1 + \sqrt{(5-\sqrt{5})/2}=2.1755\ldots$ (Theorem~\ref{thm:deg4}).
Perhaps these lower bounds can be improved.

\begin{question}
Does there exist a point set in the Euclidean plane that has degree~$3$ dilation
greater than $1+\sqrt{3}$?
Does there exist a point set in the Euclidean plane that has degree~$4$ dilation
greater than $1 + \sqrt{(5-\sqrt{5})/2}$? 
\end{question}

Finally in Section~\ref{sec:greedy}, we show that the stretch factor of the greedy triangulation
is at least $2.0268$, in the worst case.
Perhaps this lower bound is not far from the truth.
Using a computer program we have generated 1000 random uniformly
distributed $n$-element point sets in a unit square for every $n$ in the range
$4 \leq n \leq 250$, and computed the greedy triangulations and
corresponding stretch factors. The highest stretch factor among these was only $1.97$
(as attained for a 168-element point set), and so this suggests the following. 

\begin{question}
  Is the worst case stretch factor of the greedy triangulation attained by
  points in convex position?
\end{question}

Observe that the point set used in the lower bound construction in Theorem~\ref{thm:greedy}
is convex, so it is natural to ask: given a non-convex point set $S$ and a greedy triangulation
of $S$ having stretch factor $\Delta$, does there always exist a convex subset
$S' \subset S$ such that the stretch factor of a greedy triangulation for
$S'$ also equals $\Delta$? The point set $S=\{p_1,\ldots,p_6\}$
illustrated in Fig.~\ref{fig:nonconvex} shows that this is not the case.
It is routine to verify that the stretch factor of the greedy triangulation 
of each convex subset $S' \subset S$ is at most $1.4753\ldots<\Delta = 1.4772\ldots$
\begin{figure}[hbtp]
\centering
\includegraphics[scale=0.39]{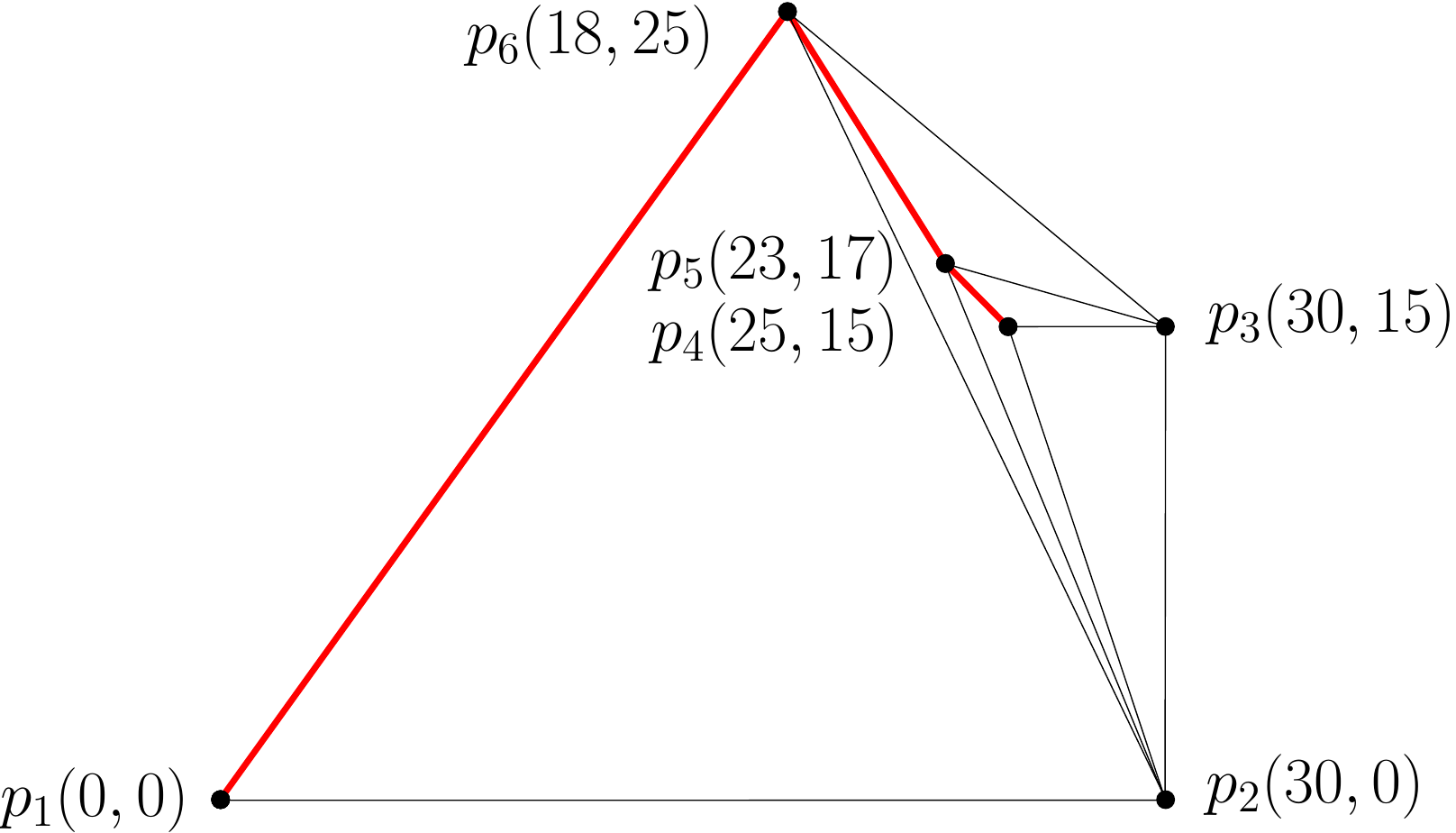}
\hspace{0.2in}
\includegraphics[scale=0.39]{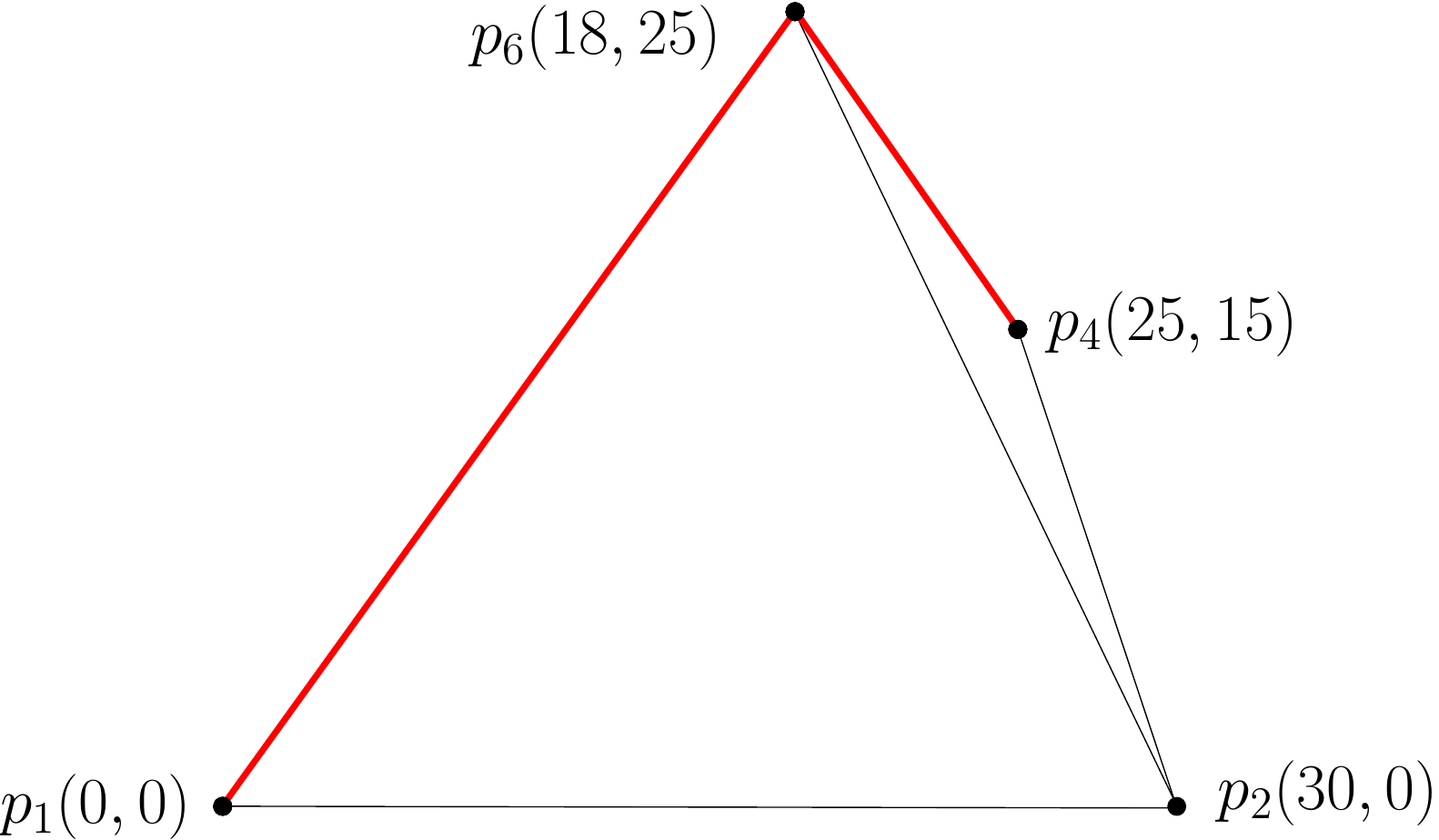}
\caption{Left: greedy triangulation of a set of $6$ points not in convex position with
	stretch factor $\Delta= 1.4772\ldots$ attained by the pair $\{p_1,p_4\}$.
	Right: the largest stretch factor of the greedy triangulation of a convex subset is that
	for the subset $S'=\{p_1,p_2,p_4,p_6\}$; it is attained by the same pair $\{p_1,p_4\}$
	and equals $1.4753\ldots < \Delta$.
	The corresponding shortest paths are drawn in red color.} 
\label{fig:nonconvex}
\end{figure}

\paragraph{Acknowledgment.} We convey our special thanks to an anonymous reviewer for
suggesting an elegant simplification of the case analysis in the proof of Theorem~\ref{thm-s23}.
We express our satisfaction with the software package
\emph{JSXGraph, Dynamic Mathematics with JavaScript} and the OpenMP API,
used in our experiments.

\section*{Appendix}

\paragraph{Source code.} The following parallel C\texttt{++} code is written using
OpenMP in C\texttt{++11} (notice the \texttt{pragma} directives present in the
code). For the set of $N=23$ points, the program ran for approximately
$2$ days on a computer with quad core processor. The
program was compiled with \texttt{g++ 4.9.2}. Please note that older
versions of \texttt{g++} might have issues with OpenMP
support. Following is a correct way of compiling the program. 

\begin{center}
\begin{tcolorbox}[width=10cm,height=10mm]
	\begin{center} 
		\texttt{ { g++ program.cpp -std=c++11 -fopenmp -O3}} \end{center}
\end{tcolorbox}
\end{center}
The number of threads has been set to 4 using the variable
\texttt{numberOfThreads} in  \texttt{main()}. The user may
alter the value of the variable depending on the processor.  

\medskip

Following is the output from the program.

\begin{center}
\begin{tcolorbox}[width=10cm,height=24mm]
\texttt{Execution started...\\
Triangulations checked: 24466267020\\
Dilation: 1.4308143191\\
Time taken: 162829 seconds\\}
\end{tcolorbox}
\end{center}

\lstinputlisting{programCPP_OpenMP3v1.cpp}

\end{document}